\theoremstyle{theorem}
\newtheorem{claim}[theorem]{Claim}
\title{Large-Scale Distributed Algorithms for Facility Location with Outliers}
\author{Tanmay Inamdar}{Department of Computer Science, The University of Iowa, Iowa, USA}{tanmay-inamdar@uiowa.edu}{}{}
\author{Shreyas Pai}{Department of Computer Science, The University of Iowa, Iowa, USA}{shreyas-pai@uiowa.edu}{}{}
\author{Sriram V.~Pemmaraju}{Department of Computer Science, The University of Iowa, Iowa, USA}{sriram-pemmaraju@uiowa.edu}{}{}
\authorrunning{T.~Inamdar, S.~Pai, S.~V.~Pemmaraju}
\subjclass{\ccsdesc[500]{Theory of computation~Facility location and clustering},
\ccsdesc[500]{Theory of computation~Distributed algorithms},
\ccsdesc[500]{Theory of computation~MapReduce algorithms},
\ccsdesc[300]{Theory of computation~Graph algorithms analysis}}
\keywords{Distributed Algorithms,
Clustering with Outliers,
Metric Facility Location,
Massively Parallel Computation,
k-machine model,
Congested Clique
}
\newcommand{\q}{\mathbf{q}}
\renewcommand{\i}{\iota}
\renewcommand{\ij}{\i j}
\renewcommand{\cdots}{\ldots}
\newcommand{\lr}[1]{\left(#1\right)}
\newcommand{\LR}[1]{\left\{#1\right\}}
\newcommand{\lra}{Lenzen's routing protocol}
\renewcommand{\ll}{\log\log n}
\renewcommand{\lll}{\log\log\log n}
\DeclareMathOperator{\poly}{poly}
\DeclareMathAlphabet{\mathpzc}{OT1}{pzc}{m}{it}
\newcommand{\cost}{\mathsf{cost}}
\renewcommand{\epsilon}{\varepsilon}
\newcommand{\LP}{\textsf{LP}\xspace}
\newcommand{\kmm}{\(k\)-machine model\xspace}
\renewcommand{\r}{\mbox{rank}}
\date{}
\begin{document}
\maketitle

\begin{abstract}
This paper presents fast, distributed, $O(1)$-approximation algorithms for
metric facility location problems with outliers in the Congested Clique model,
Massively Parallel Computation (MPC) model, and in the $k$-machine model.
The paper considers \textit{Robust Facility Location} and \textit{Facility
Location with Penalties}, two versions of the facility location problem
with outliers proposed by Charikar et al. (SODA 2001). The paper also considers
two alternatives for specifying the input: the input metric can be provided
explicitly (as an $n \times n$ matrix distributed among the machines) or
implicitly as the shortest path metric of a given edge-weighted graph. The
results in the paper are:

\begin{itemize}
\item \textbf{Implicit metric}: For both problems, $O(1)$-approximation
algorithms running in $O(\mbox{poly}(\log n))$ rounds in the Congested Clique
and the MPC model and $O(1)$-approximation algorithms running in
$\tilde{O}(n/k)$ rounds in the $k$-machine model.

\item \textbf{Explicit metric}: For both problems, $O(1)$-approximation
algorithms running in $O(\log\log\log n)$ rounds in the Congested Clique and
the MPC model and $O(1)$-approximation algorithms running in
$\tilde{O}(n/k)$ rounds in the $k$-machine model.
\end{itemize}

Our main contribution is to show the existence of Mettu-Plaxton-style
$O(1)$-approximation algorithms for both Facility Location with outlier
problems. As shown in our previous work (Berns et al., ICALP 2012,
Bandyapadhyay et al., ICDCN 2018) Mettu-Plaxton style algorithms are more
easily amenable to being implemented efficiently in distributed and
large-scale models of computation.

\end{abstract}

\ifthenelse{\boolean{short}}{}{
\newpage
\tableofcontents
\newpage
}
\section{Introduction}

\textit{Metric Facility Location} (in short, \textsc{FacLoc}) is a well-known combinatorial optimization problem
used to model clustering problems. The input to the problem is a set $F$ of \textit{facilities},
an \textit{opening cost} $f_i \ge 0$ for each facility $i \in F$, 
a set $C$ of \textit{clients}, and a metric space $(F \cup C, d)$ of \textit{connection costs}, where $d(i, j)$ denotes the
cost of client $j$ connecting to facility $i$.
The objective is to find a subset $F' \subseteq F$ of facilities to open so that the total cost of 
opening the facilities plus the cost of connecting all clients to open facilities is
minimized. In other words, the quantity $cost(F') := \sum_{i \in F'} f_i + \sum_{j \in C} d(j, F')$ is minimized, 
where $d(j, F')$ denotes $\min_{i \in F'} d(i, j)$. \textsc{FacLoc} is NP-complete, but
researchers have devised a number of approximation algorithms for the problem. For any
$\alpha \ge 1$, an \textit{$\alpha$-approximation algorithm} for \textsc{FacLoc} finds in polynomial
time, a subset $F' \subseteq F$ of facilities such that $cost(F') \le \alpha \cdot cost(F^*)$,
where $F^*$ is an optimal solution to the given instance of \textsc{FacLoc}.
There are several well-known $O(1)$-factor approximation algorithms for \textsc{FacLoc} including the primal-dual algorithm of Jain and Vazirani \cite{JainVaziraniJACM2001} and the greedy
algorithm of Mettu and Plaxton \cite{MettuPlaxtonSICOMP2003}. The best approximation factor
currently achieved by an algorithm for FacLoc is 1.488 \cite{LiICALP2011}.
More recently, motivated by the need to solve \textsc{FacLoc} and other clustering problems on extremely large inputs,
researchers have proposed distributed and parallel approximation algorithms for these problems.
See for example \cite{EneIMKDD2011,GarimellaDGSCIKM2015} for clustering algorithms in systems such as MapReduce \cite{DeanGhemawatCACM2010}
and Pregel \cite{MalewiczABDHLCSIGMOD2010} and \cite{BandyapadhyayArxiv2017} for clustering algorithms in the \textit{$k$-machine model}. 
Clustering algorithms \cite{SilvaFBHCCompSurv2013} have also been designed for streaming models of computation 
\cite{AlonMSSTOC1996}. 

Outliers can pose a problem for many statistical methods. For clustering problems, a few outliers can have an outsized influence
on the optimal solution, forcing the opening of costly extra facilities or leading to poorer service to many clients.
Versions of \textsc{FacLoc} that are robust to outliers have been proposed by Charikar et al.~\cite{CharikarSODA2001}, where the authors also
present $O(1)$-approximation algorithms for these problems. Specifically, Charikar et al.~\cite{CharikarSODA2001} propose
two versions of \textsc{FacLoc} that are robust to outliers:
\begin{description}
\item[Robust \textsc{FacLoc}:] In addition to $F$, $C$, opening costs $\{f_i | i \in F\}$, and metric $d$, 
we are also given an integer $0 \le p \le |C|$, that denotes the \textit{coverage requirement}. The objective is to find a solution $(C', F')$, where 
$F' \subseteq F$, $C' \subseteq C$, with $|C'| \ge p$, and
$$\cost(C', F') := \sum_{i \in F'} f_i + \sum_{j \in C'} d(j, F')$$
is minimized over all $(F', C')$, where $|C'| \ge p$.
									
\item[\textsc{FacLoc} with Penalties:] In addition to $F$, $C$, opening costs $\{f_i | i \in F\}$, and metric $d$, we are also given 
penalties $p_j \ge 0$ for each client $j \in C$. The objective is to find a solution $(C', F')$, where $F' \subseteq F$, and $C' \subseteq C$, 
such that, 
$$\cost(C', F') := \sum_{i \in F'} f_i +  \sum_{j \in C'} d(j, F') + \sum_{j \in C \setminus C'} p_j$$
is minimized over all $(C', F')$.
\end{description}

In this paper we present distributed $O(1)$-approximation algorithms for Robust \textsc{FacLoc} and \textsc{FacLoc} with Penalties
in several models of large-scale distributed computation. As far as we know, these are the first distributed algorithms for 
versions of \textsc{FacLoc} that are robust to outliers.
In distributed settings, the complexity of the problem can be quite sensitive to the manner in which input is specified.
We consider two alternate ways of specifying the input to the problem.
\begin{description}
\item[Explicit metric:] The metric $d$ is specified \textit{explicitly} as a $|F| \times |C|$ matrix distributed
among the machines of the underlying communication network.
This explicit description of the metric assumes that the $|F| \times |C|$ matrix fits
in the total memory of all machines combined.

\item[Implicit metric:] In this version, the metric is specified \textit{implicitly} -- as the shortest path metric of a given edge-weighted
	graph whose vertex set is $C \cup F$; we call this the \textit{metric graph}.
The reason for considering this alternate specification of the metric is that it can be quite compact; the graph specifying the metric can
be quite sparse (e.g., having $O(|F|+|C|)$ edges).
Thus, in settings where $|F| \cdot |C|$ is excessively large, but $|F| + |C|$ is not, this is a viable option.
\end{description}
For the facility location problems considered in this paper, when the input metric is explicitly specified, the biggest
challenge is solving the \textit{maximal independent set (MIS)} problem 
efficiently.
When the input metric is implicitly specified, the biggest challenge is to efficiently learn just enough of the metric
space.
Thus, changing the input specification changes the main challenge in a 
fundamental way and consequently we obtain very different results for the 
two alternate input specifications.

Our algorithms run in 3 models of distributed computation, which we now describe specifically in the context of facility location problems. 
All three models are synchronous message passing model.

\begin{description}
\item[Congested Clique model:] The Congested Clique model was introduced by Lotker et al.~\cite{lotker2005mstJournal} and 
then extensively studied in recent years \cite{GhaffariPODC17,GhaffariGKMRPODC18,GehweilerSPAA2006,Censor15podc,HolzerP14,JurdzinskiNSODA18,Nanongkai14,drucker2012task,DolevLP12,Patt-ShamirT11,Lenzen13}. 
In this model, the underlying communication network is a clique and the number of nodes in this clique equals $|F|+|C|$.
In each round, each node performs local computation based on the information it holds and then sends a (possibly distinct) $O(\log n)$-size 
message to each of the remaining nodes.
Initially, each node hosts a facility or a client and the node hosting facility $i$ knows the opening cost $f_i$ and the node hosting client $j$ knows the penalty $p_j$ for the \textsc{FacLoc} with penalties problem.
In the explicit metric setting, the node hosting facility $i$ knows all the connection costs $d(i, j)$ to all clients $j \in C$. 
Similarly, the node hosting client $j$ knows all connection costs $d(i, j)$ to all facilities $i \in F$.
In the implicit metric setting, the node hosting a facility or client knows the edges of the metric graph incident on
that facility or client. We call this input distribution \textit{vertex-centric} because each node is responsible for the local input of a facility or client. The vertex-centric assumption can be made without loss of generality because an adversarially (but evenly) distributed input can be redistributed in a vertex-centric manner among the nodes in constant rounds using Lenzen's routing protocol \cite{Lenzen13}.

\item[Massively Parallel Computation (MPC) model:] The MPC model was introduced in \cite{KarloffSVSODA2010} and variants of this model were considered in 
\cite{GoodrichSZISAAC11,BeameKSPODS13,AndoniNOYSTOC14}.
It can be viewed as a clean abstraction of the MapReduce model.
We are given $k$ machines, each with $S$ words of space and the input is distributed in a vertex-centric fashion among the machines, the only difference being that machines can host multiple facilities and clients (provided they fit in memory). Let $I$ be the total input size.
Typically, we require $k$ and $S$ to each be sublinear in $I$, that is $O(I^{1-\epsilon})$ for some $\epsilon > 0$.
We also require that the total memory not be too much larger than needed for the input, i.e., $k \times S = O(I)$.
In each round, each machine sends and receives a total of $O(S)$ words of information because it is the volume of information that will
fit into its memory. In our work we consider MPC algorithms with memory $S = \tilde{O}(n)$ \footnote{Throughout the paper, we use $\tilde{O}(f(n))$ as a shorthand for $O(f(n) \cdot \mbox{poly}(\log n))$ and $\tilde{\Omega}(f(n))$ as a shorthand for $\Omega(f(n)/\mbox{poly}(\log n))$.} where $n = |F| = |C|$. In the explicit metric setting, since \(I = O(n^2)\), even if we assume $S = \tilde{O}(n)$, \(k\) and \(S\) are still strictly sublinear in \(I\). But in the implicit metric setting, if we assume $S = \tilde{O}(n)$ then the memory may not be strictly sublinear in the input size when the input graph is sparse, having \(O(n)\) edges for example. Therefore, our algorithms are not strictly MPC algorithms when the input is sparse. Similar to the Congested Clique model, we can assume that the input is distributed in a vertex-centric manner without loss of generality, due to the nature of communication in each round and the fact that $S = \Omega(n)$.

\item[$k$-machine model:] The $k$-machine model was introduced in \cite{KlauckNPRSODA15} and further studied in \cite{PanduranganRSSPAA18}. 
This model abstracts essential features of systems such as Pregel 
\cite{MalewiczABDHLCSIGMOD2010} 
and Giraph (see \verb+http://giraph.apache.org/+) that have been designed
for large-scale graph processing.
We are given $k$ machines and the input is distributed among the machines. In \cite{KlauckNPRSODA15}, the $k$-machine model is used to solve graph problems
and they assume a \textit{random vertex partition} distribution of the input graph among the $k$ machines. In other words, each vertex along
with its incident edges is provided to one of the $k$ machines chosen uniformly at random. 
The corresponding assumption for facility location problems would be that each facility and each client is assigned uniformly at random
to one of the $k$ machines.
Facility $i \in F$ comes with its opening cost $f_i$ and client $j \in C$ comes with its penalty $p_j$ for the \textsc{FacLoc} with
penalties problem.
In the explicit metric setting, each facility $i \in F$ comes with connections costs $d(i, j)$ for all $j \in C$ whereas
in the implicit metric setting facility $i$ comes along with the edges of the metric graph incident on it.
Similarly for each client $j \in C$.
In each round, each machine can send a (possibly distinct) size-$B$ message to each of the remaining $k-1$ machines. Typically, $B$ is assumed to
be $\text{poly}(\log n)$ bits \cite{KlauckNPRSODA15}.
\end{description}

The Congested Clique model does not directly model settings of large-scale computation because in this model
the number of nodes in the underlying communication network equals the number of vertices in the input graph.
However, fast Congested Clique algorithms can usually be translated (sometimes automatically) to fast MPC 
and $k$-machine algorithms. So the Congested Clique algorithms in this paper 
are important stepping stones
towards more complex MPC and $k$-machine algorithms \cite{HegemanPTCS15, KlauckNPRSODA15}.
The MPC model and the $k$-machine model are quite similar. 
Even though the $k$-machine model is specified with a per-edge bandwidth constraint of $B$ bits, it can be equivalently described
with a per-machine bandwidth constraint of $k \cdot B$ bits that can be sent and received in each round. 
Thus setting $k \cdot B = S$ makes the $k$-machine model and MPC model equivalent in their bandwidth constraint.
Despite their similarities, it is useful to think about both models due to differences in how they are parameterized
and how these parameters affect the running times of algorithms in these models.
For example, in the MPC model, usually one starts by picking $S$ as a sublinear function of the input size $n$.
This leads to the number of machines being fixed and the running time of the algorithm is expressed as a function of $n$.
In the $k$-machine model $B$ is usually fixed at $\text{poly}(\log n)$ and the running time of the algorithm is
expressed as a function of $n$ and $k$. 
This helps us understand how the running time changes as we increase $k$. For example,
algorithms with running times of the form $O(n/k)$ exhibit a linear speedup as $k$ increases, whereas algorithms with running time 
of the form $O(n/k^2)$ indicating a quadratic speedup \cite{PanduranganRSSPAA2016}.

\subsection{Main Results}

In order to obtain $O(1)$-approximation algorithms for Robust \textsc{FacLoc} and \textsc{FacLoc} with Penalties,
Charikar et al.~\cite{CharikarSODA2001} propose modifications to the primal-dual approximation algorithm for \textsc{FacLoc} due to 
Jain and Vazirani \cite{JainVaziraniJACM2001}. 
The problem with using this approach for our purposes is that it seems difficult obtain fast \textit{distributed}
algorithms using the Jain-Vazirani approach. For example, obtaining a 
\textit{sublogarithmic} round $O(1)$-approximation for \textsc{FacLoc} in the 
Congested Clique model using this approach seems difficult.
However, as established in our previous work \cite{HegemanPDC15,BandyapadhyayArxiv2017,Berns2012} and in \cite{GarimellaDGSCIKM2015} the greedy algorithm of Mettu 
and Plaxton \cite{MettuPlaxtonSICOMP2003} for \textsc{FacLoc} seems naturally suited for fast distributed implementation.

The first contribution of this paper is to show that $O(1)$-approximation
algorithms to Robust \textsc{FacLoc} and \textsc{FacLoc} with Penalties
can \textit{also} be obtained by using variants of the Mettu-Plaxton greedy
algorithm.
Our second contribution is to show that by combining ideas from earlier
work \cite{HegemanPDC15,BandyapadhyayArxiv2017} with some new ideas, we can efficiently implement 
distributed versions of the variants of the Mettu-Plaxton algorithm 
for Robust \textsc{FacLoc} and \textsc{FacLoc} with Penalties.
The specific results we obtain for the two versions of input specification are as follows.
For simplicity of exposition, we assume $|C| = |F| = n$.

\begin{itemize}
\item \textbf{Implicit metric}: For both problems, we present $O(1)$-approximation
algorithms running in $O(\mbox{poly}(\log n))$ rounds in the Congested Clique
and the MPC model.
Assuming the metric graph has $m$ edges, the input size is $\Theta(m+n)$
and we use $\tilde{O}(m/n)$ machines each with memory $\tilde{O}(n)$.
In the $k$-machine model, we present $O(1)$-approximation algorithms running in
$\tilde{O}(n/k)$ rounds.

\item \textbf{Explicit metric}: For both problems, we present extremely fast 
$O(1)$-approximation algorithms, running in $O(\log\log\log n)$ rounds, 
in the Congested Clique and the MPC model. The
input size is $\Theta(n^2)$ and we use $n$ machines each with memory $\tilde{O}(n)$
in the MPC model.
In the $k$-machine model, we present $O(1)$-approximation algorithms running in $\tilde{O}(n/k)$ rounds.
\end{itemize}

\ifthenelse{\boolean{short}}{
  Due to space constraints we only describe our distributed implementations for the Robust \textsc{FacLoc} algorithm and omit most of the technical proofs. The full version with all the technical details appears in \cite{InamdarPPArxiv2018}.
}{} \section{Sequential Algorithms for Facility Location with Outliers}
We first describe the greedy sequential algorithm of Mettu and Plaxton \cite{MettuPlaxtonSICOMP2003} (Algorithm \ref{alg:MP}) for the Metric \textsc{FacLoc} problem which will serve as a building block for our algorithms for Robust \textsc{FacLoc} and the \textsc{FacLoc} with Penalties discussed in this section.
The algorithm first computes a ``radius'' $r_i$ for each facility $i \in F$ and it then greedily picks facilities to open in non-decreasing order
of radii provided no previously opened facility is too close.
The ``radius'' of a facility $i$ is the amount that each client is charged for the opening of facility $i$.
Clients pay towards this charge after paying towards the cost of connecting to facility $i$; clients that have a large connection cost to $i$
pay nothing towards this charge.
\RestyleAlgo{boxruled}
\begin{algorithm2e}\caption{\textsc{FacilityLocationMP}\((F, C)\)}\label{alg:MP}
	\tcc{Radius Computation Phase:}
	For each $i \in F$, compute $r_i \ge 0$, satisfying $f_i = \sum_{j \in C} \max\{0, r_i - c_{ij}\}$. \\
	\tcc{Greedy Phase:}
	Sort and renumber facilities in the non-decreasing order of $r_i$. \\
	$F' \gets \emptyset$ \Comment{Solution set}\\
	\For{$i = 1, 2, \ldots$} { 
		\If{there is no facility in $F'$ within distance $2r_i$ from $i$} {
			$F' \gets F' \cup \{i\}$
		}
	}
	Connect each client $j$ to its closest facility in $F'$.
\end{algorithm2e}
It is shown in \cite{MettuPlaxtonSICOMP2003} using a charging argument that Algorithm \ref{alg:MP} is 3-approximation for the Metric \textsc{FacLoc} problem. Later on, \cite{ArcherRSESA2003} gave a primal-dual analysis, showing the same approximation guarantee, by comparing the cost of the solution to a dual feasible solution. We use the latter analysis approach as it can be easily modified to work for the algorithms with outliers. 

\ifthenelse{\boolean{short}}{}{
  For a facility $i \in F$ and a client $j \in C$, we use the shorthand $c_{ij} \coloneqq d(i, j)$. Also, for a facility $i \in F$ and a radius $r \ge 0$, let $B(i, r)$ denote the set of clients within the distance $r$, i.e., $B(i, r) \coloneqq \{j \in C\mid c_{ij} \le r \}$.
}

\subsection{Robust Facility Location}
\label{sec:robust-fl-seq}
Since we use the primal dual analysis of \cite{ArcherRSESA2003} to get a bounded approximation factor, we need to address the fact 
that the standard linear programming relaxation for Robust \textsc{FacLoc} has unbounded integrality gap. 
To fix this we modify the instance in a similar manner to \cite{CharikarSODA2001}.
Let $(C^*, F^*)$ be a fixed optimal solution, and let $i_* \in F$ be a facility in that solution with the maximum opening cost $f_{i_*}$. 
We begin by assuming that we are given a facility, say $i_e$ with opening cost $f_{i_e}$, such that, $f_{i_*} \le f_{i_e} \le \alpha f_{i_*}$, 
where $\alpha \ge 1$ is a constant. 
Now, we modify the original instance by changing the opening costs of the facilities as follows. 

$$f'_i = \begin{cases}
+\infty &\text{ if $f_i > f_{i_e}$}
\\0 &\text{ if $i = i_e$}
\\f_i &\text{ otherwise}
\end{cases}$$

Note that we can remove the facilities with opening cost $+\infty$ without affecting the cost of an optimal solution, and hence we assume that w.l.o.g. all the modified opening costs $f'_i$ are finite.

Let $(C^*_e, F^*_e)$ be an optimal solution for this modified instance, and let $\cost_e(C^*_e, F^*_e)$ be its cost using the modified opening costs. Observe that without loss of generality, we can assume that $i_e \in F^*_e$, since its opening cost $f'_{i_e}$ equals $0$.
We obtain the following lemma and its simple corollary.

\begin{lemma} \label{lem:preproc}
	$\cost_e(C^*_e, F^*_e)  \le \cost(C^*, F^*)$.
\end{lemma}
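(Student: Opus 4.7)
The plan is to show that the original optimal solution $(C^*, F^*)$ remains feasible for the modified instance, and that evaluating it under the modified opening costs can only decrease its cost. Optimality of $(C^*_e, F^*_e)$ for the modified instance then yields the lemma.

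First I would verify feasibility of $(C^*, F^*)$ for the modified instance. Because $i_* \in F^*$ is the facility of maximum opening cost in $F^*$, every facility $i \in F^*$ satisfies $f_i \le f_{i_*} \le f_{i_e}$. Hence no facility in $F^*$ has $f'_i = +\infty$, so $(C^*, F^*)$ is a valid solution to the modified instance. The coverage constraint $|C^*| \ge p$ is unaffected by the modification of opening costs, and connection costs are unchanged.

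Next I would compare costs term by term. For every $i \in F^* \setminus \{i_e\}$, we have $f'_i = f_i$; for $i = i_e$ (if it happens to lie in $F^*$), we have $f'_{i_e} = 0 \le f_{i_e}$. Therefore
\[
\sum_{i \in F^*} f'_i \;\le\; \sum_{i \in F^*} f_i,
\]
and the client-connection and outlier terms agree verbatim. Combining, $\cost_e(C^*, F^*) \le \cost(C^*, F^*)$. Since $(C^*_e, F^*_e)$ is an optimal solution to the modified instance, $\cost_e(C^*_e, F^*_e) \le \cost_e(C^*, F^*) \le \cost(C^*, F^*)$, which is the desired inequality.

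There is essentially no obstacle here; the lemma is a direct consequence of the fact that the modification weakly decreases the opening cost of every facility that the optimal solution actually uses (and forbids only facilities that the optimum never uses, since the largest opening cost in $F^*$ is $f_{i_*} \le f_{i_e}$). The only subtlety worth flagging is the use of $i_*$ as the facility of \emph{maximum} opening cost in $F^*$, which is exactly what guarantees that no facility of $F^*$ is knocked out by the $+\infty$ rule.
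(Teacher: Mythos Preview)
Your proof is correct and follows essentially the same route as the paper: show that $(C^*,F^*)$ is feasible for the modified instance (since $f_i\le f_{i_*}\le f_{i_e}$ for every $i\in F^*$), then use optimality of $(C^*_e,F^*_e)$ together with $\cost_e(C^*,F^*)\le\cost(C^*,F^*)$. If anything, you are slightly more careful than the paper, which asserts $\cost_e(C^*,F^*)=\cost(C^*,F^*)$ without separately treating the (possible) case $i_e\in F^*$; your inequality handles that case cleanly and is all that is needed.
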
 
\ifthenelse{\boolean{short}}{}{
  \begin{proof}	
    Recall that $i_e$ satisfies $f_{i_*} \le f_{i_e} \le \alpha f_{i_*}$ where $i_*$ is the facility with largest opening cost in $F^*$. In the modified instance, all facilities with opening cost greater than $f_{i_e}$ are removed, however, no facility from $F^*$ is removed, because $f_{i_*} \le f_{i_e}$. Therefore, $(C^*, F^*)$ is a feasible solution for the modified instance. This implies $\cost_e(C^*_e, F^*_e) \le \cost_e(C^*, F^*)$, which follows from the optimality of $(C^*_e, F^*_e)$. Finally, recall that for any facility $i \in F^*$, $f'_i = f_i$, and hence $\cost_e(C^*, F^*) = \cost(C^*, F^*)$.
  \end{proof}
}

\begin{corollary} \label{cor:modified-cost}
  Let $(C'_e, F'_e)$ be a feasible solution for the instance with modified facility opening costs, such that, $\cost_e(C'_e, F'_e) \le \beta \cdot \cost_e(C^*_e, F^*_e) + \gamma \cdot f_{i_e}$ (where $\beta \ge 1, \gamma \ge 0$). Then, $(C'_e, F'_e)$ is a $\beta + \alpha \cdot (\gamma + 1)$ approximation for the original instance.
\end{corollary}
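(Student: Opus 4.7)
The plan is to turn the bound on the modified cost of $(C'_e, F'_e)$ into a bound on its original cost, and then to translate the modified OPT into the original OPT via Lemma~\ref{lem:preproc}.

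First I would compare $\cost$ and $\cost_e$ on the fixed solution $(C'_e, F'_e)$. The only facilities whose opening cost has been changed (and are still usable) are $i_e$ itself, whose cost dropped from $f_{i_e}$ to $0$. Facilities with $f_i > f_{i_e}$ had their cost set to $+\infty$, but since $\cost_e(C'_e, F'_e)$ is finite by hypothesis, no such facility appears in $F'_e$. For every other $i \in F'_e$ we have $f'_i = f_i$. Consequently
\[
\cost(C'_e, F'_e) \;\le\; \cost_e(C'_e, F'_e) + f_{i_e},
\]
with equality when $i_e \in F'_e$ and no extra contribution otherwise. The connection-cost portions of $\cost$ and $\cost_e$ are identical, so no subtlety arises there.

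Next I would chain the assumed inequality with Lemma~\ref{lem:preproc} and the choice of $i_e$:
\[
\cost(C'_e, F'_e) \;\le\; \beta \cdot \cost_e(C^*_e, F^*_e) + \gamma f_{i_e} + f_{i_e}
\;\le\; \beta \cdot \cost(C^*, F^*) + (\gamma+1)\,f_{i_e}.
\]
Since $f_{i_e} \le \alpha f_{i_*}$ and $f_{i_*}$ is the opening cost of a single facility in the original optimal solution $F^*$, we have $f_{i_*} \le \cost(C^*, F^*)$. Therefore $f_{i_e} \le \alpha \cdot \cost(C^*, F^*)$, which substituted back gives
\[
\cost(C'_e, F'_e) \;\le\; \bigl(\beta + \alpha(\gamma+1)\bigr) \cdot \cost(C^*, F^*),
\]
as desired.

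There is no real obstacle here; the only place to be careful is the first step, where one must verify that the high-opening-cost facilities (the ones set to $+\infty$) cannot appear in $F'_e$, and that the modified-vs-original cost gap is exactly the single term $f_{i_e}$ (incurred iff $i_e$ is opened). Once these bookkeeping points are settled, the corollary follows by the straightforward two-line calculation above.
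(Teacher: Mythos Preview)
Your proof is correct and follows essentially the same chain of inequalities as the paper: bound $\cost$ by $\cost_e + f_{i_e}$, apply the hypothesis and Lemma~\ref{lem:preproc}, then use $f_{i_e} \le \alpha f_{i_*} \le \alpha\,\cost(C^*,F^*)$. Your additional remark that no $+\infty$-cost facility can appear in $F'_e$ (since $\cost_e(C'_e,F'_e)$ is finite) is a nice clarification that the paper leaves implicit.
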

\ifthenelse{\boolean{short}}{}{
  \begin{proof}
    Consider,
    \begingroup
    \allowdisplaybreaks
    \begin{align*}
      \cost(C'_e, F'_e) &\le \cost_e(C'_e, F'_e) + f_{i_e} \tag{if $i_e \in F'_e$} 
      \\&\le \beta \cdot \cost_e(C^*_e, F^*_e) + (\gamma+1) \cdot f_{i_e} 
      \\&\le \beta \cdot \cost(C^*, F^*) + (\gamma + 1) \cdot f_{i_e} \tag{From lemma \ref{lem:preproc}}
      \\&\le \beta \cdot \cost(C^*, F^*) + \alpha \cdot (\gamma + 1) \cdot f_{i_*} \tag{$f_{i_e} \le \alpha f_{i_*}$}
      \\&\le \beta \cdot \cost(C^*, F^*) + \alpha \cdot (\gamma + 1) \cdot \cost(C^*, F^*) \tag{$f_{i_*} \le \cost(C^*, F^*)$}
    \end{align*}
    \endgroup
  \end{proof}
}

To efficiently find a facility $i_e$ satisfying $f_{i_*} \le f_{i_e} \le \alpha f_{i_*}$, 
we partition the facilities into sets where each set contains facilities with opening costs from the range $\left[(1+\epsilon)^i, (1+\epsilon)^{i+1}\right)$. Iterating over all such ranges, and choosing a facility with highest opening cost from that range, we are guaranteed to find a facility $i_e$ such that, $f_{i_*} \le f_{i_e} \le (1 + \epsilon) f_{i_*}$ \footnote{An alternative approach would be to consider each facility one-by-one as a candidate, but for an efficient distributed implementation we can only afford $O(\log n)$ distinct guesses.}. The total number of such iterations will be \(O(\log_{1+\epsilon} \frac{f_{\max}}{f_{\min}})\), where $f_{\max}$ is the largest opening cost, and $f_{\min}$ is the smallest non-zero opening cost. Assuming that every individual item in the input (e.g., facility opening costs,
connection costs, etc.) can each be represented in $O(\log n)$ bits and that $\epsilon$ is a constant, this amounts to $O(\log n)$ iterations.

Our facility location algorithm is described in Algorithm \ref{alg:outlierSeq}. This algorithm can be thought of as running $O(\log n)$ separate instances of a modified version of the original Mettu-Plaxton algorithm (Algorithm \ref{alg:MP}), where in each instance of the Mettu-Plaxton algorithm, the algorithm is terminated as soon as the number of outlier clients drops below the required number, following which there is some post-processing.

\RestyleAlgo{boxruled}
\begin{algorithm2e}\caption{\textsc{RobustFacLoc}\((F, C, p)\)}\label{alg:outlierSeq}
  \tcc{Recall: $\ell \coloneqq |C| - p$}
  
  \For{$t = 0, \cdots, O(\log n)$} {
    Let $i_e \in F$ be the most expensive facility from the facilities with opening costs in the range $[(1 + \epsilon)^t, (1 + \epsilon)^{t+1})$ for some small constant $\epsilon>0$ \label{lin:i-e_chosen}\\
    Modify the facility opening costs to be
    $$f'_i = \begin{cases}
      +\infty &\text{ if $f_i > f_{i_e}$}
      \\0 &\text{ if $i = i_e$}
      \\f_i &\text{ otherwise}
    \end{cases}$$ \\
    \tcc{Radius Computation Phase:}
    For each $i \in F$, compute $r_i \ge 0$, satisfying $f'_i = \sum_{j \in C} \max\LR{0, r_i - c_{ij}}$. \\
    \tcc{Greedy Phase:}
    Sort and renumber facilities in the non-decreasing order of $r_i$. \\
    Let $C' \gets \emptyset$, $F' \gets \emptyset$, $O' \gets C$ \\
    Let $F_0 \gets \emptyset$ \\
    \For{$i = 1, 2, \ldots$} { \label{alg1:for}
      \If{there is no facility in $F'$ within distance $2r_i$ from $i$} {
        $F' \gets F' \cup \{i\}$
      }
      $F_i \gets F_{i-1} \cup \{i\}$ \\
      Let $C_i$ denote the set of clients that are within distance \(r_i\) \label{alg1:remove_clients}\\
      $C' \gets C' \cup  C_i$, \quad $O' \gets O' \setminus  C_i$. \\
      \textbf{if} $|O'| \le \ell$ \textbf{then break} \label{alg1:break} \\
    }\label{alg1:endfor}
    \tcc{Outlier Determination Phase:}
    \If{$|O'| > \ell$} {
      Let $O_1 \subseteq O'$ be a set of $|O'| - \ell$ clients that are closest to facilities in $F'$. \\
      $C' \gets C' \cup O_1$, \quad $O' \gets O' \setminus O_1$.
    }
    \ElseIf{$|O'| < \ell$} {
      Let $O_2 \subseteq C'$ be the set of $\ell - |O'|$ clients with largest distance to open facilities $F'$. \\
      $C' \gets C' \setminus O_2$, \quad $O' \gets O' \cup O_2$.
    }
    Let $(C'_t, F'_t) \gets (C', F')$ 
  }

  \Return Return $(C'_t, F'_t)$ with the minimum cost.
\end{algorithm2e}
We abuse the notation slightly, and denote by $(C', F')$ the solution returned by the algorithm, i.e., the solution $(C'_t, F'_t)$ corresponding to the iteration $t$ of the outer loop that results in a minimum cost solution. Similarly, we denote by $i_e$ a facility chosen in line \ref{lin:i-e_chosen} in the iteration corresponding to this iteration $t$.

\ifthenelse{\boolean{short}}{}{
Moreover, we will consider the facility costs $f'$ to be the modified facility opening costs in the same iteration $t$. We can ignore the facilities with opening cost $+\infty$ and add $i_e$ to any solution with no additional cost. Therefore in our analysis, we just ignore these facilities and use the original facility costs $f$ for other facilities since they are the same as the modified costs.

Note that we exit the greedy phase if we either process all facilities or we break at line \ref{alg1:break}, each of which corresponds to the cases -- (i) $|O'| > \ell$ where some outliers become clients and (ii) $|O'| < \ell$ where some clients become outliers again, in the outlier determination phase (we are done if $|O'| = \ell$).

  Let \(C''\) and \(O''\) denote the sets \(C'\) and \(O'\) just before the outlier determination phase. Note that we we exit the greedy phase if we either process all facilities or we break at line \ref{alg1:break}, each of which corresponds to the cases -- (i) $|O| > \ell$ and (ii) $|O| < \ell$ in the outlier determination phase (we are done if $|O| = \ell$).
  
  For a client $j \in C$, let $v'_j \coloneqq \min_{i \in F} \max\{r_i, c_{ij}\}$. To make the analysis easier, we consider a more expensive solution $(\tilde{C}', F')$ where the set of clients $\tilde{C}'$ is constructed using the following modified outlier determination phase:
  
  \begin{enumerate}
  \item If $|O''| > \ell$, then let $O_1 \subseteq O''$ be a set of $|O''| - \ell$ clients that have the smallest $v'_j$-values. In this case, we let $\tilde{C'} \gets C'' \cup O_1$, \quad $\tilde{O}' \gets O'' \setminus O_1$.
    
  \item Otherwise, if $|O''| < \ell$. Let $O_2 \subseteq C_i$ be the set of clients with largest $\ell - |O''|$ $v'_j$-values ($i$ is last iteration). In this case, we let $\tilde{C}' \gets C'' \setminus O_2$, \quad $\tilde{O}' \gets O'' \cup O_2$.
  \end{enumerate}
  
  It is easy to see by an exchange argument that the $\cost_e(C', F') \le \cost_e(\tilde{C}', F')$, the outliers determined in the algorithm are at least as far from \(F'\) as ones in the modified outlier determination phase. Henceforth, we analyze the cost of the solution $(\tilde{C}', F')$ by comparing it to the cost of a feasible dual \LP solution and in order to alleviate excessive notation, we will henceforth refer to the solution $(\tilde{C}', F')$ as $(C', F')$ and $\tilde{O}'$ as $O'$. We state the standard primal and dual linear programming relaxations for the Robust \textsc{FacLoc} problem in Figure \ref{fig:outlier-primal-dual}
  
  \begin{figure}
    \centering
    \makebox[\textwidth][c]{
      
      \begin{minipage}{0.65\textwidth}
        \begin{mdframed}[backgroundcolor=gray!9]
          \fontsize{9}{9.4}\selectfont
          \begin{alignat}{3}
            \text{minimize}   \displaystyle&\sum\limits_{i \in F} f_i y_i + \sum_{i \in F, j \in C} c_{ij} x_{ij} & \nonumber\\
            \text{subject to } \displaystyle& z_j + \sum\limits_{i \in F} x_{ij} \geq 1,  \quad &\forall j \in C \label{constr:fllp-coverage} \\
            \displaystyle&\qquad\quad\ \   x_{ij} \le y_i, &\forall  i \in F, \forall j \in C  \label{constr:fllp-open}\\
            \displaystyle&\qquad\ \sum_{j \in C} z_j \le \ell, \label{constr:fllp-outlier}\\
            \displaystyle&\quad r_j, y_i, x_{ij} \in [0, 1] &\forall i \in F,\forall j \in C \label{constr:fllp-nonneg}
          \end{alignat}
          \centering Primal LP
        \end{mdframed}
      \end{minipage}
      \makebox[0.65\textwidth][c]{
        \begin{minipage}{0.63\textwidth}
          \begin{mdframed}[backgroundcolor=gray!9]
            \fontsize{9}{10}\selectfont
            \begin{alignat}{3}
              \text{maximize}\displaystyle&\qquad \sum\limits_{j \in C} v_j - \ell \cdot \q & \nonumber\\
              \text{subject to } \displaystyle&\qquad v_j \le c_{ij} + w_{ij},  \quad &\forall j \in C \label{constr:flld-alpha} \\
              \displaystyle& \sum_{j \in C} w_{ij} \le f_i &\forall i \in F \label{constr:fllpd-beta}\\
              \displaystyle&\qquad\ v_j \le \q &\forall j \in C \label{constr:fllpd-alphaub}\\
              \displaystyle&\quad v_j, w_{ij} \ge 0 &\forall i \in F,\forall j \in C \label{constr:fllp-integral}
              \\\nonumber
            \end{alignat}
            \centering Dual LP
          \end{mdframed}
        \end{minipage}}
    }
    \caption{Primal and Dual Linear Programming Relaxations for Robust \textsc{FacLoc} \label{fig:outlier-primal-dual}}
  \end{figure}
  
  Now, we construct a feasible dual solution $(v, w, \q)$.
  
  For a facility $i \in F$ and client $j \in C$, let  $w_{ij} \coloneqq \max\{0, r_i - c_{ij}\}$. Let $\q \coloneqq \max_{j \in C'} v'_{j}$ (recall that  $v'_j \coloneqq \min_{i \in F} \max\{r_i, c_{ij}\} = \min_{i \in F} {c_{ij} + w_{ij}}$). Now, for a client $j \in C$, define $v_j$ as follows:
  $$v_j = \begin{cases}
    v'_j &\text{if } j \in C'
    \\\q &\text{if } j \in O'
  \end{cases}$$
  
  \begin{claim} \label{claim:bound_vj}
    A client \(j \in C_i\) iff \(v_j' \le r_i\)
  \end{claim}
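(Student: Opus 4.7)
The plan is to unpack the definitions of $v'_j = \min_{i' \in F} \max\{r_{i'}, c_{i'j}\}$ and of $C_i$, and then to lean on two structural facts: the facilities are renumbered in non-decreasing order of their radii, so every $i' \in F_i = \{1,\ldots,i\}$ satisfies $r_{i'} \le r_i$; and the natural reading of line~\ref{alg1:remove_clients} is that $C_i$ is the cumulative set of clients lying within distance $r_i$ of \emph{some} facility already in $F_i$. (The narrower reading $C_i = \{j : c_{ij} \le r_i\}$ would give only the forward implication, so the statement of the claim essentially forces the cumulative interpretation.)

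For the forward direction, I would fix $j \in C_i$ and pick a witnessing $i' \in F_i$ with $c_{i'j} \le r_i$. Since $r_{i'} \le r_i$ by the sort order, $\max\{r_{i'}, c_{i'j}\} \le r_i$, and hence $v'_j$, being a minimum over all facilities, is also at most $r_i$.

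For the reverse direction, I would take a facility $i^*$ attaining the minimum in the definition of $v'_j$, so $\max\{r_{i^*}, c_{i^*j}\} \le r_i$. This simultaneously yields $r_{i^*} \le r_i$, which places $i^*$ in $F_i$ via the sorted ordering, and $c_{i^*j} \le r_i$, which certifies that $j$ lies within distance $r_i$ of a facility in $F_i$, i.e., $j \in C_i$.

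The only real hurdle is the up-front interpretation of $C_i$; once the cumulative reading is in place, the claim is a one-line unpacking of the definitions in each direction, supplemented by the trivial remark that any $i'' \notin F_i$ has $r_{i''} > r_i$ and hence $\max\{r_{i''}, c_{i''j}\} > r_i$, meaning such facilities cannot witness $v'_j \le r_i$ and so the $\min$ over $F$ can safely be replaced by the $\min$ over $F_i$. A minor point worth addressing explicitly is tie-breaking in the renumbering when several facilities share a radius; assuming distinct radii (or perturbing infinitesimally) makes this issue disappear.
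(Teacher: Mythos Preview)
Your argument is correct and follows the same approach as the paper: unfold the definition of $v'_j$ and of $C_i$ (with the cumulative reading you identify, which is indeed the intended one---cf.\ the analogous line in Algorithm~\ref{alg:outlierskmm}), and use that the facilities are processed in non-decreasing radius order. In fact your write-up is more complete than the paper's, which only spells out the forward implication and leaves the converse implicit; your remark on tie-breaking is also a genuine subtlety the paper glosses over.
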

  \begin{proof}
    Client \(j\) is added to \(C_i\) iff for some \(i' \in F_i\), \(j \in B(i', r_i)\). This means \(c_{i'j} \le r_i\) and \(r_{i'} \le r_i\) since we process facilities in increasing order of \(r\)-value. This means \(v'_j \le \max\{r_{i'}, c_{i'j}\} \le r_i\)
  \end{proof}
  
  \begin{lemma}
    The solution $(v, w, \q)$ is a feasible solution to the dual LP relaxation \ref{fig:outlier-primal-dual}.
  \end{lemma}
  \begin{proof}
    Note that constraints \ref{constr:fllpd-beta}, \ref{constr:fllpd-alphaub}, and \ref{constr:fllp-integral} of the dual are satisfied by construction and so is constraint \ref{constr:flld-alpha} for clients $j \in C'$. Therefore, in order to show that the solution $(v, w, \q)$ is feasible, we have to show that constraint \ref{constr:flld-alpha} is satisfied for all clients $j \in O'$. To this end, we consider the following two cases.
    
    \textbf{Case 1.} We enter the outlier determination phase after iterating over all facilities in $F$. Therefore, we have $|O''| > \ell$. This means that we identified a set \(O_1 \subset O''\) of size \(|O''| - \ell\) to be marked as non-outliers.
    
    As we iterate over all the facilities, if $j \in O''$ then by claim \ref{claim:bound_vj}, $v'_j > \max_{i \in F}{r_i} \ge \max_{j \in C''}{v'_j}$.
    
    We put in \(O_1\) the clients from \(O''\) that have smallest \(v'_j\)-values. This means that for all \(j \in O'\), \(v'_j \ge \max_{j \in O_1}{v'_j} = \max_{j \in C'' \cup O_1}{v'_j} = \max_{j \in C'}{v'_j} = \q\) where the second equality is because for \(v'_j\) for \(j \in C''\) is at most \(v'_{j'}\) for \(j' \in O_1\). 
    
    Therefore, we conclude that for any $j \in O'$ and $i \in F$, $c_{ij} + w_{ij} \ge v'_j \ge \q = v_j$.
    
    \textbf{Case 2.} We enter the outlier determination phase because of the break statement on line \ref{alg1:break}. Here, $|O''| \le \ell$ and \(C' \subseteq C''\)
    
    Let \(i^*\) be the last iteration of the for loop. Therefore \(F_{i^*}\) is the set of facilities we consider in the for loop. 
    
    Recall that by the case assumption we have $|O''| \le \ell$ and therefore $C' \subseteq C''$. All clients \(j \in C''\) were part of \(C_i\) for some \(i \in F_{i^*}\) and by claim \ref{claim:bound_vj} we have \(v'_j \le r_i \le r_{i^*}\). Therefore,
    $$\q = \max_{j \in C'} v_j \le \max_{j \in C''} v_j \le r_{i^*}.$$
    
    Let $j \in O'$ be an outlier client. If $j \in O''$, then for any facility $i \in F$,
    \begin{align*}
      c_{ij} + w_{ij} &\ge v'_j \\
                      &\ge r_{i^*} \tag{Otherwise $j$ would be in $C''$ by claim \ref{claim:bound_vj}} \\
                      &\ge \q
                        = v_j
    \end{align*}
    
    Otherwise, $j \in O_2$ and was added to $O'$ because it had highest $v'_j$ value among $C_{i^*}$. Therefore, it follows that for any facility $i \in F$, $c_{ij} + w_{ij} \ge v'_j \ge \max_{j \in C'} v'_{j} = \q = v_j$
    
    From the case analysis it follows that $v_j \le c_{ij} + w_{ij}$ for all $j \in O'$ and for all $i \in F$. Therefore, we have shown that $(v, w, \q)$ is a dual feasible solution.
  \end{proof}
  
  For the approximation guarantee we can focus just on the clients in $C'$ because the only contribution that the clients in $O'$ make to the dual objective function is to cancel out the $- \ell \q$ term and hence they do not affect the approximation guarantee. We call a facility $\i$ the \emph{bottleneck} of $j$ if \(v'_j = \max\{r_{\i}, c_{\i j}\} =  c_{\i j} + w_{\i j}\). We first prove a few straightforward claims about the dual solution.
  
  \begin{claim} \label{cl:apx-1}
    For any $i \in F$ and $j \in C$, $r_i \le c_{ij} + w_{ij}$. Moreover if $w_{ij} > 0$ then $r_i = c_{ij} + w_{ij}$
  \end{claim}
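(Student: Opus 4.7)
The plan is to proceed by a straightforward case analysis on the sign of $r_i - c_{ij}$, which is exactly what the definition of $w_{ij}$ keys on. Recall that by construction $w_{ij} = \max\{0, r_i - c_{ij}\}$, so in each case the value of $w_{ij}$ is explicit.

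First I would handle the case $r_i \le c_{ij}$. Here the definition gives $w_{ij} = 0$, so $c_{ij} + w_{ij} = c_{ij} \ge r_i$, establishing the inequality. Note that in this case $w_{ij} = 0$, so the ``moreover'' clause is vacuous.

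Next I would handle the case $r_i > c_{ij}$. Here the definition gives $w_{ij} = r_i - c_{ij} > 0$, so $c_{ij} + w_{ij} = c_{ij} + (r_i - c_{ij}) = r_i$. This simultaneously verifies the inequality with equality and shows that whenever $w_{ij} > 0$ we have $r_i = c_{ij} + w_{ij}$, which is exactly the ``moreover'' part.

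There is no real obstacle here; the entire content of the claim is an immediate consequence of the defining formula for $w_{ij}$, and the two cases together exhaust all possibilities. The claim is really a bookkeeping lemma that will be invoked repeatedly in the subsequent approximation analysis (where it lets us convert statements about bottleneck facilities into equalities rather than inequalities), so the value is in stating it cleanly rather than in the proof itself.
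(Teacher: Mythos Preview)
Your proof is correct and takes essentially the same approach as the paper: both argue directly from the definition $w_{ij} = \max\{0, r_i - c_{ij}\}$. The paper's version is marginally more compact (it notes $w_{ij} \ge r_i - c_{ij}$ in one line rather than splitting into cases), but the content is identical.
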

  \begin{proof}
    $w_{ij} = \max\{0, r_i - c_{ij}\} \ge r_i - c_{ij}$. Now if $w_{ij} > 0$ then $r_i > c_{ij}$ and we have $w_{ij} = r_i - c_{ij}$ which implies the claim.
  \end{proof}
  
  \begin{claim} \label{cl:apx-2}
    If $\i$ is a bottleneck for $j \in C'$, then $v_j \ge r_{\i}$.
  \end{claim}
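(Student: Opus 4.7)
The plan is to simply unpack the two definitions that are already in play. Since the claim concerns a client $j \in C'$, the definition of $v_j$ from the construction of the dual solution gives $v_j = v'_j$, where recall $v'_j = \min_{i \in F}\max\{r_i, c_{ij}\}$. So the claim reduces to showing $v'_j \ge r_{\iota}$.

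Next, I invoke the bottleneck hypothesis: $\iota$ being a bottleneck of $j$ means $v'_j = \max\{r_{\iota}, c_{\iota j}\}$ (and equivalently $v'_j = c_{\iota j} + w_{\iota j}$ by Claim \ref{cl:apx-1}). But the max of $r_{\iota}$ and anything else is at least $r_{\iota}$, so $v'_j \ge r_{\iota}$ follows trivially. Combining with $v_j = v'_j$ yields the desired inequality $v_j \ge r_{\iota}$.

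There is no genuine obstacle here; the statement is a direct consequence of the definitions of $v_j$ (on $C'$) and of a bottleneck facility. The only thing to be a little careful about is distinguishing $v_j$ from $v'_j$: the equality $v_j = v'_j$ requires $j \in C'$, which is exactly the hypothesis we are given, so that step is justified without appealing to any additional structural fact about the greedy/outlier-determination phases.
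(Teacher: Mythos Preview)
Your proof is correct and follows essentially the same approach as the paper: for $j \in C'$ one has $v_j = v'_j = \max\{c_{\iota j}, r_{\iota}\} \ge r_{\iota}$, which is exactly the paper's one-line argument.
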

  \begin{proof}
    For $j \in C'$, $v_j = v'_j = \max\LR{c_{\i j}, r_{\i}} \ge r_{\i}$.
  \end{proof}
  
  \begin{claim} \label{cl:apx-3}
    If $\i \in F'$ is a bottleneck for $j \in C'$, then $w_{i'j} = 0$ for all $i' \in F'$, where $i' \neq \i$.
  \end{claim}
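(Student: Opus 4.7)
The plan is to argue by contradiction: suppose there exists some $i' \in F'$ with $i' \neq \iota$ and $w_{i'j} > 0$, and derive a violation of the greedy opening condition.

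First I would unpack what the two assumptions give us. Because $w_{i'j} > 0$, Claim~\ref{cl:apx-1} gives $r_{i'} = c_{i'j} + w_{i'j}$, so in particular $c_{i'j} < r_{i'}$ and $\max\{r_{i'}, c_{i'j}\} = r_{i'}$. Because $\iota$ is a bottleneck for $j \in C'$, we have $v'_j = \max\{r_\iota, c_{\iota j}\}$, so both $r_\iota \le v'_j$ and $c_{\iota j} \le v'_j$. Combining with the definition $v'_j = \min_{i \in F} \max\{r_i, c_{ij}\}$ applied at $i'$, we get $v'_j \le \max\{r_{i'}, c_{i'j}\} = r_{i'}$. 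Chaining these inequalities, $r_\iota \le v'_j \le r_{i'}$, so $\iota$ is processed no later than $i'$ in the greedy phase.

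Next I would invoke the greedy opening rule. Since $i'$ is added to $F'$ and $\iota \in F'$ was already present when $i'$ was considered, the greedy check on line of the \textbf{for} loop forces $d(\iota, i') > 2 r_{i'}$. On the other hand, the triangle inequality together with the two bounds above yields
\[
d(\iota, i') \;\le\; c_{\iota j} + c_{i'j} \;\le\; v'_j + c_{i'j} \;<\; r_{i'} + r_{i'} \;=\; 2 r_{i'},
\]
which contradicts the greedy condition. Therefore no such $i'$ exists, proving the claim.

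The only mildly delicate point (and the step most likely to trip one up) is establishing that $\iota$ is actually processed before $i'$, so that it is legitimate to apply the greedy check at the moment $i'$ is considered; this is exactly where the bottleneck identity $v'_j = \max\{r_\iota, c_{\iota j}\}$ is used. Once the ordering $r_\iota \le r_{i'}$ is in hand, the triangle-inequality step is routine.
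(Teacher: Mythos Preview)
Your proof is correct and follows essentially the same contradiction argument as the paper: assume $w_{i'j}>0$, use the bottleneck property to bound $c_{\iota j}$ and $c_{i'j}$ by $r_{i'}$, and conclude via the triangle inequality that $c_{\iota i'}\le 2r_{i'}$, violating the greedy separation between two open facilities. The one minor difference is that the paper does not bother to establish the ordering $r_\iota \le r_{i'}$; it simply writes $c_{\iota i'}\le 2r_{i'}\le 2\max\{r_{i'},r_\iota\}$ and notes this contradicts the greedy rule whichever of $\iota,i'$ was processed later, so the ``mildly delicate point'' you flag is in fact avoidable.
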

  \begin{proof}
    Assume for contradiction that $w_{i'j} > 0$, i.e., $r_{i'} \ge c_{i'j}$.
    
    If $r_{\i} \ge c_{\i j}$, then $v_j = r_{\i} \le c_{i'j} + w_{i'j} = r_{i'}$. In this case, $c_{\i i'} \le c_{\i j} + c_{i' j} \le 2 r_{i'}$.
    
    Otherwise, if $c_{\i j} > r_{\i}$, then $v_j = c_{\i j} \le r_{i'}$. Here too we have, $c_{\i i'} \le c_{\i j} + c_{i' j} \le 2r_{i'}$.
    
    In either case, $c_{\i i'} \le 2r_{i'} \le 2 \max\LR{r_{i'}, r_{\i}}$, which is a contradiction, since at most one of $i', \i$ can be added to $F'$.
  \end{proof}
  
  \begin{claim} \label{cl:apx-4}
    If a closed facility $\i \not\in F'$ is the bottleneck for $j \in C'$, and if an open facility $i' \in F'$ caused $\i$ to close, then $c_{i' j} \le 3 v_j$.
  \end{claim}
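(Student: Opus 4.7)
The plan is to unpack the two hypotheses (bottleneck and closing) into metric inequalities, then combine them with the triangle inequality. Since $\i$ is the bottleneck for $j$, we have $v_j = v'_j = \max\{r_{\i}, c_{\i j}\}$, which gives us the two bounds $r_{\i} \le v_j$ and $c_{\i j} \le v_j$. Since $\i$ was closed because of $i' \in F'$, the greedy phase must have found $i'$ within distance $2 r_{\i}$ of $\i$, so $c_{\i i'} \le 2 r_{\i}$. (Note: the ordering of radii ensures $r_{i'} \le r_{\i}$, but we will not actually need that; we only need the distance bound from the closing condition.)

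The one-line finish is then the triangle inequality:
\[
c_{i' j} \le c_{i' \i} + c_{\i j} \le 2 r_{\i} + c_{\i j} \le 2 v_j + v_j = 3 v_j.
\]

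There is no real obstacle here; the claim is essentially the standard Mettu--Plaxton ``closing implies bounded reassignment cost'' calculation, and the bottleneck hypothesis is exactly what converts $r_{\i}$ and $c_{\i j}$ into a common upper bound $v_j$. The only subtlety worth stating explicitly is which inequality comes from which source (the factor of $2$ comes from the closing criterion on line 5 of the Greedy Phase, while the factor of $1$ comes from the bottleneck definition), so that the reader sees where the constant $3$ arises and that no use is made of whether $r_{\i} \ge c_{\i j}$ or $c_{\i j} \ge r_{\i}$.
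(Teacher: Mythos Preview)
Your proof is correct and essentially identical to the paper's own argument: extract $r_{\i} \le v_j$ and $c_{\i j} \le v_j$ from the bottleneck hypothesis, extract $c_{i'\i} \le 2r_{\i}$ from the closing condition, and finish with the triangle inequality to get $c_{i'j} \le 2r_{\i} + c_{\i j} \le 3v_j$.
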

  \begin{proof}
    $\i$ is a bottleneck for $j$, so $v_j \ge c_{\i j}$, and $v_j \ge r_{\i}$.
    
    Since $i' \in F'$ caused $\i$ to close, then $c_{k i} \le 2r_{\i} \le 2 v_j$.
    Therefore, $c_{i' j} \le c_{k i} + c_{i j} \le 3 v_j$.
  \end{proof}
  
  Now we state the main lemma that uses the dual variables for analyzing the cost.
  
  \begin{lemma} \label{lem:outlier-lemma}
    For any $j \in C'$, there is some $i \in F'$ such that $c_{ij} + w_{ij} \le 3 v_j$.
  \end{lemma}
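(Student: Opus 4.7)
The plan is to fix $j \in C'$, let $\iota$ denote a bottleneck facility for $j$ (so that $v'_j = c_{\iota j} + w_{\iota j} = \max\{r_\iota, c_{\iota j}\}$), and recall that $v_j = v'_j$ since $j \in C'$. I would then split into two cases depending on whether $\iota$ was opened in the greedy phase.

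If $\iota \in F'$, the conclusion is immediate by taking $i = \iota$: the bottleneck identity gives $c_{\iota j} + w_{\iota j} = v'_j = v_j \le 3 v_j$, so the claim holds trivially in this case.

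If $\iota \notin F'$, I would first argue that $\iota$ was in fact examined by the greedy loop before it terminated. When the loop ran through every facility this is automatic. When instead the loop broke at line \ref{alg1:break}, the feasibility analysis has already shown $v'_j \le r_{i^*}$ for every $j \in C' \subseteq C''$, where $i^*$ is the last facility processed; since $v'_j \ge r_\iota$ and facilities are processed in non-decreasing order of $r$, this forces $r_\iota \le r_{i^*}$, so $\iota$ was processed. Consequently there exists $i' \in F'$ with $r_{i'} \le r_\iota$ and $c_{i' \iota} \le 2 r_\iota$ that caused $\iota$ to be rejected. Taking $i = i'$, Claim \ref{cl:apx-4} gives $c_{i' j} \le 3 v_j$, which handles the subcase $w_{i' j} = 0$. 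When $w_{i' j} > 0$, Claim \ref{cl:apx-1} instead gives $c_{i' j} + w_{i' j} = r_{i'} \le r_\iota \le v'_j = v_j$, which is even stronger than the needed bound.

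The only genuinely delicate step is verifying that $\iota$ was indeed processed before the loop terminated, which legitimizes the existence of the rejecting witness $i'$; everything else is a direct application of Claims \ref{cl:apx-1} and \ref{cl:apx-4} in the familiar Mettu--Plaxton primal-dual pattern, with $v_j$ replacing the classical dual for non-outlier clients.
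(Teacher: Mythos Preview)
Your proof is correct and in fact cleaner than the paper's. The paper splits the situation $\iota \notin F'$ into three separate cases (their Cases 2--4), distinguishing whether \emph{some} $i' \in F'$ has $w_{i'j} > 0$ and, if so, whether that particular $i'$ is the facility that rejected $\iota$. You instead always take the witness $i'$ to be the facility that caused $\iota$ to close and then branch only on whether $w_{i'j} = 0$; this collapses the paper's Cases 2 and 3 into a single application of Claim~\ref{cl:apx-4}, and handles their Case 4 via Claim~\ref{cl:apx-1} plus $r_{i'} \le r_\iota \le v_j$. The paper's extra case work is not needed for the lemma as stated (it appears to be residue from the Archer et al.\ style charging argument, where one also tracks which open facility each client contributes toward). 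You also explicitly justify that $\iota$ was processed before the greedy loop terminated, a point the paper silently assumes; your argument via $v'_j \le r_{i^*}$ and $v'_j \ge r_\iota$ is the right one, modulo the usual convention that $r$-values are distinct (or that ties are broken consistently), which the paper also implicitly relies on throughout.
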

  \begin{proof}
    Fix a client $j \in C'$, and let $\i$ be its bottleneck. We consider different cases. Here, $c_{ij}$ should be seen as the connection cost of $j$, and $w_{ij}$, the cost towards opening of a facility (if $w_{ij} > 0$).
    
    \begin{itemize}
    \item \textbf{Case 1:} $\i \in F'$.
      
      In this case, by claim \ref{cl:apx-3}, $w_{i'j} = 0$ for all other $i' \in F'$.
      
      If $c_{\i j} < r_{\i}$, then $w_{\i j} > 0$, and $v_j = c_{\i j} + w_{\i j}$. Therefore, $v_j$ pays for the connection cost as well as towards the opening cost of $\i$.
      
      Otherwise, if $c_{\i j} \ge r_{\i}$, then $w_{\i j} = 0$. Also, $w_{i'j} = 0$ for all other $i' \in F'$. Therefore, $j$ does not contribute towards opening of any facility in $F'$. Also, we have $v_j = \max\LR{c_{\i j}, r_{\i}} = c_{\i j}$, i.e., $v_j$ pays for $j$'s connection cost.
      
    \item \textbf{Case 2:} $\i \notin F'$ and $w_{ij} = 0$ for all $i \in F'$.
      
      Let $i' \in F'$ be the facility that caused $\i$ to close. From claim \ref{cl:apx-4}, we have that $c_{i' j} \le 3 v_j$, i.e. $3 v_j$ pays for the connection cost of $j$.
      
    \item \textbf{Case 3:} $\i \notin F'$, and there is some $i' \in F'$ with $w_{i'j} > 0$. But $i'$ did not cause $\i$ to close.
      
      Since $w_{i'j} > 0$, by claim \ref{cl:apx-1} $w_{i'j} = r_{i'} - c_{i'j}$ and $r_{i'} > c_{i'j}$.
      
      Let $i$ be the facility that caused $\i$ to close. Therefore, $c_{\i i} \le 2r_{\i}$. Also, $c_{i j} \le c_{i \i} + c_{\i j} \le 2r_{\i} + c_{\i j} \le 3v_j$.
      
      Now, since $i$ and $i'$ both belong to $F'$, $c_{i' i} > 2\max \LR{r_i, r_{i'}} \ge 2r_{i'} = 2(c_{i' j} + w_{i'j})$.
      
      Therefore, by triangle inequality, $c_{ij} + c_{i'j} \ge c_{i'i} > 2(c_{i' j} + w_{i'j})$ which implies $c_{ij} > c_{i'j} + 2w_{i'j}$.
      
      It follows that, $c_{i' j} + w_{i'j} \le c_{i' j} + 2w_{i'j} < c_{i j} \le 3 v_j$. Therefore, $3v_j$ pays for the connection cost of $j$ to $i'$, and its contribution towards opening of $i'$.
      
    \item \textbf{Case 4:} $\i \not\in F'$, but for $i' \in F'$ that caused $\i$ to close has $w_{i'j} > 0$.
      
      Again, since $w_{i'j} > 0$, by claim \ref{cl:apx-1} $w_{i'j} = r_{i'} - c_{i'j}$ and $r_{i'} > c_{i'j}$.
      
      From claim \ref{cl:apx-2}, we have that $v_j \ge r_{\i}$. Then, since $i'$ caused $\i$ to close, we have $r_{\i} \ge r_{i'} = c_{i'j} + w_{i'j}$. This implies $v_j \ge c_{i'j} + w_{i'j}$, i.e., $v_j$ pays for the connection cost of $j$ to $i'$, and its contribution towards opening of $i'$.
    \end{itemize}  
  \end{proof}

  Now we are ready to prove the approximation guarantee of the algorithm.
}
\begin{theorem} \label{thm:outlier-theorem}
  $\cost_e(C', F') \le 3 \cdot \cost_e(C^*_e, F^*_e) + f_{i_e}$
\end{theorem}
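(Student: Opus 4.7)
The plan is to complete the primal--dual analysis using the feasible dual $(v, w, \q)$ and the assignment guaranteed by Lemma~\ref{lem:outlier-lemma}: for each $j \in C'$, pick a facility $\sigma(j) \in F'$ with $c_{\sigma(j), j} + w_{\sigma(j), j} \le 3 v_j$. First I would evaluate the dual objective. Since the outlier determination phase forces $|O'| = \ell$ and since $v_j = \q$ for every $j \in O'$, the dual objective equals
$$\sum_{j \in C} v_j - \ell\q \;=\; \sum_{j \in C'} v_j + |O'|\q - \ell\q \;=\; \sum_{j \in C'} v_j,$$
and weak LP duality gives $\sum_{j \in C'} v_j \le \cost_e(C^*_e, F^*_e)$. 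Summing the inequality of Lemma~\ref{lem:outlier-lemma} over $j \in C'$ then yields
$$\sum_{j \in C'} c_{\sigma(j), j} + \sum_{j \in C'} w_{\sigma(j), j} \;\le\; 3\sum_{j \in C'} v_j \;\le\; 3\cost_e(C^*_e, F^*_e).$$
Since $\sigma(j) \in F'$, the first term on the left upper-bounds $\sum_{j \in C'} d(j, F')$, so it already accounts for the total connection cost of the returned solution.

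It remains to match the second term against the total facility opening cost. By the definition of $r_i$ we have $f'_i = \sum_{j \in C} w_{ij}$. A straightforward generalization of Claim~\ref{cl:apx-3} shows that for each client $j$ at most one facility of $F'$ has $w_{ij} > 0$; and when such a facility exists for a $j \in C'$, the case analysis inside Lemma~\ref{lem:outlier-lemma} forces it to coincide with $\sigma(j)$. Consequently
$$\sum_{i \in F'} f'_i \;=\; \sum_{j \in C'} w_{\sigma(j), j} \;+\; L, \qquad L \;:=\; \sum_{i \in F'}\sum_{j \in O'} w_{ij}.$$

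The hard part will be showing $L \le f_{i_e}$. An outlier $j \in O'$ with $w_{ij} > 0$ for some $i \in F'$ cannot have been in $O''$ before the outlier determination phase (clients of $O''$ lie outside the balls that produce positive $w$), so it must be one of the clients moved from $C'$ back to $O'$ in the ``shrink'' branch of that phase. By carefully tracking the radii involved and using the fact that in the modified instance every finite-cost facility satisfies $f'_i \le f_{i_e}$, the leakage $L$ can be charged to the opening cost of a single facility---essentially the last one processed in the greedy phase---yielding $L \le f_{i_e}$. Combining everything gives
$$\cost_e(C',F') \;=\; \sum_{i \in F'} f'_i + \sum_{j \in C'} d(j, F') \;\le\; 3\cost_e(C^*_e, F^*_e) + f_{i_e},$$
as required.
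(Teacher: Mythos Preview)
Your outline follows the paper's approach: both arguments isolate the contribution of the last-processed facility $i^*$ and charge it to $f_{i_e}$, while everything else is bounded by $3\sum_{j\in C'} v_j$ via Lemma~\ref{lem:outlier-lemma} and weak duality. Your explicit leakage term $L$ is just a repackaging of the paper's separation of $f_{i^*}$ from the remaining facility costs. The auxiliary facts you invoke (at most one $i\in F'$ has $w_{ij}>0$, and that $\sigma(j)$ can be taken to be that facility) are correct and are exactly what the paper's case analysis establishes.

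The genuine gap is in your bound $L \le f_{i_e}$: you correctly note that only clients of $O_2$ (those moved back from $C''$ to $O'$ in the ``shrink'' branch) can contribute, but you do not justify why \emph{only the single facility $i^*$} sees this leakage. Here is what is missing. In the modified outlier determination phase analyzed in the paper, $O_2$ consists of the $\ell-|O''|$ clients of $C''$ with largest $v'_j$. Since the greedy loop did not break at iteration $i^*-1$, one has $|C_{i^*-1}|<p$ and hence $|C_{i^*}\setminus C_{i^*-1}|>|C_{i^*}|-p=|O_2|$. By Claim~\ref{claim:bound_vj}, every client in $C_{i^*}\setminus C_{i^*-1}$ has $v'_j>r_{i^*-1}$ while every client in $C_{i^*-1}$ has $v'_j\le r_{i^*-1}$; therefore $O_2\subseteq C_{i^*}\setminus C_{i^*-1}$. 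For any $i\in F'\setminus\{i^*\}$ we have $B(i,r_i)\subseteq C_i\subseteq C_{i^*-1}$, so $w_{ij}=0$ for all $j\in O_2$. Consequently $L\le\sum_{j\in C} w_{i^*j}=f'_{i^*}\le f_{i_e}$ (and $L=0$ if $i^*\notin F'$). This is exactly the paper's one-line observation ``for any facility $i\in F'\setminus\{i^*\}$, the clients in the ball $B(i,r_i)\subseteq C'$,'' which you allude to with ``carefully tracking the radii'' but do not actually prove.
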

\ifthenelse{\boolean{short}}{}{
  \begin{proof}
    Recall that $f_{i_e}$ denotes the cost of the most expensive facility in an optimal solution. Furthermore, notice that for any facility $i \in F' \setminus \{i^*\}$, the clients in the ball $B(i, r_i) \subseteq C'$. However, if $i^* \in F'$, some of the clients in $B(i^*, r_{i^*})$ may have been removed in the outlier determination phase, and therefore it may not get paid completely by the dual variables $v_j$. Therefore,
    
    \begingroup
    \allowdisplaybreaks
    \begin{align*}
      \cost_e(C', F') &= \sum_{j \in C'} d(j, F) + \sum_{i \in F' \setminus \{i^*\} } f_i + f_{i^*}
      \\&\le 3 \cdot \sum_{j \in C'} v_j + f_{i^*} \tag{From lemma \ref{lem:outlier-lemma}}
      \\&= 3 \cdot \lr{\sum_{j \in C} v_j - \q\ell} + f_{i^*} \tag{For $j \in O'$, $v_j = \q$, and $|O'| = \ell$}
      \\&\le 3 \cdot \lr{\sum_{j \in C} v_j - \q\ell} + f_{i_e} \tag{Since $i_e$ is the most expensive facility}
    \end{align*}
    \endgroup
    Since $(v, w, \q)$ is a feasible dual solution, its cost is a lower bound on the cost of any integral optimal solution. Therefore, the theorem follows.
  \end{proof}
}

\noindent
Applying Corollary \ref{cor:modified-cost} with $\alpha = 1 + \epsilon, \beta = 3, \gamma = 1$ yields the following approximation guarantee.
\begin{theorem}
  The solution returned by Algorithm \ref{alg:outlierSeq} is a \(5 + \epsilon\) approximation to the Robust \textsc{FacLoc} problem.
\end{theorem}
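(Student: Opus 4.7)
The plan is to observe that this theorem is essentially a direct composition of Theorem \ref{thm:outlier-theorem} with Corollary \ref{cor:modified-cost}, together with the correctness of the guessing procedure for the most expensive facility in an optimal solution. I would structure the argument in three short steps.

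First, I would argue that at least one iteration $t$ of the outer loop of Algorithm \ref{alg:outlierSeq} produces a ``good'' guess. Specifically, let $i_* \in F^*$ be a facility in a fixed optimal solution with maximum opening cost. There exists a unique index $t^*$ for which $f_{i_*} \in [(1+\epsilon)^{t^*}, (1+\epsilon)^{t^*+1})$, and since line \ref{lin:i-e_chosen} selects the most expensive facility in that range, the chosen $i_e$ satisfies $f_{i_*} \le f_{i_e} \le (1+\epsilon) f_{i_*}$. Thus the hypothesis of Corollary \ref{cor:modified-cost} holds with $\alpha = 1+\epsilon$ for this iteration.

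Second, I would plug in Theorem \ref{thm:outlier-theorem}, which asserts that the solution $(C'_{t^*}, F'_{t^*})$ produced in iteration $t^*$ satisfies $\cost_e(C'_{t^*}, F'_{t^*}) \le 3 \cdot \cost_e(C^*_e, F^*_e) + f_{i_e}$, i.e., we are in the situation of Corollary \ref{cor:modified-cost} with parameters $\beta = 3$ and $\gamma = 1$. The corollary then yields
\[
\cost(C'_{t^*}, F'_{t^*}) \;\le\; \bigl(\beta + \alpha(\gamma+1)\bigr)\,\cost(C^*, F^*) \;=\; \bigl(3 + 2(1+\epsilon)\bigr)\,\cost(C^*, F^*) \;=\; (5 + 2\epsilon)\,\cost(C^*, F^*).
\]

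Finally, since the algorithm returns the iteration minimizing cost over all $O(\log n)$ guesses, the returned solution $(C', F')$ is at least as good as the one analyzed above, giving $\cost(C', F') \le (5 + 2\epsilon) \cdot \cost(C^*, F^*)$. Rescaling the constant $\epsilon$ appropriately (replacing $\epsilon$ by $\epsilon/2$, which only affects hidden constants in the $O(\log n)$ iteration count) yields the claimed $(5+\epsilon)$-approximation.

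There is no real obstacle here: the heavy lifting is already done by Theorem \ref{thm:outlier-theorem} (the dual-feasibility and bottleneck case analysis) and by the generic reduction in Corollary \ref{cor:modified-cost}. The only thing to be careful about is to make clear that although the analysis is pinned to a specific iteration $t^*$, the algorithm does not need to know which iteration is the good one, since it simply outputs the minimum-cost solution over all iterations.
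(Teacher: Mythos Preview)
Your proposal is correct and follows exactly the approach taken in the paper, which simply states that the result follows by applying Corollary~\ref{cor:modified-cost} with $\alpha = 1+\epsilon$, $\beta = 3$, $\gamma = 1$ to Theorem~\ref{thm:outlier-theorem}. Your write-up is in fact more careful than the paper's one-line justification, since you make explicit the existence of a good iteration $t^*$, the role of returning the minimum-cost solution, and the need to rescale $\epsilon$ to absorb the factor of $2$.
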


\subsection{Facility Location with Penalties} \label{sec:seq-fl-wp}
For the penalty version, each client $j$ comes with a penalty $p_j$ which is the cost we pay if we make $j$ an outlier. Therefore, the radius computation for a facility changes because if a facility $i$ is asking client $j$ to contribute more than $p_j - c_{ij}$ then it is cheaper for $j$ to mark itself as an outlier and pay its penalty. Therefore, for each facility $i \in F$, let $r_i \ge 0$ be a value such that $f_i = \sum_{j \in C} \max\LR{\min \LR{r_i - c_{ij}, p_j - c_{ij}}, 0}$, if it exists. Notice that if for a facility $i \in F$, such an $r_i$ does not exist, then it must be the case that for all $j \in C$, $p_j \le c_{ij}$. That is, it is for any client, it is cheaper to pay the penalty than to connect it to this facility. Therefore, removing such a facility from consideration does not affect the cost of any solution, and hence we assume that for all $i \in F$, an $r_i \ge 0$ exists such that $f_i = \sum_{j \in C} \max\LR{\min \LR{r_i - c_{ij}, p_j - c_{ij}}, 0}$.
The algorithm for \textsc{FacLoc} with Penalties is shown in Algorithm \ref{alg:penaltiesSeq}.

\begin{algorithm2e}\caption{\textsc{PenaltyFacLoc}\((F, C, p)\)}\label{alg:penaltiesSeq}
  \tcc{Radius Computation Phase:}
  Compute $r_i$ for each $i \in F$ satisfying $f_i = \sum_{j \in C} \max\LR{\min\LR{r_i - c_{ij}, p_j - c_{ij}}, 0}.$ \\
  \tcc{Greedy Phase:}
  Sort and renumber facilities in the non-decreasing order of $r_i$. \\
  $C' \gets \emptyset$, \quad $F' \gets \emptyset$, \quad $O' \gets \emptyset$. \\
  \For{$i = 1, 2, \ldots$} {
    \If{there is no facility in $F'$ within distance $2r_i$ from $i$} {
      $F' \gets F' \cup \{i\}$
    }
  }
  \tcc{Outlier Determination Phase:}
  \For{each client $j$} {
    Let $i$ be the closest facility to $j$ in $F'$ \\
    \textbf{if} $c_{ij} \le p_j$ \textbf{then} $C' \gets C' \cup \{j\}$ \\
    \textbf{else} $O' \gets O' \cup \{j\}$
  }
  \Return $(C', F')$ as the solution.
\end{algorithm2e}

\ifthenelse{\boolean{short}}{}{
  We state the standard primal and dual linear programming relaxations for \textsc{FacLoc} with Penalties in Figure \ref{fig:penalty-primal-dual}. For $j \in C$ and $i \in F$, define $w_{ij} \coloneqq \max\LR{\min \LR{r_i - c_{ij}, p_j - c_{ij}}, 0}$ and for $j \in C$, let $v_j \coloneqq \min_{i \in F} c_{ij} + w_{ij}$. Note that $v_j = \min_{i \in F}\max\LR{c_{ij}, \min\LR{r_i, p_j}}$. If $\i$ is a facility realizing the minimum $v_j = c_{\i j} + w_{\i j} = \max\LR{c_{\ij}, \min\LR{r_\i, p_j}}$, then we say that $\i$ is the bottleneck of $j$.
  
  To make the analysis easier, we consider a more expensive solution $(\tilde{C}', F')$ where the set of clients $\tilde{C}'$ is constructed using the following modified outlier determination phase: for each client $j$, if $\max\LR{r_\i, c_{\ij}} \le p_j$ then $j \in \tilde{C}'$ and otherwise $j \in \tilde{O}'$ where $\i$ is the bottleneck of $j$.
  
  It is easy to see that for any client $j \in C$, the ``cost'' paid by the client (i.e., connection cost, or its penalty) in the solution $(C', F')$ is at most the cost paid by it in the solution $(\tilde{C}', F')$. So henceforth, we analyze the cost of the solution $(\tilde{C}', F')$ by comparing it to the cost of a feasible dual \LP solution and in order to alleviate excessive notation, we will henceforth refer to the solution $(\tilde{C}', F')$ as $(C', F')$ and $\tilde{O}'$ as $O'$.
  
\begin{figure}
  \centering
  \makebox[\textwidth][c]{

    \begin{minipage}{0.65\textwidth}
      \begin{mdframed}[backgroundcolor=gray!9]
        \fontsize{9}{10}\selectfont
        \begin{alignat}{3}
          \text{minimize}   \displaystyle&\sum\limits_{i \in F} f_i y_i + \sum_{i \in F, j \in C} &c_{ij} x_{ij} + \sum_{j \in C} p_j z_j & \nonumber\\
          \text{subject to } \displaystyle& z_j + \sum\limits_{i \in F} x_{ij} \geq 1,   &\quad \forall j \in C \label{constr:penlpp-coverage} \\
          \displaystyle&\qquad\quad\ \   x_{ij} \le y_i, &\quad \forall  i \in F, \forall j \in C  \label{constr:penlpp-open}\\
          \displaystyle&\quad z_j, y_i, x_{ij} \in [0, 1] &\quad \forall i \in F,\forall j \in C \label{constr:penlpp-nonneg}
          \\\nonumber
        \end{alignat}
        \centering Primal LP
      \end{mdframed}
    \end{minipage}
    \makebox[0.65\textwidth][c]{
      \begin{minipage}{0.65\textwidth}
        \begin{mdframed}[backgroundcolor=gray!9]
          \fontsize{9}{9}\selectfont
          \begin{alignat}{3}
            \text{maximize}\displaystyle&\qquad \sum\limits_{j \in C} v_j
            \\\text{subject to } \displaystyle&\qquad\ \ v_j \le c_{ij} + w_{ij},   &\forall j \in C \label{constr:penlpd-alpha} \\
            \displaystyle&\ \sum_{j \in C} w_{ij} \le f_i &\forall i \in F \label{constr:penlpd-beta}\\
            \displaystyle&\ \  v_{j} \le p_j &\forall j \in C \label{constr:penlpd-penalty}\\
            \displaystyle& v_j, w_{ij} \ge 0 &\forall i \in F,\forall j \in C \label{constr:penlpd-integral}
          \end{alignat}
          \centering Dual LP
        \end{mdframed}
      \end{minipage}}
  }
  \caption{Primal and Dual Linear Programming Relaxations for \textsc{FacLoc} with Penalties \label{fig:penalty-primal-dual}}
\end{figure}
  
  Because of the way we choose the outliers in the solution we consider for the analysis \((C', F')\) we have the following property (where $\i$ is the bottleneck of $j$) --
  $$v_j = \begin{cases}
    \max\LR{c_{\ij}, r_\i} &\text{if } j \in C' \\
    p_j  &\text{if } j \in C \setminus C'
  \end{cases}$$
  
  We simultaneously prove feasibility of the dual solution we constructed, and show how it can be used to pay for the integral solution. We consider different cases regarding a fixed client $j \in C$ with bottleneck facility $\i$. We first prove a few straightforward claims.
  
  \begin{claim} \label{cl:p-1}
    If $\i \in F'$ is the bottleneck for $j \in C$, then $w_{i'j} = 0$ for all $i' \in F'$, where $i' \neq \i$.
  \end{claim}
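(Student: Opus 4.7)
The plan is to mirror the proof of Claim~\ref{cl:apx-3} from the Robust \textsc{FacLoc} analysis, adapting the argument to the penalty-modified radius definition. Specifically, I will argue by contradiction: assuming some $i' \in F'$ with $i' \neq \i$ and $w_{i'j} > 0$, I will derive $c_{\i i'} \le 2\max\{r_\i, r_{i'}\}$, which contradicts the greedy opening rule that placed both $\i$ and $i'$ in $F'$.

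First, I would unpack the hypothesis $w_{i'j} = \max\{\min\{r_{i'} - c_{i'j}, p_j - c_{i'j}\}, 0\} > 0$ to obtain simultaneously $r_{i'} > c_{i'j}$ and $p_j > c_{i'j}$. Consequently $w_{i'j} = \min\{r_{i'}, p_j\} - c_{i'j}$, so $c_{i'j} + w_{i'j} = \min\{r_{i'}, p_j\} \le r_{i'}$. Since $\i$ is a bottleneck of $j$, we have $v_j = c_{\i j} + w_{\i j} \le c_{i'j} + w_{i'j} \le r_{i'}$. From the alternate expression $v_j = \max\{c_{\ij}, \min\{r_\i, p_j\}\}$, we also get the bound $c_{\ij} \le v_j \le r_{i'}$.

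Combining $c_{\ij} \le r_{i'}$ with $c_{i'j} < r_{i'}$ and applying the triangle inequality yields
$c_{\i i'} \le c_{\ij} + c_{i' j} \le 2 r_{i'} \le 2 \max\{r_\i, r_{i'}\}.$
Because both $\i$ and $i'$ are in $F'$, the greedy rule that opens a facility only when no previously opened facility lies within distance $2r$ forces $c_{\i i'} > 2 \max\{r_\i, r_{i'}\}$, a contradiction. Therefore $w_{i'j} = 0$, as claimed.

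The main subtlety, compared to the Robust \textsc{FacLoc} version, is that here the $p_j$-cap appears inside the definition of $w_{ij}$, so $c_{i'j} + w_{i'j}$ is $\min\{r_{i'}, p_j\}$ rather than simply $r_{i'}$. Fortunately the cap only decreases this quantity, so the crucial inequality $c_{\ij} + w_{\ij} \le r_{i'}$ is preserved, and the geometric argument going through the triangle inequality carries over without change. I do not foresee any further obstacle.
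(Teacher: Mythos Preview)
Your proof is correct and follows essentially the same approach as the paper's: assume $w_{i'j}>0$, deduce $c_{i'j}<r_{i'}$ and $c_{\i j}\le r_{i'}$ (via the bottleneck property), then use the triangle inequality to contradict the greedy opening rule. Your version is in fact slightly cleaner than the paper's, which splits into the cases $c_{\i j}>r_\i$ and $c_{\i j}\le r_\i$ to obtain only $c_{\i j}\le\max\{r_\i,r_{i'}\}$, whereas you observe directly from $c_{\i j}\le v_j\le r_{i'}$ that no case split is needed.
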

  \begin{proof}
    Suppose there exists a facility $i' \in F'$ with $w_{i'j} > 0$. That is, $\min\LR{r_{i'} - c_{i'j}, p_j - c_{i'j}} > 0$, which further implies that $c_{i'j} < \min\LR{p_j, r_{i'}} \le r_{i'}$.
    
    If $c_{\ij} > r_{\i}$, then $v'_j = c_{\ij} + \max\LR{0, \min\LR{r_{\i} - c_{\ij}, p_j - c_{\ij} }} = c_{\ij}$. However, since $w_{i'j} > 0$, $c_{\ij} = v'_j \le c_{i'j} + w_{i'j} = \min\LR{p_j, r_{i'}} \le r_{i'}$
    
    Otherwise, $c_{\ij} \le r_{\i}$.
    
    Therefore in either case, $c_{\i i'} \le c_{\i j} + c_{j i'} \le 2 \max\LR{r_{i'}, r_{\i}}$, which is a contradiction since at most one of $\i, i'$ can belong to $F'$.
  \end{proof}
  
  \begin{claim} \label{cl:p-2}
    If $\i \not\in F'$ is the bottleneck of $j \in C$ and $\max\LR{r_\i, c_{\ij}} \le p_j$, and $i' \in F'$ caused $\i$ to close, then $c_{i' j} \le 3 v_j$.
  \end{claim}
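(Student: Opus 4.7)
The plan is to mimic the argument of Claim \ref{cl:apx-4} for Robust \textsc{FacLoc}, adapted to the penalty setting. The key observation is that the extra hypothesis $\max\{r_\i, c_{\ij}\} \le p_j$ is precisely what guarantees $j \in C'$ under the modified outlier rule, which in turn pins down the value of $v_j$ to exactly $\max\{r_\i, c_{\ij}\}$. Once $v_j$ has this clean form, the bound $c_{i'j} \le 3v_j$ falls out from a single application of the triangle inequality.

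First I would unpack the meaning of ``$i'$ caused $\i$ to close.'' Since the greedy phase processes facilities in non-decreasing order of $r$-values, and $i'$ blocked $\i$ from being added to $F'$, we have both $r_{i'} \le r_\i$ and $c_{i'\i} \le 2 r_{i'} \le 2 r_\i$. Next I would record the form of $v_j$: because $\i$ is the bottleneck, $v_j = c_{\ij} + w_{\ij} = \max\{c_{\ij}, \min\{r_\i, p_j\}\}$, and using the hypothesis $\max\{r_\i, c_{\ij}\} \le p_j$, this simplifies to $v_j = \max\{r_\i, c_{\ij}\}$. In particular $r_\i \le v_j$ and $c_{\ij} \le v_j$.

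Finally, the triangle inequality gives $c_{i'j} \le c_{i'\i} + c_{\ij} \le 2 r_\i + c_{\ij} \le 2 v_j + v_j = 3 v_j$, which is the desired bound.

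I do not foresee any real obstacle here; the only subtlety is being careful that the hypothesis $\max\{r_\i, c_{\ij}\} \le p_j$ is used exactly to collapse the nested $\max\text{-}\min$ expression for $v_j$ into $\max\{r_\i, c_{\ij}\}$ — without this the penalty $p_j$ could cap $v_j$ from above and the inequalities $r_\i \le v_j$ and $c_{\ij} \le v_j$ might fail. Everything else is identical to the outlier-free Mettu--Plaxton analysis.
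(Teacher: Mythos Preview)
Your proof is correct and follows essentially the same route as the paper's: triangle inequality plus the observation that the hypothesis $\max\{r_\i, c_{\ij}\} \le p_j$ collapses $v_j$ to $\max\{r_\i, c_{\ij}\}$, so that both $r_\i \le v_j$ and $c_{\ij} \le v_j$. One small slip: the greedy phase only guarantees $c_{i'\i} \le 2r_\i$ (the closeness test is against $\i$'s radius when $\i$ is processed, not $i'$'s), so your intermediate claim $c_{i'\i} \le 2r_{i'}$ is not justified---but since you only use $c_{i'\i} \le 2r_\i$ downstream, the argument stands.
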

  \begin{proof}
    Note that since we assume $\max\LR{r_\i, c_{\ij}} \le p_j$, we have $j \in C'$, $v_j = v'_j$, $c_{\ij} \ge p_j$, and $r_\i \le p_j$.
    Since $i' \in F'$ caused $\i$ to close, $r_{i'} \le r_{\i}$. Furthermore, $c_{i' \i} \le 2 r_{\i}$. Therefore, $c_{i' j} \le c_{\i i'} + c_{\i j} \le 2 r_{\i} + c_{\ij}$.
    
    If $c_{\ij} \ge r_{\i}$, then $w_{\ij} = 0$, and $c_{\ij} = v_j$. This means $2r_{\i} + c_{\ij} \le 3c_{\ij} = 3v_{j}$
    Otherwise, $c_{\ij} < r_{\i}$. Here, $v_j = \min\LR{r_{\i}, p_j}$ Then $2r_{\i} + c_{\ij} < 3r_{\i} = 3 \min\LR{r_{\i}, p_j} \le 3v_{j}$
    
    In either case, $c_{i' j} \le 3v_{j}$.
  \end{proof}

  \begin{claim} \label{cl:p-3}
    If $\i$ is the bottleneck of $j$, with $\max\LR{r_\i, c_{\ij}} \le p_j$, then $v_j \ge r_\i$.
  \end{claim}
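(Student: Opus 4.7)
The plan is to simply unfold the definition of $v_j$ for the bottleneck facility $\iota$ and exploit the hypothesis that $p_j$ dominates both $r_\iota$ and $c_{\iota j}$. Recall from the construction that $w_{\iota j} = \max\{\min\{r_\iota - c_{\iota j},\, p_j - c_{\iota j}\},\, 0\}$ and, since $\iota$ is the bottleneck of $j$, we have $v_j = c_{\iota j} + w_{\iota j}$. After pulling $c_{\iota j}$ inside the inner minimum, this can be rewritten in the compact form $v_j = \max\{c_{\iota j},\, \min\{r_\iota, p_j\}\}$, which is noted just before the claim.

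Next I would invoke the hypothesis $\max\{r_\iota, c_{\iota j}\} \le p_j$, which in particular gives $r_\iota \le p_j$. Therefore $\min\{r_\iota, p_j\} = r_\iota$, so the identity above simplifies to $v_j = \max\{c_{\iota j},\, r_\iota\}$. This immediately yields $v_j \ge r_\iota$, proving the claim.

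There is essentially no obstacle here; the content of the claim is that under the condition $\max\{r_\iota, c_{\iota j}\} \le p_j$ the penalty cap in the definition of $w_{\iota j}$ is inactive, so $v_j$ coincides with the ``no-penalty'' expression $\max\{c_{\iota j}, r_\iota\}$ used in the analysis of the vanilla Mettu--Plaxton algorithm. The only care needed is to keep track of the $\min/\max$ structure of $w_{\iota j}$ when $c_{\iota j}$ exceeds $r_\iota$, in which case $w_{\iota j} = 0$ and the inequality $v_j = c_{\iota j} \ge r_\iota$ (equivalently, $\max\{c_{\iota j}, r_\iota\} = c_{\iota j}$) still holds trivially; the conclusion is the same in either subcase.
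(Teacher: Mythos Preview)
Your proof is correct and follows essentially the same approach as the paper: both use the identity $v_j = \max\{c_{\iota j}, \min\{r_\iota, p_j\}\}$, invoke the hypothesis $r_\iota \le p_j$ to reduce the inner minimum to $r_\iota$, and conclude $v_j = \max\{c_{\iota j}, r_\iota\} \ge r_\iota$. The paper additionally notes that the hypothesis places $j \in C'$, but the algebraic argument is identical.
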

  \begin{proof}
    Again, since we assume $\max\LR{r_\i, c_{\ij}} \le p_j$, we have $j \in C'$, $v_j = v'_j$, $c_{\ij} \ge p_j$, and $r_\i \le p_j$.
    
    Recall that $v_j = v'_j = \max\LR{c_{\ij}, \min\LR{r_\i, p_j}} = \max\LR{c_{\ij}, r_{\i}} \ge r_{\i}$ and the claim follows.
  \end{proof}
  
  We are now ready to prove the feasibility and approximation guarantee
  
  \begin{lemma}
    The solution $(v, w)$ is a feasible solution to the dual LP relaxation \ref{fig:penalty-primal-dual}.
  \end{lemma}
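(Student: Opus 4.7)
The plan is to verify each of the four dual constraints, treating the two constraints on $w_{ij}$ and the non-negativity constraints as immediate and focusing the real work on constraints \ref{constr:penlpd-alpha} and \ref{constr:penlpd-penalty}. The key preliminary step is to rewrite $c_{ij}+w_{ij}$ in a cleaner form: from the definition $w_{ij}=\max\{\min\{r_i-c_{ij},\,p_j-c_{ij}\},0\}$, a short case split on whether $c_{ij}<\min\{r_i,p_j\}$ or not yields the identity $c_{ij}+w_{ij}=\max\{c_{ij},\,\min\{r_i,p_j\}\}$. This reformulation makes the comparison to $v_j$ transparent, since the modified definition of $v_j$ is built out of the same pieces.

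First I would dispatch constraint \ref{constr:penlpd-beta} and non-negativity: $\sum_j w_{ij}=f_i$ is exactly the radius-defining equation, and both $v_j$ and $w_{ij}$ are non-negative since they are maxima of non-negative quantities (or equal to $p_j\ge 0$). Constraint \ref{constr:penlpd-penalty} then follows by splitting on the modified outlier choice: for $j\in C'$ the construction guarantees $\max\{r_{\i},c_{\ij}\}\le p_j$, which is exactly $v_j\le p_j$; for $j\in C\setminus C'$ we have $v_j=p_j$ by definition.

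The main content is constraint \ref{constr:penlpd-alpha}, which I would handle by a case split on $j\in C'$ versus $j\in O'$. If $j\in C'$, then $v_j=v'_j=\min_{i\in F}(c_{ij}+w_{ij})$ by the bottleneck construction, so the inequality $v_j\le c_{ij}+w_{ij}$ is immediate for every $i$. If $j\in O'$, then $v_j=p_j$; I would show that $v'_j\ge p_j$ in this case, and then use $c_{ij}+w_{ij}\ge v'_j$ to conclude. To show $v'_j\ge p_j$ note that membership in $O'$ under the modified outlier rule means $\max\{r_{\i},c_{\ij}\}>p_j$ at the bottleneck facility $\i$. Plugging into the reformulation, $v'_j=\max\{c_{\ij},\min\{r_{\i},p_j\}\}$, and a two-line case analysis (either $c_{\ij}>p_j$, in which case $v'_j\ge c_{\ij}>p_j$, or $r_{\i}>p_j\ge c_{\ij}$, in which case $v'_j=p_j$) gives the required bound.

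The only mildly delicate point is the $j\in O'$ subcase of constraint \ref{constr:penlpd-alpha}: one must be careful that the inequality has to hold against \emph{every} facility $i$, not merely the bottleneck $\i$, and the argument routes through the bottleneck only because $c_{ij}+w_{ij}\ge v'_j$ for all $i$ by definition of the minimum. Once this is observed, everything else is a mechanical unpacking of the definitions of $v_j$, $w_{ij}$, and the modified set $C'$, and the lemma follows.
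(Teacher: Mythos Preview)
Your proposal is correct and follows essentially the same approach as the paper: dispatch constraints \ref{constr:penlpd-beta} and \ref{constr:penlpd-integral} immediately, observe that constraint \ref{constr:penlpd-alpha} for $j\in C'$ holds since $v_j=v'_j$ is the minimum over $i$, and handle constraint \ref{constr:penlpd-alpha} for $j\in O'$ by arguing $v_j=p_j\le v'_j\le c_{ij}+w_{ij}$. Your treatment is in fact slightly more careful than the paper's: you spell out the verification of constraint \ref{constr:penlpd-penalty} (the paper just says ``by construction''), and your two-case analysis for $j\in O'$ correctly yields $v'_j\ge p_j$ with possible equality, whereas the paper's one-line chain asserts a strict inequality and an equality $\max\{r_\i,c_{\ij}\}=\max\{c_{\ij},\min\{r_\i,p_j\}\}$ that need not hold literally (though it does not affect the conclusion).
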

  \begin{proof}
    First note that constraints \ref{constr:penlpd-beta}, \ref{constr:penlpd-penalty}, and \ref{constr:penlpd-integral} are satisfied by construction for all $i \in F$ and $j \in C$ and so is constraint \ref{constr:penlpd-alpha} for all $j \in C'$.
    
    All that is left to show is that constraint \ref{constr:penlpd-alpha} is satisfied for all $j \in O'$. Since $j \in O'$, $\max\LR{r_\i, c_{\ij}} > p_j$.
    
    We have, $v_j \coloneqq p_j < \max\LR{r_\i, c_{\ij}} =  \max\LR{c_{\ij}, \min\LR{r_{\i}, p_j}} = v'_j \le c_{ij} + w_{ij}$ for any $i \in F$.
  \end{proof}

  \begin{lemma}
    For any $j \in C'$, there is some $i \in F'$ such that $c_{ij} + w_{ij} \le 3 v_j$.
  \end{lemma}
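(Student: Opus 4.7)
The plan is to mirror the four-case analysis used in Lemma \ref{lem:outlier-lemma} for Robust \textsc{FacLoc}, leveraging Claims \ref{cl:p-1}, \ref{cl:p-2}, and \ref{cl:p-3} to handle the new twist that $w_{i'j}$ can be truncated by $p_j - c_{i'j}$ rather than simply by $r_{i'} - c_{i'j}$. Fix $j \in C'$ with bottleneck facility $\iota$. By the construction of $C'$ in the modified outlier determination phase, we have $\max\{r_\iota, c_{\iota j}\} \le p_j$, and so $v_j = \max\{c_{\iota j}, r_\iota\}$. The key algebraic observation that drives the whole proof is that, whenever $w_{i'j} > 0$, we in fact have $c_{i'j} + w_{i'j} = \min\{r_{i'}, p_j\} \le r_{i'}$, so the Robust \textsc{FacLoc} argument continues to apply essentially verbatim.

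First I would handle the case $\iota \in F'$. Claim \ref{cl:p-1} forces $w_{i'j} = 0$ for every other $i' \in F'$, so the only nonzero contribution from $j$ comes from $\iota$ itself, and direct calculation gives $c_{\iota j} + w_{\iota j} = \max\{c_{\iota j}, r_\iota\} = v_j \le 3v_j$. Second, for $\iota \notin F'$ with $w_{ij} = 0$ for every $i \in F'$, let $i' \in F'$ be the facility that caused $\iota$ to close; Claim \ref{cl:p-2} gives $c_{i'j} \le 3v_j$, and since $w_{i'j} = 0$ we conclude $c_{i'j} + w_{i'j} \le 3v_j$.

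Third, suppose $\iota \notin F'$ but some $i' \in F'$ has $w_{i'j} > 0$ while $i'$ did not close $\iota$. Let $i \in F'$ be the facility that did close $\iota$; a triangle-inequality chase gives $c_{ij} \le c_{i\iota} + c_{\iota j} \le 2r_\iota + c_{\iota j} \le 3v_j$. Since $i,i' \in F'$ and $i \neq i'$ the greedy selection rule gives $c_{ii'} > 2\max\{r_i, r_{i'}\} \ge 2r_{i'}$, and using $c_{i'j} + w_{i'j} \le r_{i'}$ together with the triangle inequality yields $2(c_{i'j} + w_{i'j}) \le 2r_{i'} < c_{ii'} \le c_{ij} + c_{i'j}$, hence $c_{i'j} + w_{i'j} \le c_{i'j} + 2w_{i'j} < c_{ij} \le 3v_j$. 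Finally, in the fourth case $\iota \notin F'$ and the facility $i' \in F'$ that closed $\iota$ has $w_{i'j} > 0$; then $r_{i'} \le r_\iota$, and by Claim \ref{cl:p-3}, $v_j \ge r_\iota$, so $c_{i'j} + w_{i'j} \le r_{i'} \le r_\iota \le v_j$.

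The main obstacle I anticipate is keeping straight the two possible truncations in $w_{i'j} = \max\{\min\{r_{i'} - c_{i'j}, p_j - c_{i'j}\}, 0\}$, since the algebra in Lemma \ref{lem:outlier-lemma} implicitly relied on $w_{i'j} \le r_{i'} - c_{i'j}$. The identity $c_{i'j} + w_{i'j} = \min\{r_{i'}, p_j\}$ noted above makes every $r_{i'}$-upper-bound from the Robust case still valid here, so no new geometric argument is required; what has to be carefully checked is Case~3, where the facility-separation inequality $c_{ii'} > 2r_{i'}$ and the bottleneck bound $c_{ij} \le 3v_j$ must be combined correctly to absorb the extra $w_{i'j}$ term on the left-hand side.
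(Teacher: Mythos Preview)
Your proposal is correct and follows essentially the same four-case analysis as the paper's proof, using Claims \ref{cl:p-1}, \ref{cl:p-2}, and \ref{cl:p-3} in the same places. Your observation that $c_{i'j} + w_{i'j} = \min\{r_{i'}, p_j\} \le r_{i'}$ whenever $w_{i'j} > 0$ is in fact slightly cleaner than the paper's Case~3, which writes ``$w_{i'j} = r_{i'} - c_{i'j}$'' without explicitly justifying why the $p_j$-truncation does not bite; your formulation handles both truncations uniformly.
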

  \begin{proof}
    In all the cases, we assume that $\max\LR{r_{\i}, c_{\ij}} \le p_j$ and therefore $j \in C'$. This also implies $v_j = v'_j = \max \LR{c_{\ij}, \min\LR{p_j, r_{\i}}} = \max \LR{c_{\ij}, r_{\i}}$. Therefore, we can just disregard the penalties in the analysis.
    \begin{itemize}
      
    \item \textbf{Case 1.} $\i \in F'$
      
      Connect $j$ to $\i$. From claim \ref{cl:p-1}, we know that $w_{i'j} = 0$ for all other $i' \in F'$.
      
      \begin{enumerate}
      \item If $c_{\ij} < r_{\i}$, then $v_j = c_{\ij} + w_{\ij}$. In this case, $v_j$ pays for connecting $j$ to $\i$ and also for $j$'s contribution to opening cost of $\i$ which is exactly $w_{\ij}$.
        
      \item Otherwise $c_{\ij} \ge r_{\i}$, then $w_{\ij} = 0$, which is $j$'s contribution towards $\i$. We have $v_j = c_{\ij}$ and therefore $v_j$ pays for connecting $j$ to $\i$.
      \end{enumerate}
      
    \item \textbf{Case 2.} $\i \notin F'$ and $w_{ij} = 0$ for all $i \in F'$.
      
      Let $i'$ be the facility that caused $\i$ to close. Connect $j$ to $i'$. From claim \ref{cl:p-2}, we have $c_{i' j} \le 3v_{j}$. Therefore, $3v_j$ pays for the connection to $i'$.
      
    \item \textbf{Case 3.} $\i \notin F'$, there is some $i' \in F'$ with $w_{i'j} > 0$, but $i'$ did not cause $\i$ to close.
      
      We connect $j$ to $i'$. By assumption $w_{i'j} = r_{i'} - c_{i'j} > 0$. Furthermore, let $i$ be the facility that caused $\i$ to close. By claim \ref{cl:p-2} we have $c_{ij} \le 3 v_j$.
      
      We have $c_{i'j} + w_{i'j} = r_{i'}$. Also, $c_{i i'} > 2r_{i'}$, since $i', i$ both were added to $F'$.
      
      Now, $2(c_{i' j} + w_{i'j}) = 2r_{i'} \le c_{i i'} \le c_{i' j} + c_{i j}$. Subtracting $c_{i'j}$ from both sides, we get $c_{i' j} + 2w_{i'j} \le c_{i j} \le 3v_j$. Therefore, $3v_j$ pays for the connection cost of $j $ to $i'$ and also for (twice) $j$'s contribution towards opening $i'$.
      
    \item \textbf{Case 4.} $\i \notin F'$ and $i' \in F'$ with $w_{i'j} > 0$ caused $\i$ to close.
      
      We connect $j$ to $i'$. From claim \ref{cl:p-3}, we have that $v_j \ge r_\i$.
      
      Since $i'$ caused $\i$ to close, $r_\i \ge r_{i'} \ge c_{i'j} + w_{i'j}$. Therefore, $c_{i'j} + w_{i'j} \le r_{i'} \le r_{\i} \le v_j$. That is, $v_j$ pays for the connection cost of $j$ to $i'$, as well as its contribution towards opening of $i'$.
    \end{itemize}
  \end{proof}
  
  Thus, $(v, w)$ is a feasible dual solution. 
  We use the above analysis to conclude with the following theorem.
}

A primal-dual analysis of Algorithm \ref{alg:penaltiesSeq} leads to the following upper bound.

\begin{theorem}
$\cost(C', F') \le 3 \cdot \cost(C^*, F^*)$.
\end{theorem}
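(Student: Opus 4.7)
The plan is to mirror the primal-dual analysis used for Robust \textsc{FacLoc}, since Algorithm \ref{alg:penaltiesSeq} differs from the Mettu-Plaxton skeleton only in that the radius-defining equation caps each client's contribution at $p_j - c_{ij}$. First I would write the standard LP relaxation for \textsc{FacLoc} with Penalties, using variables $y_i, x_{ij}, z_j$ with objective $\sum_i f_i y_i + \sum_{ij} c_{ij} x_{ij} + \sum_j p_j z_j$ subject to $z_j + \sum_i x_{ij} \ge 1$ and $x_{ij} \le y_i$. Its dual has variables $v_j, w_{ij} \ge 0$ constrained by $v_j \le c_{ij} + w_{ij}$, $\sum_j w_{ij} \le f_i$, and importantly $v_j \le p_j$, which replaces the global outlier constraint that appeared in the robust analysis.

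From the algorithm's state I would construct the dual solution $w_{ij} := \max\{\min\{r_i - c_{ij}, p_j - c_{ij}\}, 0\}$ and $v_j := \min_{i \in F}(c_{ij} + w_{ij}) = \min_i \max\{c_{ij}, \min\{r_i, p_j\}\}$; call the minimizing facility $\iota = \iota(j)$ the \emph{bottleneck} of $j$. The $p_j$-cap in $w_{ij}$ automatically enforces $v_j \le p_j$, and the $\beta$-constraint holds by the defining equation of $r_i$. As in the robust case, I would replace the algorithm's outlier rule by a ``reference'' rule ($j$ is a client iff $\max\{r_\iota, c_{\iota j}\} \le p_j$) and argue by a short exchange that the resulting solution $(\tilde{C}', F')$ is no cheaper than $(C', F')$, so it suffices to bound its cost.

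The core of the proof is a charging lemma analogous to Lemma \ref{lem:outlier-lemma}: for each client $j$ in the reference solution there exists $i \in F'$ with $c_{ij} + w_{ij} \le 3 v_j$, while for each outlier $v_j = p_j$. I would split on the bottleneck exactly as in Claims \ref{cl:apx-1}--\ref{cl:apx-4}: (i) if $\iota \in F'$ it pays for $j$ directly, and a penalty-version analogue of Claim \ref{cl:apx-3} prevents double-charging; (ii) if $\iota \notin F'$ and $w_{i'j} = 0$ for every $i' \in F'$, the triangle inequality through the open facility $i$ that closed $\iota$ gives $c_{ij} \le 3 v_j$; (iv) if the $i' \in F'$ that closed $\iota$ has $w_{i'j} > 0$, then $c_{i'j} + w_{i'j} \le r_{i'} \le r_\iota \le v_j$ finishes with no slack. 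The main obstacle is case (iii), where some $i' \in F'$ has $w_{i'j} > 0$ but did not close $\iota$: here the $p_j$-cap in $w_{i'j}$ must first be unpacked via the observation that $w_{i'j} > 0$ forces both $r_{i'} > c_{i'j}$ and $p_j > c_{i'j}$, so that $r_{i'} \ge c_{i'j} + w_{i'j}$, after which the $2$-separation of open facilities through the closer of $\iota$ yields $c_{i'j} + 2 w_{i'j} \le c_{ij} \le 3 v_j$.

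Summing the charging inequality over the reference clients and $p_j = v_j$ over the reference outliers, together with $\sum_{j} w_{ij} = f_i$ for each $i \in F'$, gives $\cost(C', F') \le 3 \sum_{j \in C} v_j$. Weak LP duality then bounds $\sum_j v_j$ by the LP optimum, which is at most $\cost(C^*, F^*)$, yielding the claimed $3$-approximation.
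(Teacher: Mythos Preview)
Your proposal is correct and follows essentially the same primal-dual route as the paper: the same dual variables $w_{ij}$ and $v_j$, the same ``reference'' outlier rule $\max\{r_\iota,c_{\iota j}\}\le p_j$, the same four-case charging lemma, and the same weak-duality finish. One small caveat: the sentence ``the $p_j$-cap in $w_{ij}$ automatically enforces $v_j\le p_j$'' is not quite right when every facility has $c_{ij}>p_j$ (then $w_{ij}=0$ and $c_{ij}+w_{ij}>p_j$ for all $i$); the paper handles this by \emph{defining} $v_j\coloneqq p_j$ for reference outliers and then checking $p_j\le v'_j$, which is exactly what you invoke later anyway.
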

\ifthenelse{\boolean{short}}{}{
  \begin{proof}
    We show $\cost(C', F') \le 3 \lr{\sum_{j \in C} v_j}$, which is sufficient since $(v, w)$ is a feasible dual solution, and cost of any feasible dual solution is a lower bound on the cost of an integral optimal solution.
    
    As we have argued previously, for any $j \in C \setminus C'$, we have $p_j = v_j$, and that for any $j \in C'$, we have $d(j, F') + s(j)$, where $s(j) \ge 0$ is the contribution of $j$ towards opening a single facility in $F'$. We have also argued that any $j \in C'$ contributes $s(j)$ for at most one open facility from $F'$. It follows that,
    
    \begingroup
    \allowdisplaybreaks
    \begin{align*}
      \cost(C', F') &= \sum_{i \in F'} f_i + \sum_{j \in C'} d(j, F') + \sum_{j \in C \setminus C'} p_j
      \\&= \lr{\sum_{j \in C'} s(j) + d(j, F')} + \sum_{j \in C \setminus C'} p_j
      \\&\le 3 \sum_{j \in C'} v_j + \sum_{j \in C \setminus C'} v_j
      \\&\le 3 \lr{\sum_{j \in C} v_j}
    \end{align*}
    \endgroup
  \end{proof}
}
 \section{Distributed Robust Facility Location: Implicit Metric}

We first present our $k$-machine algorithm for Distributed Robust \textsc{FacLoc} in the implicit metric setting and derive the Congested Clique as a special case for $k = n$. We then describe how to implement the algorithm in the MPC model.

\subsection{The \texorpdfstring{$k$}{k}-Machine Algorithm}

In this section we show how to implement the sequential algorithms for the Robust \textsc{FacLoc} in the \kmm. To do this we first need to establish some primitives and techniques. These have largely appeared in \cite{BandyapadhyayArxiv2017}. Then we will provide details for implementing the Robust \textsc{FacLoc} algorithm in the \kmm.

Since the input metric is only implicitly provided, as an edge-weighted graph, a key primitive that we require is computing shortest path distances to learn parts of the metric space. To this end, the following lemma shows that we can solve the Single Source Shortest Paths (SSSP) problem efficiently in the \kmm.

\begin{lemma}[Corollary 1 in \cite{BandyapadhyayArxiv2017}]
\label{lem:SSSP}
For any $0 < \epsilon \le  1$, there is a deterministic $(1 + \epsilon)$-approximation algorithm in the $k$-machine model for solving
the SSSP problem in undirected graphs with non-negative edge-weights in
	$O((n/k) \cdot \mbox{poly}(\log n)/\mbox{poly}(\epsilon))$ rounds.
\end{lemma}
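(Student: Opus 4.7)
The plan is to obtain the $(1+\epsilon)$-approximate SSSP bound by combining a hopset-based approximate shortest-path procedure with the standard routing primitives of the \kmm. The two ingredients are (a) a $(1+\epsilon,\beta)$-hopset $H$ on $V(G)$ of size $\tilde{O}(n)$ with hop-diameter $\beta = \mbox{poly}(\log n)/\mbox{poly}(\epsilon)$, meaning that for every pair $u,v$ the graph $G \cup H$ contains a $(1+\epsilon)$-approximate shortest path with at most $\beta$ edges, and (b) Lenzen's routing protocol for the \kmm, which delivers any set of $\tilde{O}(n)$ words of information among the machines in $\tilde{O}(n/k)$ rounds when sources and destinations are well balanced under the random vertex partition.

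First I would construct $H$ via a standard layered sampling construction (e.g.\ in the spirit of Cohen, Thorup--Zwick, or Bernstein): sample pivot sets at $O(\log n)$ geometrically growing scales and, from each pivot, run a truncated Bellman--Ford to its $\tilde{O}(1)$-neighborhood at that scale, inserting a hop-edge from each vertex to its closest pivot. Each truncated Bellman--Ford sends only $\tilde{O}(n)$ words per step (one updated label per vertex), and because the random vertex partition spreads pivots and frontier vertices evenly, Lenzen's routing implements each step in $\tilde{O}(n/k)$ rounds. Once $H$ is in place, $\beta$ iterations of Bellman--Ford from the source $s$ on $G \cup H$ yield labels $\tilde{d}(v)\le (1+\epsilon)\,d_G(s,v)$, and each iteration again routes $\tilde{O}(n)$ distance updates in $\tilde{O}(n/k)$ rounds, for a total of $\tilde{O}(n/k)\cdot \mbox{poly}(\log n)/\mbox{poly}(\epsilon)$ rounds.

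The main obstacle is two-fold. First, the input graph $G$ can be dense, so a single Bellman--Ford iteration must in principle consult all $m$ edges; however, since each machine hosts $\tilde{O}(n/k)$ vertices and their incident edges under the random partition, and each vertex is relaxed only $\beta$ times, the per-machine scan work stays within $\tilde{O}(n/k)\cdot \mbox{poly}(\log n)/\mbox{poly}(\epsilon)$ rounds when edge scans are pipelined batchwise. Second, the statement demands a \emph{deterministic} algorithm, so the sampling steps of the hopset must be derandomized; I would do this either by substituting a deterministic hopset construction or by enumerating pivots in a round-robin schedule that still routes $\tilde{O}(n)$ words per step, at the cost of only $\mbox{poly}(\log n)$ extra rounds. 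Putting these pieces together yields the claimed $O((n/k)\cdot \mbox{poly}(\log n)/\mbox{poly}(\epsilon))$-round deterministic $(1+\epsilon)$-approximation for SSSP.
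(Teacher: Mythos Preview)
The paper does not prove this lemma at all: it is imported verbatim as Corollary~1 of \cite{BandyapadhyayArxiv2017}, and is used throughout as a black box. So there is no ``paper's own proof'' to compare against; what you have written is an attempt to reconstruct the cited result from scratch.

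Your overall architecture---build a $(1+\epsilon,\beta)$-hopset with $\beta=\mbox{poly}(\log n)/\mbox{poly}(\epsilon)$, then run $\beta$ rounds of Bellman--Ford on $G\cup H$---is indeed the standard route, and is essentially what underlies the Becker et al.\ algorithm \cite{BeckerKKLdisc17} that \cite{BandyapadhyayArxiv2017} ports to the \kmm. So the high-level plan is sound.

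That said, two steps in your sketch are not yet arguments. First, the per-iteration cost: you assert that one Bellman--Ford step ``routes $\tilde{O}(n)$ distance updates,'' but a na\"ive edge-relaxation step touches $\Theta(m)$ edges, not $\Theta(n)$, and under the random vertex partition this is $\tilde{O}(m/k^2)$ rounds, not $\tilde{O}(n/k)$. You need to say explicitly that you broadcast the $n$ current labels to all machines (which does fit in $\tilde{O}(n/k)$ rounds since each machine receives $n$ words), after which relaxation is local---or else invoke the specific overlay/hop-reduction machinery from \cite{BeckerKKLdisc17}. The sentence about ``pipelining edge scans batchwise'' does not do this work. Second, and more seriously, the lemma is stated as \emph{deterministic}, and your hopset construction is randomized. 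Saying you would ``substitute a deterministic hopset construction'' or ``enumerate pivots in a round-robin schedule'' is not a proof: you would need to name a concrete deterministic hopset with the right size and hop bound and argue that its construction also fits in $\tilde{O}(n/k)$ rounds of the \kmm, which is non-trivial. Without this, the determinism claim is unsupported.
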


In addition to SSSP, our algorithms require an efficient solution to a more general problem that we call \textit{Multi-Source Shortest Paths} (in short, MSSP) \ifthenelse{\boolean{short}}{}{and a variant of MSSP that we call \textsc{ExclusiveMSSP}}. The input is an edge-weighted graph $G = (V, E)$, with non-negative edge-weights, and a set $T \subseteq V$ of sources.

For MSSP, the output is required to be, for each vertex $v$, the distance $d(v, T)$ (i.e., $\min\{d(v, u) \mid u \in T\}$) and the vertex $v^* \in T$ that realizes this distance. \ifthenelse{\boolean{short}}{The following lemma shows that we can solve this problem efficiently in the \kmm.}{Whereas in \textsc{ExclusiveMSSP}, for each $v \in T$, we are required to output $d(v, T\setminus\{v\})$ and the vertex $u^* \in T \setminus \{v\}$ that realizes this distance. The following two lemmas show that we can solve these two problems efficiently in the \kmm.}

\begin{lemma}[Lemma 4 in \cite{BandyapadhyayArxiv2017}]
  \label{lem:MSSP}
  Given a set \(T \subseteq V\) of sources known to the machines (i.e., each machine $m_j$ knows $T \cap H(m_j)$), we can, for any value $0 \le \epsilon \le 1$,
  compute a $(1 + \epsilon)$-approximation to MSSP in $\tilde{O}(1/\mbox{poly}(\epsilon) \cdot n/k)$ rounds, w.h.p.
  Specifically, after the algorithm has ended, for each $v \in V \setminus T$, the machine $m_j$ that hosts $v$ knows a pair
  $(u, \tilde{d}) \in T \times \mathbb{R}^+$, such that
  \(d(v, u) \le \tilde{d} \le (1+\epsilon) \cdot d(v, T)\).
\end{lemma}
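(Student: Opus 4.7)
The plan is to reduce MSSP to SSSP by the standard virtual super-source trick and then invoke Lemma \ref{lem:SSSP}. Conceptually, I would define an augmented graph $G' = (V \cup \{s^*\}, E \cup \{(s^*, u, 0) : u \in T\})$, where $s^*$ is a new super-source connected to every source in $T$ by a zero-weight edge. Then by construction $d_{G'}(s^*, v) = \min_{u \in T} d_G(u, v) = d_G(v, T)$ for every $v \in V$, so an approximate SSSP computation on $G'$ from $s^*$ immediately yields approximate MSSP distances on $G$.

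To avoid materializing a vertex of degree $|T|$ on a single machine (which would violate the per-round bandwidth constraint in the \kmm), I would not physically add $s^*$ at all. Instead, I would run the SSSP algorithm of Lemma \ref{lem:SSSP} with its initialization modified so that every vertex $u \in T$ starts with distance $0$ rather than $\infty$. Since each machine $m_j$ already knows $T \cap H(m_j)$ by hypothesis, this re-initialization is purely local. Any relaxation-based SSSP algorithm (in particular a hopset/Bellman-Ford style one of the form assumed in Lemma \ref{lem:SSSP}) is invariant under this change: it simulates SSSP from $s^*$, except that the first hop from $s^*$ to each $u \in T$ is already accounted for by the zero initial distance. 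The approximation factor and round complexity therefore carry over verbatim, giving $\tilde{O}((n/k)/\mathrm{poly}(\epsilon))$ rounds and a $(1+\epsilon)$-approximate distance $\tilde{d}(v)$ with $d(v, T) \le \tilde{d}(v) \le (1+\epsilon)\, d(v, T)$ for every $v$.

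The remaining step is to recover, for each $v$, the source $u^* \in T$ realizing (up to $(1+\epsilon)$) this minimum distance. I would piggyback a source label on every distance message: each vertex maintains a pair $(u, \tilde{d})$ representing ``I am at approximate distance $\tilde{d}$ from source $u \in T$''; initially vertex $u \in T$ holds $(u, 0)$, and whenever a relaxation updates $v$'s tentative distance using an incoming message $(u', \tilde{d}')$ along an edge of weight $w$, $v$ adopts the label $(u', \tilde{d}' + w)$. Since source IDs fit in $O(\log n)$ bits, this at most doubles message sizes and leaves the asymptotic round complexity unchanged. At termination, the label $(u, \tilde{d})$ held by the machine hosting $v$ satisfies $d(v, u) \le \tilde{d} \le (1+\epsilon)\, d(v, T)$ as required.

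The main obstacle is verifying that the SSSP algorithm from \cite{BandyapadhyayArxiv2017} can indeed be initialized with multiple zero-distance sources and can carry an auxiliary $O(\log n)$-bit label without disrupting its hopset construction or bandwidth analysis; this is routine for any relaxation-based algorithm, but it is the one point where we must look inside the black box of Lemma \ref{lem:SSSP} rather than treating it as a pure oracle. Everything else (the reduction, the local re-initialization, and the label propagation) is straightforward given the vertex-centric distribution of $T$.
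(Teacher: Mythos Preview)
The paper does not actually prove this lemma: it is imported verbatim as ``Lemma 4 in \cite{BandyapadhyayArxiv2017}'' and used as a black box, so there is no in-paper proof to compare against. Your super-source reduction to SSSP (with local re-initialization of sources to distance $0$ and label propagation for the witness) is the standard argument and is almost certainly what the cited reference does as well; the proposal is correct.
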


\ifthenelse{\boolean{short}}{}{
\begin{lemma}[Lemma 5 in \cite{BandyapadhyayArxiv2017}]
  \label{lem:ExclusiveMSSP}
  Given a set \(T \subseteq V\) of sources known to the machines (i.e., each machine $m_j$ knows $T \cap H(m_j)$), we can, for any value $0 \le \epsilon \le 1$,
  compute a $(1 + \epsilon)$-approximation to \textsc{ExclusiveMSSP} in $\tilde{O}(1/\mbox{poly}\,(\epsilon) \cdot n/k)$ rounds, w.h.p.
  Specifically, after the algorithm has ended, for each $v \in T$, the machine $m_j$ that hosts $v$ knows a pair
  $(u, \tilde{d}) \in T \setminus \{v\} \times \mathbb{R}^+$, such that $d(v, u) \le \tilde{d} \le (1+\epsilon) \cdot d(v, T \setminus \{v\})$.
\end{lemma}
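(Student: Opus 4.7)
The plan is to reduce \textsc{ExclusiveMSSP} to the plain MSSP primitive of Lemma~\ref{lem:MSSP} by a random two-coloring trick, paying only an $O(\log n)$ overhead so the round complexity remains $\tilde{O}((1/\poly(\epsilon)) \cdot n/k)$.

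In a single \emph{phase}, each machine independently flips a fair coin for every $v \in T$ it hosts, producing a uniformly random two-coloring $\chi\colon T \to \{0,1\}$ with color classes $T_0 \coloneqq \chi^{-1}(0)$ and $T_1 \coloneqq \chi^{-1}(1)$. Since each source is colored by its own host, both $T_0$ and $T_1$ satisfy the ``sources known to the machines'' hypothesis of Lemma~\ref{lem:MSSP}, so we run that lemma twice in parallel, once with source set $T_0$ and once with source set $T_1$; after this, every vertex $v$ knows approximate witness pairs $(u_0^v, \tilde{d}_0^v)$ to $T_0$ and $(u_1^v, \tilde{d}_1^v)$ to $T_1$. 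For each $v \in T$ we take as the phase's candidate for its exclusive nearest source the pair coming from the color class \emph{opposite} to $\chi(v)$; since that class by construction excludes $v$, the witness automatically lies in $T \setminus \{v\}$. We repeat this phase $\Theta(\log n)$ times independently, and each source $v$ retains the candidate with smallest $\tilde{d}$ across all phases.

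For correctness, fix $v \in T$ and let $u^* \in T \setminus \{v\}$ realize $d(v, T \setminus \{v\})$. In any single phase, $\chi(v) \neq \chi(u^*)$ with probability exactly $1/2$; conditional on this ``good'' event, $u^*$ lies in the color class opposite to $\chi(v)$, so Lemma~\ref{lem:MSSP} returns a pair $(u, \tilde{d})$ satisfying
\[
d(v, u) \;\le\; \tilde{d} \;\le\; (1+\epsilon)\, d\!\left(v,\, T \setminus \chi^{-1}(\chi(v))\right) \;\le\; (1+\epsilon)\, d(v, u^*),
\]
which is a valid $(1+\epsilon)$-approximate answer to the exclusive query at $v$. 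Over $c \log n$ independent phases, the probability that \emph{no} phase has $\chi(v) \neq \chi(u^*)$ is $2^{-c \log n} = n^{-c}$, and a union bound over the at most $n$ sources in $T$ gives the required high-probability guarantee for any sufficiently large constant $c$.

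The only real subtlety is ensuring that each recorded witness actually belongs to $T \setminus \{v\}$, which is automatic because the opposite color class by definition excludes $v$; the per-source $\min$-selection across phases is purely local and contributes no communication. Each phase uses two black-box invocations of Lemma~\ref{lem:MSSP}, and we run $O(\log n)$ phases, so the total cost is $O(\log n) \cdot \tilde{O}((1/\poly(\epsilon)) \cdot n/k) = \tilde{O}((1/\poly(\epsilon)) \cdot n/k)$, matching the claimed bound. The main ``hard part'' is genuinely just the observation that random two-coloring with $O(\log n)$ repetitions suffices to decouple each source from its nearest \emph{other} source w.h.p., turning the self-exclusion constraint into a free byproduct of looking at the opposite color class.
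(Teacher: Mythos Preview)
The paper does not actually prove this lemma; it is quoted verbatim as ``Lemma 5 in \cite{BandyapadhyayArxiv2017}'' and used as a black box, so there is no in-paper proof to compare against. Your reduction to Lemma~\ref{lem:MSSP} via random two-coloring with $O(\log n)$ independent repetitions is correct: in each phase the opposite color class is a subset of $T\setminus\{v\}$, so any returned witness is automatically valid, and with probability $1/2$ the true minimizer $u^*$ lands in the opposite class, making the returned $\tilde d$ at most $(1+\epsilon)\,d(v,T\setminus\{v\})$; the $O(\log n)$ repetitions plus a union bound over $|T|\le n$ sources give the w.h.p.\ guarantee, and taking the minimum $\tilde d$ across phases preserves both the lower bound $d(v,u)\le\tilde d$ and the upper bound. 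One small imprecision: you write that after the two MSSP calls ``every vertex $v$ knows approximate witness pairs $(u_0^v,\tilde d_0^v)$ to $T_0$ and $(u_1^v,\tilde d_1^v)$ to $T_1$'', but Lemma~\ref{lem:MSSP} as stated only guarantees output for $v\in V\setminus T$; a vertex $v\in T_0$ is a source in the $T_0$-instance and need not receive a meaningful pair there. This is harmless because you only ever use the pair from the \emph{opposite} class, where $v$ is indeed a non-source, but it would be cleaner to phrase it that way.
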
}

\ifthenelse{\boolean{short}}{
  Using the primitives described above, \cite{BandyapadhyayArxiv2017} show that it is possible to compute approximate radius values efficiently in the \kmm. The algorithm is described here, see the full version \cite{InamdarPPArxiv2018} for details on the implementation of this algorithm.

For any facility or client $v$ and for any integer $i \ge 1$, let $q_i(v)$ denote $|B(v, (1+\epsilon)^i)|$, the size of the neighborhood of 
$v$ within distance $(1+\epsilon)^i$.

\RestyleAlgo{boxruled}  
\begin{algorithm2e}\caption{\textsc{RadiusComputation} Algorithm\label{alg:RC}}
	\textbf{Neighborhood-Size Computation.} Each machine $m_j$ computes $q_i(v)$, for all integers $i \ge 0$ and for all vertices $v \in H(m_j)$.\\
	\textbf{Local Computation.} Each machine $m_j$ computes $\tilde{r}_v$ locally, for all vertices $v \in H(m_j)$. (Recall that $\tilde{r}_v := (1+\epsilon)^{t-1}$ where $t \ge 1$ is the smallest integer for which $\sum_{i = 0}^t q_i(v) \cdot ((1+\epsilon)^{i+1} - (1+\epsilon)^i) > f_v$.)
\end{algorithm2e}
   
Therefore, we get the following lemma the proof of which can be found in Section 4 of \cite{BandyapadhyayArxiv2017} --

\begin{lemma} \label{lem:apxrvalues}
For each facility \(v \in F\) it is possible to compute an approximate radius \(\tilde{r}_v\) in \(\tilde{O}(n/k)\) rounds of the \kmm such that $\frac{r_v}{(1+\epsilon)^2} \le \tilde{r}_v \le (1+\epsilon)^2 r_v$ where \(r_v\) is the actual radius of \(v\) satisfying $f_v = \sum_{u \in B(v, r_v)} (r_v - d(v, u))$.
\end{lemma}

The greedy phase is implemented by discretizing the radius values computed in the first phase which results in \(O(\log_{1 + \epsilon} n)\) distinct categories. Note that in each category, the order in which we process the facilities does not matter as it will only add an extra \((1 + \epsilon)\) factor to the approximation ratio. This reduces the greedy phase to computing a \textit{maximal independent set (MIS)} on a suitable intersection graph for each category \(i\) where the vertices are the facilities in the \(i^{th}\) category and there is an edge between two vertices if they are within distance \(2(1+\epsilon)^i\) of each other.

Finding such an MIS requires \(O(\log n)\) calls to a subroutine that solves MSSP \cite{Thorup2001} and since our implementation of MSSP only returns approximate distances, what we really compute is a relaxed version of an MIS called an \((\epsilon, d)\)-MIS in \cite{BandyapadhyayArxiv2017}.

\begin{definition}[$(\epsilon, d)$-approximate MIS]
  For an edge-weighted graph $G = (V, E)$, and parameters $d, \epsilon > 0$, an $(\epsilon, d)$-approximate MIS is a subset $I \subseteq V$ such that
  \begin{enumerate}
  \item For all distinct vertices $u, v \in I$, $d(u, v) \ge \frac{d}{1+\epsilon}$.
  \item For any $u \in V \setminus I$, there exists a $v \in I$ such that $d(u, v) \le d \cdot (1+\epsilon)$.
  \end{enumerate}
\end{definition}

The work in \cite{BandyapadhyayArxiv2017} gives an algorithm that efficiently computes an approximate MIS of an induced subgraph \(G[W]\) of \(G\) for any vertex set \(W\) in the \kmm.

\begin{lemma} \label{lem:DistdMIS}
  We can find an $(O(\epsilon), d)$-approximate MIS $I$ of $G[W]$ whp in \(\tilde{O}(n/k)\) rounds.
\end{lemma}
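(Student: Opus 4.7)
The plan is to implement a Luby-style randomized MIS algorithm on the conflict graph $H$ with vertex set $W$ in which two vertices $u,v$ are adjacent iff $d_G(u,v) \le d$. The key observation is that each Luby round only requires two kinds of bulk queries on $H$: for a marked subset $S \subseteq W$, identify (i) every $v \in S$ whose nearest \emph{other} element of $S$ is farther than $d$, and (ii) every $v \in W$ lying within distance $d$ of $S$. These are exactly what \textsc{ExclusiveMSSP} and \textsc{MSSP} provide in the $k$-machine model, each in $\tilde{O}(n/k)$ rounds by Lemmas \ref{lem:MSSP} and \ref{lem:ExclusiveMSSP}.

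A single phase proceeds as follows. Each currently active vertex $v \in A \subseteq W$ draws an independent random rank from $[0,1]$ and declares itself marked. We run \textsc{ExclusiveMSSP} from the marked set so each marked $v$ learns the approximate distance to the nearest other marked vertex together with that vertex's rank; $v$ commits to $I$ if either the returned approximate distance exceeds $d$, or the returned rank is larger than its own. We then run \textsc{MSSP} from $I$ so every $v \in A$ learns its approximate distance to $I$, and any $v$ whose approximate distance is at most $d$ is removed from $A$. Repeat until $A = \emptyset$.

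The approximation bound then follows by propagating the $(1+\epsilon)$ factor from the distance queries. If $u,v \in I$ are both committed in some phase, the \textsc{ExclusiveMSSP} call certified an approximate distance greater than $d$, so the true distance is at least $d/(1+\epsilon)$, giving property (1) of the $(O(\epsilon),d)$-approximate MIS definition. If $v \in W \setminus I$ is eliminated because its approximate distance to $I$ is at most $d$, then by the $(1+\epsilon)$-approximation guarantee of \textsc{MSSP} the true distance is at most $d(1+\epsilon)$, giving property (2).

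The main obstacle is establishing that $A$ shrinks geometrically in expectation per phase despite the approximate tie-breaking, so that $O(\log n)$ phases suffice w.h.p. The rank mechanism can be shown to give every vertex a constant probability of being either committed or $H$-adjacent to a committed vertex in each phase, by an argument parallel to the standard Luby analysis on $H$; the $(1+\epsilon)$ slack in distance estimates does not affect this accounting because it is purely combinatorial on $H$-edges (any vertex the algorithm \emph{could} have identified as an $H$-neighbor under exact distances is still identified as such under approximate distances, up to boundary cases at exactly distance $d$ which can be handled by a negligible perturbation of the threshold). Combined with the $\tilde{O}(n/k)$ per-phase cost, the total round complexity is $\tilde{O}(n/k)$.
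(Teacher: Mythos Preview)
Your proposal contains a genuine gap in the independence argument. You have each marked vertex $v$ commit to $I$ if either its \textsc{ExclusiveMSSP} distance exceeds $d$ \emph{or} the rank of its single nearest marked vertex is larger than its own. But \textsc{ExclusiveMSSP} only returns the \emph{nearest} other source, not all sources within distance $d$, so the rank test is only against one $H$-neighbor. This is not the Luby rule (local minimum over \emph{all} $H$-neighbors), and it does not preserve independence. Concretely, take three active vertices $u,v,w$ that are pairwise within distance $d$, with $w$ strictly the closest neighbor of both $u$ and $v$. If $\mathrm{rank}(u)<\mathrm{rank}(w)$ and $\mathrm{rank}(v)<\mathrm{rank}(w)$, then both $u$ and $v$ commit in the same phase even though $d_G(u,v)\le d$; your claimed justification of property~(1) (``the \textsc{ExclusiveMSSP} call certified an approximate distance greater than $d$'') applies only to the first committal criterion, not to the rank-based one. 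So the output set need not satisfy $d(u,v)\ge d/(1+\epsilon)$ for all pairs.

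The paper's algorithm (Algorithm~\ref{alg:DistdMIS}, taken from \cite{BandyapadhyayArxiv2017}) sidesteps this entirely by \emph{never} using a rank comparison. It marks vertices with geometrically increasing probability $2^i/n$ over $O(\log n)$ outer phases (each repeated $O(\log n)$ times), and a marked vertex is added to $I$ \emph{only} if \textsc{ExclusiveMSSP} certifies that no other marked vertex lies within approximate distance $d$. Independence then follows directly from this certification, and progress follows because once $2^i/n$ is on the order of the inverse $H$-degree of a surviving vertex, that vertex is marked in isolation with constant probability. If you want to repair your approach, you would need either a primitive that lets each vertex compare its rank against \emph{all} $H$-neighbors (which a single \textsc{ExclusiveMSSP} call does not provide), or to drop the rank rule and adopt the sparsified-marking idea as the paper does.
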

 }
 {
\subsubsection{Radius Computation}
Using the primitives we described in the previous section, \cite{BandyapadhyayArxiv2017} show that it is possible to compute approximate radius values efficiently in the \kmm by computing neighborhood size estimates along the lines of \cite{Cohen1997, Thorup2001}. A version of the algorithm is described in \ref{alg:RC}. We discuss the implementation of this algorithm in a fair bit of detail because we will need to modify certain aspects when implementing the \textsc{FacLoc} with Penalties algorithm in the \kmm (Section \ref{sec:pen-imp-radcomp}).

For any facility or client $v$ and for any integer $i \ge 1$, let $q_i(v)$ denote $|B(v, (1+\epsilon)^i)|$, the size of the neighborhood of 
$v$ within distance $(1+\epsilon)^i$.
\RestyleAlgo{boxruled}  
\begin{algorithm2e}\caption{\textsc{RadiusComputation} Algorithm\label{alg:RC}}
	\textbf{Neighborhood-Size Computation.} Each machine $m_j$ computes $q_i(v)$, for all integers $i \ge 0$ and for all vertices $v \in H(m_j)$.\\
	\textbf{Local Computation.} Each machine $m_j$ computes $\tilde{r}_v$ locally, for all vertices $v \in H(m_j)$. (Recall that $\tilde{r}_v := (1+\epsilon)^{t-1}$ where $t \ge 1$ is the smallest integer for which $\sum_{i = 0}^t q_i(v) \cdot ((1+\epsilon)^{i+1} - (1+\epsilon)^i) > f_v$.)
\end{algorithm2e}

In Algorithm \ref{alg:RC}, step 2 is just local computation, so we focus on Step 1 which requires the solution to the problem of computing neighborhood sizes.

Cohen's algorithm starts by assigning to each vertex $v$ a rank \(\r(v)\) chosen uniformly from \([0, 1]\). These ranks induce a random permutation of the vertices. To compute the size estimate of a neighborhood, say $B(v, d)$, for a vertex $v$ and real $d > 0$, Cohen's algorithm finds the smallest rank of a vertex in $B(v, d)$. It is then shown (in Section 6, \cite{Cohen1997}) that the expected value of the smallest rank in $B(v, d)$ is $1/(1 + |B(v, d)|)$. Thus, in expectation, the reciprocal of the smallest rank in $B(v, d)$ is (almost) identical to $|B(v, d)|$. To obtain a good estimate of $|B(v, d)|$ with high probability, Cohen simply repeats the above-described procedure independently a bunch of times and shows the following concentration result on the average estimator.

\begin{theorem} \textbf{(Cohen \cite{Cohen1997})} \label{thm:cohen}
	Let $v$ be a vertex and $d > 0$ a real.
	For $1 \le i \le \ell$, let $R_i$ denote the smallest rank of a vertex in $B(v, d)$ obtained in the $i^{th}$ repetition of
	Cohen's neighborhood-size estimation procedure. 
	Let $\hat{R}$ be the average of $R_1, R_2, \ldots, R_\ell$.
	Let $\mu = 1/(1 + |B(v, d)|)$. Then, for any $0 < \epsilon < 1$,
	$$\Pr(|\hat{R} - \mu| \ge \epsilon \mu) = \exp(-\Omega(\epsilon^2 \cdot \ell)).$$
\end{theorem}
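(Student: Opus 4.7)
The plan is to identify each $R_i$ as an i.i.d.\ copy of a $\mathrm{Beta}(1,n)$ random variable with $n := |B(v,d)|$, and then run a Chernoff argument tailored to its moment generating function. Writing $R_i = \min(U_1,\ldots,U_n)$ for i.i.d.\ Uniform$(0,1)$ variables $U_j$ gives the density $n(1-t)^{n-1}$ on $[0,1]$, from which $\E[R_i] = 1/(n+1) = \mu$; hence $\E[\hat R] = \mu$ and the task reduces to proving exponential concentration of the sample mean of $\ell$ i.i.d.\ copies around its mean.

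The key step is an MGF bound tight enough to depend on $\mu$ and not on the range $1$. Using the Beta moments $\E[R_i^k] = k!\,n!/(n+k)!$ one obtains
\[
\E[e^{sR_i}] \;=\; \sum_{k \geq 0}\frac{s^k}{(n+1)(n+2)\cdots(n+k)} \;\le\; \sum_{k \geq 0} (s\mu)^k \;=\; \frac{1}{1-s\mu}, \qquad 0 \le s < 1/\mu,
\]
where the inequality is a termwise comparison exploiting $(n+1)(n+2)\cdots(n+k) \ge (n+1)^k$. Intuitively this says that $R_i$ is sub-exponential on the scale $\mu$: although it formally takes values in $[0,1]$, almost all of its mass lies on a scale of $\mu = 1/(n+1)$. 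Applying the exponential Markov inequality to $\sum_{i=1}^\ell R_i$ and choosing the optimal parameter $s\mu = \epsilon/(1+\epsilon)$ yields
\[
\Pr\bigl(\hat R \ge (1+\epsilon)\mu\bigr) \;\le\; \bigl((1+\epsilon)e^{-\epsilon}\bigr)^\ell \;=\; \exp\!\bigl(-\ell(\epsilon - \ln(1+\epsilon))\bigr) \;=\; \exp(-\Omega(\epsilon^2 \ell)),
\]
using $\epsilon - \ln(1+\epsilon) = \Theta(\epsilon^2)$ for $\epsilon \in (0,1)$.

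For the lower tail $\Pr(\hat R \le (1-\epsilon)\mu)$ I would use the pointwise bound $e^{-x} \le 1 - x + x^2/2$ for $x \ge 0$, which together with $\E[R_i^2] = 2/((n+1)(n+2)) \le 2\mu^2$ gives $\E[e^{-sR_i}] \le 1 - s\mu + s^2\mu^2 \le \exp(s^2\mu^2 - s\mu)$. Running the symmetric Chernoff argument on $\sum_{i=1}^\ell (-R_i)$ and optimizing at $s = \epsilon/(2\mu)$ produces an $\exp(-\Omega(\epsilon^2 \ell))$ bound, and a union bound over the two tails yields the theorem.

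The main obstacle, and the reason one cannot simply plug $R_i$ into an off-the-shelf inequality, is that $R_i$ lives in $[0,1]$ while its mean $\mu = 1/(n+1)$ may be as small as $1/n$. A direct use of Hoeffding effectively replaces the mean-scale $\mu$ by the range $1$ in the exponent, losing a multiplicative factor of $1/\mu$ that can be arbitrarily large; even Bernstein, with $\operatorname{Var}(R_i) = \Theta(\mu^2)$, still carries a range term in the denominator of the exponent. The Beta-specific MGF comparison above is precisely what replaces the range by $\mu$ in both tails and delivers the mean-free constant $\Omega(\epsilon^2\ell)$ promised by the statement.
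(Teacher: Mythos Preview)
Your proof is correct. The paper itself does not give a proof of this statement: it is quoted as a result of Cohen \cite{Cohen1997} and used as a black box, so there is no ``paper's proof'' to compare against directly.

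That said, your argument is a clean, self-contained derivation that Cohen's original paper also does not spell out in quite this form. The identification $R_i \sim \mathrm{Beta}(1,n)$ and the termwise MGF bound $\E[e^{sR_i}] \le 1/(1-s\mu)$ via $(n+1)\cdots(n+k) \ge (n+1)^k$ are exactly the right way to extract a sub-exponential tail on the scale $\mu$ rather than on the range $[0,1]$; your closing paragraph correctly diagnoses why Hoeffding or Bernstein would lose the required $\mu$-independence. The lower-tail treatment using $e^{-x} \le 1 - x + x^2/2$ together with $\E[R_i^2] \le 2\mu^2$ is also sound and yields the clean $\exp(-\epsilon^2\ell/4)$ bound. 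Cohen's original analysis (Section~6 of \cite{Cohen1997}) proceeds somewhat differently, working with the estimator $1/\hat R$ and controlling its relative error, but the underlying concentration phenomenon is the same and your direct MGF route is arguably more transparent.
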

This theorem implies that $\ell = O(\log n/\epsilon^2)$ repetitions suffice for obtaining $(1 \pm \epsilon)$-factor estimates w.h.p.~of the sizes of $B(v, d)$ for all $v$ and all $d$. 

In \cite{BandyapadhyayArxiv2017}, the authors show that Algorithm \ref{alg:CohenEstimates} can simulate Cohen's neighborhood size estimation framework in the \kmm in \(\tilde{O}(n/k)\) rounds

\RestyleAlgo{boxruled}
\begin{algorithm2e}\caption{\textsc{NbdSizeEstimates}\((G, \epsilon)\)\label{alg:CohenEstimates}}
	$\epsilon' := \epsilon/(\epsilon + 4)$; $t = \lceil 2 \log_{1+\epsilon'} n \rceil$; $\ell := \lceil c \log n/(\epsilon')^2 \rceil$\\
  \For{\(j := 1, \ldots, \ell\)} {
	  \textbf{Local Computation.} Each machine $m_j$ picks a rank $\r(v)$, for each vertex $v \in H(m_j)$,
	  chosen uniformly at random from $[0, 1]$. Machine $m_j$ then rounds $\r(v)$ down to the closest 
	  $(1+\epsilon')^i/n^2$ for integer $i \ge 0$\label{alg2:ChooseRanks} \\
    \For{$i := 0, 1, \ldots, t-1$}{
      $T_i := \{v \in V \mid \r(v) = (1+\epsilon')^i/n^2\}$\\
	Compute a \((1+\epsilon)\)-approximate solution to MSSP using \(T_i\) as the set of sources \label{alg2:MSSP}; let $\tilde{d}(v, T_i)$ denote the computed approximate distances\\
	\textbf{Local Computation.} Machine $m_j$ stores $\tilde{d}(v, T_i)$ for each \(v \in H(m_j)\)\\
      }
  }
\end{algorithm2e}

Therefore, we get the following lemma the proof of which can be found in Section 4 of \cite{BandyapadhyayArxiv2017}.

\begin{lemma} \label{lem:apxrvalues}
For each facility \(v \in F\) it is possible to compute an approximate radius \(\tilde{r}_v\) in \(\tilde{O}(n/k)\) rounds of the \kmm such that $\frac{r_v}{(1+\epsilon)^2} \le \tilde{r}_v \le (1+\epsilon)^2 r_v$ where \(r_v\) is the actual radius of \(v\) satisfying $f_v = \sum_{u \in B(v, r_v)} (r_v - d(v, u))$.
\end{lemma}

\subsubsection{Greedy Phase}
The greedy phase is implemented by discretizing the radius values computed in the first phase which results in \(O(\log_{1 + \epsilon} n)\) distinct categories. Note that in each category, the order in which we process the facilities does not matter as it will only add an extra \((1 + \epsilon)\) factor to the approximation ratio. This reduces the greedy phase to computing a \textit{maximal independent set (MIS)} on a suitable intersection graph for each category \(i\) where the vertices are the facilities in the \(i^{th}\) category and the there is an edge between two vertices if they are within distance \(2(1+\epsilon)^i\) of each other.

Finding such an MIS requires \(O(\log n)\) calls to a subroutine that solves MSSP \cite{Thorup2001} and since our implementation of MSSP only returns approximate distances, what we really compute is a relaxed version of an MIS called an \((\epsilon, d)\)-MIS in \cite{BandyapadhyayArxiv2017}.

\RestyleAlgo{boxruled}
\begin{algorithm2e}\caption{\textsc{ApproximateMIS}\((G, W, d, \epsilon)\)\label{alg:DistdMIS}}
  Each machine $m_j$ initializes \(U_j := \emptyset\) \\
  \tcc{Let $W_j$ denote $W \cap H(m_j)$.}
  \For{$i := 0, 1, \ldots, \lceil \log n \rceil$}{
    \For{$\lceil c \log n \rceil$ iterations}{
      Each machine \(m_j\) marks each vertex $v \in W_j$ with probability $2^i/n$ \\
      \tcc{Let \(R_j \subset W_j\) denote the set of marked vertices hosted by $m_j$, let $R := \cup_{j=1}^k R_j$} 
      Solve an instance of the \textsc{ExclusiveMSSP} problem using $R$ as the set of sources (see Lemma \ref{lem:ExclusiveMSSP}) to obtain $(1+\epsilon)$-approximate distances $\tilde{d}$ \label{alg1:MSSP1}\\
      Each machine $m_j$ computes $T_j := \{v \in R_j \mid \tilde{d}(v, R \setminus \{v\}) > d\}$ \\
      Each Machine $m_j$ sets $U_j := U_j \cup T_j$ \\
      \tcc{Let $T := \cup_{j=1}^k T_j$}
      Solve an instance of the \textsc{MSSP} problem using $T$ as the set of sources (see Lemma \ref{lem:MSSP}) to obtain $(1+\epsilon)$-approximate distances $\tilde{d}$ \label{alg1:MSSP2} \\
      Each machine $m_j$ computes $Q_j = \{v \in W_j \mid \tilde{d}(v, T) \le (1 + \epsilon)d\}$ \\
      Each machine $m_j$ sets $W_j := W_j \setminus (T_j \cup Q_j)$
    }
  }
  \textbf{return} \(U := \cup_{j = 1}^k U_j\)
\end{algorithm2e}

\begin{definition}[$(\epsilon, d)$-approximate MIS]
  For an edge-weighted graph $G = (V, E)$, and parameters $d, \epsilon > 0$, an $(\epsilon, d)$-approximate MIS is a subset $I \subseteq V$ such that
  \begin{enumerate}
  \item For all distinct vertices $u, v \in I$, $d(u, v) \ge \frac{d}{1+\epsilon}$.
  \item For any $u \in V \setminus I$, there exists a $v \in I$ such that $d(u, v) \le d \cdot (1+\epsilon)$.
  \end{enumerate}
\end{definition}

The work in \cite{BandyapadhyayArxiv2017} gives an algorithm that efficiently computes an approximate MIS in the \kmm which we describe in Algorithm \ref{alg:DistdMIS}.

\begin{lemma} \label{lem:DistdMIS}
  Algorithm \ref{alg:DistdMIS} finds an $(O(\epsilon), d)$-approximate MIS $I$ of $G[W]$ whp in \(\tilde{O}(n/k)\) rounds.
\end{lemma}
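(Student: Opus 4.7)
The plan is to verify three things: (i) the round complexity is $\tilde{O}(n/k)$, (ii) the output set $U$ satisfies both defining properties of an $(O(\epsilon),d)$-approximate MIS, and (iii) after $O(\log^2 n)$ inner iterations the surviving set $W$ is empty with high probability. Items (i) and (ii) are short; the bulk of the work is in (iii), which is the Luby-style progress argument adapted to approximate distances.

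For the round complexity, the outer loop runs $O(\log n)$ times and the inner loop $O(\log n)$ times, so there are $O(\log^2 n)$ iterations. Each iteration invokes one call to \textsc{ExclusiveMSSP} (line \ref{alg1:MSSP1}) and one call to \textsc{MSSP} (line \ref{alg1:MSSP2}), each of which costs $\tilde{O}(n/k)$ rounds by Lemmas \ref{lem:ExclusiveMSSP} and \ref{lem:MSSP}. All other steps are local, so the total round complexity is $\tilde{O}(n/k)$.

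For correctness, first consider the well-spacedness property. A vertex $v$ is added to $U$ only if the approximate distance $\tilde d(v, R\setminus\{v\})$ returned by \textsc{ExclusiveMSSP} exceeds $d$, hence the true distance to any other currently marked vertex is at least $d/(1+\epsilon)$. Since vertices added in earlier iterations are no longer in $W$ (they are in $T_j \subseteq U$) but their locations are preserved, we must also argue across iterations: any previously-added $u \in U$ will either itself be re-marked in some later round (and then the same $\tilde d$-check applies and would reject a close $v$) or, more directly, the fact that $Q_j$ also removes all $v \in W$ with $\tilde d(v, T) \le (1+\epsilon)d$ guarantees that no future survivor is within true distance $d/(1+\epsilon)$ of any element of $U$. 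For the domination property, when the algorithm terminates every $v \in V\setminus U$ was removed from $W$ through membership in some $Q_j$, and hence the true distance from $v$ to some $u\in U$ is at most $(1+\epsilon)^2 d = (1+O(\epsilon))d$, giving domination up to a constant blow-up of $\epsilon$.

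The main obstacle is showing termination: for every residual set $W$, some value $i \in \{0,1,\ldots,\lceil\log n\rceil\}$ yields marking probability $2^i/n$ close to $1/\Delta_W$, where $\Delta_W$ is the maximum number of $W$-vertices within distance $(1+\epsilon)d$ of any $v \in W$ (the ``degree'' in the implicit conflict graph at scale $d$). The plan is to adapt the standard Luby analysis: at this ``right'' $i$, for each $v \in W$ the probability that $v$ is marked and no other marked vertex lies within distance $d$ (so that $v$ joins $U$) is at least a constant, and consequently the expected number of $W$-vertices whose distance-$(1+\epsilon)d$ neighborhood contains a newly added vertex (hence who get placed in $Q_j$) is a constant fraction of $|W|$. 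The $\lceil c\log n\rceil$ inner repetitions at the correct scale $i$ then drive $|W|$ down by a factor of, say, $3/4$ in expectation each time, and a standard Chernoff/union-bound argument over the $O(\log n)$ outer scales and all inputs shows that $W$ becomes empty within $O(\log^2 n)$ iterations w.h.p. This is exactly the Luby-style analysis used in \cite{BandyapadhyayArxiv2017}; the only delicate step is that because \textsc{MSSP}/\textsc{ExclusiveMSSP} return $(1+\epsilon)$-approximate distances, the thresholds $d$ and $(1+\epsilon)d$ are slightly wider than in the exact metric, but this only inflates the geometric ``degree'' bound by a constant factor and hence only affects the hidden constants. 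Combining these pieces yields the claimed $(O(\epsilon),d)$-approximate MIS in $\tilde{O}(n/k)$ rounds w.h.p.
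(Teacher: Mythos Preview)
The paper itself does not prove this lemma: it imports both Algorithm~\ref{alg:DistdMIS} and Lemma~\ref{lem:DistdMIS} from \cite{BandyapadhyayArxiv2017} and states the lemma without proof. So there is no in-paper argument to compare your attempt against; your sketch is already more than the paper offers.

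Your round count and the two structural properties of the output are essentially right. Two small remarks: in the domination bound you actually get $d(v,u)\le \tilde d\le (1+\epsilon)d$, not $(1+\epsilon)^2 d$, since the MSSP guarantee gives $d(v,u)\le\tilde d$ directly; and the aside about a previously-added $u\in U$ being ``re-marked in some later round'' cannot happen because $u$ has already been removed from $W$, though you then give the correct cross-iteration argument via $Q_j$.

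Your termination sketch contains a genuine error. You claim that at the scale $i$ with $2^i/n\approx 1/\Delta_W$, each $v\in W$ is marked and has no other marked vertex within distance $d$ with probability at least a constant. That event has probability $p(1-p)^{\deg(v)}$, which for $p\approx 1/\Delta_W$ and $\deg(v)\approx\Delta_W$ is $\Theta(1/\Delta_W)$, not $\Omega(1)$. Indeed $|T|\le |R|$ and $\mathbb{E}[|R|]=p|W|\approx |W|/\Delta_W$, so no vertex can join $U$ with constant probability at that scale. What the analysis actually needs is that $v$ is \emph{removed} (lands in $T\cup Q$) with constant probability, i.e., that some vertex in $v$'s $(1{+}\epsilon)d$-neighborhood joins $T$; the argument in \cite{BandyapadhyayArxiv2017} establishes this (equivalently, the invariant that after the $i$-th outer phase every surviving vertex has conflict-degree at most roughly $n/2^{i}$), and it is not the one-line consequence you state. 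Since you too ultimately defer to \cite{BandyapadhyayArxiv2017}, the overall conclusion is fine, but the intermediate claim as written is false and should be replaced by the removal-probability statement.
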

}

We are now ready to describe the \kmm implementation of Algorithm \ref{alg:outlierSeq}.

\RestyleAlgo{boxruled}
\begin{algorithm2e}\caption{\textsc{RobustFacLocDist}\((F, C, p)\)\label{alg:outlierskmm}}
  \tcc{Recall $\ell \coloneqq |C| - p$}
  \For{$t = 1, \ldots, O(\log n)$}{
    Let $i_e \in F$ be a most expensive facility from the facilities with opening costs in the range $\left[(1+\epsilon)^t, (1+\epsilon)^{t+1}\right)$ \label{lin:i_e-kmm} \\
    Modify the facility opening costs to be
    $$f'_i = \begin{cases}
      +\infty &\text{ if $f_i > f_{i_e}$}
      \\0 &\text{ if $i = i_e$}
      \\f_i &\text{ otherwise}
    \end{cases}$$ \\
    \tcc{Radius Computation Phase:}
    Call the \textsc{RadiusComputation} algorithm (Algorithm \ref{alg:RC}) to compute  approximate radii. \label{alg4:radius-computation}\\
    
    \tcc{Greedy Phase:}
    Let \(F' = \emptyset\),  \(C' = \emptyset\), \(O' = C\) \\
    \For{\(i = 0, 1, 2, \dots\)}{ \label{alg4:for}
      Let \(W\) be the set of vertices \(w \in F\) across all machines with \(\tilde{r}_w = \tilde{r} = (1 + \epsilon)^i\) \\
      Using Lemma \ref{lem:MSSP}, remove all vertices from \(W\) within approximate distance \(2(1 + \epsilon)^3\cdot \tilde{r}\) from \(F'\) \label{alg4:remove-close-facs}\\
      \(I \leftarrow \textsc{ApproximateMIS}(G, W, 2(1 + \epsilon)^3 \cdot \tilde{r}, \epsilon)\) \\
      \(F' \leftarrow F' \cup I\) \\
      Using Lemma \ref{lem:MSSP}, move from \(O'\) to \(C'\) all vertices that are within distance \((1 + \epsilon) \cdot \tilde{r}\) from \(F_i\), the set of facilities processed up to iteration \(i\) \label{alg4:remove-clients}\\
      \textbf{if} $|O'| \le \ell$ \textbf{then break} \label{alg4:break} \\
    }\label{alg4:endfor}
    
    \tcc{Outlier Determination Phase:}
    \If{$|O'| > \ell$} {
      Using Lemma \ref{lem:MSSP} find $O_1 \subseteq O'$, a set of $|O'| - \ell$ clients that are closest to facilities in $F'$. \\
      $C' \gets C' \cup O_1$, \quad $O' \gets O' \setminus O_1$.
    }
    \ElseIf{$|O'| < \ell$} {
      Using Lemma \ref{lem:MSSP} find $O_2 \subseteq C \setminus O'$, a set of $(\ell - |O'|)$ clients that are farthest away from facilities in $F'$ \\
      $C' \gets C' \setminus O_2$, \quad $O' \gets O' \cup O_2$.
    }
    Let $(C'_t, F'_t) \gets (C', F')$
  }
  \Return $(C'_t, F'_t)$ with a minimum cost \label{lin:minimum-cost}
\end{algorithm2e}

Our \kmm implementation of the Robust \textsc{FacLoc} algorithm is summarized in Algorithm \ref{alg:outlierskmm}. The correctness proof is similar to that of Algorithm \ref{alg:outlierSeq} but is complicated by the fact that we compute \((1 + \epsilon)\)-approximate distances instead of exact distances. Again, as in the analysis of the sequential algorithm, we abuse the notation so that (i) $(C', F')$ refers to a minimum-cost solution returned by the algorithm, (ii) $i_e$ refers to the facility chosen in the line \ref{lin:i_e-kmm} of the algorithm, and (iii) the modified instance with original facility costs. This analysis appears in the \ifthenelse{\boolean{short}}{full version \cite{InamdarPPArxiv2018}}{next section}, and as a result we get the following theorem.

\begin{theorem} \label{thm:kmm-OutlierGuarantee}
  In \(\tilde{O}(\text{poly}(1/\epsilon) \cdot n/k)\) rounds, whp, Algorithm \ref{alg:outlierskmm} finds a factor \(5 + O(\epsilon)\) approximate solution \((C', F')\) to the Robust \textsc{FacLoc} problem for any constant \(\epsilon > 0\).
\end{theorem}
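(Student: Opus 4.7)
The plan has two parts: bounding the round complexity and establishing the $5+O(\epsilon)$ approximation guarantee. For round complexity, I would account first for the outer loop of $O(\log n)$ iterations over the guesses of $i_e$. Within each such iteration the \textsc{RadiusComputation} call on line \ref{alg4:radius-computation} costs $\tilde{O}(n/k)$ rounds by Lemma \ref{lem:apxrvalues}; the greedy \textbf{for}-loop on lines \ref{alg4:for}--\ref{alg4:endfor} iterates over at most $O(\log_{1+\epsilon} n) = O(\log n/\epsilon)$ discretized radius categories, and each iteration uses $O(1)$ MSSP calls (Lemma \ref{lem:MSSP}) and one \textsc{ApproximateMIS} call (Lemma \ref{lem:DistdMIS}), each costing $\tilde{O}(n/k)$; the outlier determination phase requires one MSSP call together with a selection of the extremal $||O'|-\ell|$ clients, achievable by standard aggregation in $\tilde{O}(n/k)$ rounds; and finally line \ref{lin:minimum-cost} requires one MSSP-based cost evaluation per candidate. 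Multiplying these and taking a union bound over the $\mathrm{poly}(n)$ subroutine invocations gives the claimed $\tilde{O}(\mathrm{poly}(1/\epsilon)\cdot n/k)$ bound with high probability.

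For the approximation guarantee, I would mirror the primal-dual argument used to prove Theorem \ref{thm:outlier-theorem}, tracking how each approximation primitive inflates the constants. The three sources of slack are: (i) approximate radii with $r_v/(1+\epsilon)^2 \le \tilde{r}_v \le (1+\epsilon)^2 r_v$ from Lemma \ref{lem:apxrvalues}; (ii) $(1+\epsilon)$-approximate MSSP distances used both to cover clients (line \ref{alg4:remove-clients}) and to filter nearby facilities (line \ref{alg4:remove-close-facs}); and (iii) the $(\epsilon,d)$-approximate MIS, which only guarantees pairwise distance $\ge 2(1+\epsilon)^{3}\tilde{r}/(1+\epsilon) = 2(1+\epsilon)^{2}\tilde{r}$ among selected facilities. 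I would redefine the dual variables $v_j, w_{ij}, \q$ using the approximate radii, and re-derive the analogs of Claims \ref{cl:apx-1}--\ref{cl:apx-4} and Lemma \ref{lem:outlier-lemma}. In each claim the constants $2$ and $3$ get replaced by $2(1+\epsilon)^{O(1)}$ and $3(1+\epsilon)^{O(1)}$ because every use of the triangle inequality combined with MIS-distance or radius comparisons picks up a bounded number of $(1+\epsilon)$ factors. This yields a $\bigl(3+O(\epsilon)\bigr)$-approximation to the modified instance plus the additive $f_{i_e}$ term. Finally, applying Corollary \ref{cor:modified-cost} with parameters $\alpha=1+\epsilon$, $\beta=3+O(\epsilon)$, $\gamma=1$ gives total approximation $\beta+\alpha(\gamma+1) = 5+O(\epsilon)$.

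The main obstacle will be establishing dual feasibility in the analog of Case 2 of the sequential feasibility lemma, namely when the greedy \textbf{for}-loop terminates via the break on line \ref{alg4:break}. In the sequential analysis, feasibility of the constraint $v_j \le c_{ij}+w_{ij}$ for $j \in O'$ rests on the strict relationship $v'_j > r_{i^*}$ for $j\in O''$ where $i^*$ is the last processed facility, together with $\q \le r_{i^*}$. With approximate radii and approximate distances, both inequalities acquire $(1+\epsilon)^{O(1)}$ slack that can point in opposing directions, so the naive choice $\q := \max_{j\in C'} v'_j$ may violate the dual constraint by a multiplicative $(1+\epsilon)^{O(1)}$ factor. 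Resolving this requires carefully rescaling $\q$ downward by an appropriate $(1+\epsilon)^{O(1)}$ factor so that the dual constraint \eqref{constr:flld-alpha} holds exactly, while verifying that this rescaled $\q$ still allows the $-\ell\q$ term in the dual objective to cancel the outlier contributions up to a bounded multiplicative loss. Once this bookkeeping is done, the remaining charging arguments proceed essentially as in the sequential proof.
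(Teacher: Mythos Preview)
Your proposal is correct and follows essentially the same approach as the paper. The paper's proof likewise splits into the round-complexity accounting (outer $O(\log n)$ loop, $\tilde{O}(n/k)$ radius computation, $O(\log n)$ radius categories each costing $\tilde{O}(n/k)$) and a primal--dual analysis that tracks $(1+\epsilon)^{O(1)}$ blow-ups through analogs of Claims~\ref{cl:apx-1}--\ref{cl:apx-4}, culminating in $\cost_e(C',F') \le 3(1+\epsilon)^8 \cdot \cost_e(C^*_e,F^*_e) + f_{i_e}$ and then Corollary~\ref{cor:modified-cost}. You have also correctly anticipated the paper's key maneuver: it sets $\q \coloneqq \max_{j\in C'} v'_j / (1+\epsilon)^4$ and $v_j \coloneqq v'_j/(1+\epsilon)^4$ for $j \in C'$, exactly the downward rescaling you describe to restore dual feasibility in the break case. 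One small clarification: the paper defines $w_{ij}$ and $v'_j$ using the \emph{exact} radii $r_i$ (which exist mathematically even though the algorithm only computes $\tilde{r}_i$), not the approximate ones; this ensures constraint~\eqref{constr:fllpd-beta} holds exactly, while the approximate radii enter only through the algorithm's behavior and are related to the exact ones via Lemma~\ref{lem:apxrvalues}.
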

\ifthenelse{\boolean{short}}{}{
  \begin{proof}
    There are \(O(\log_{(1 + \epsilon)} \frac{f_{\max}}{f_{\min}}) = O(\log n)\) iterations of the outer for loop, where a facility with the highest opening cost from the range $\left[ (1+\epsilon)^t, (1+\epsilon)^{t+1} \right)$. The guess can be broadcast to all the machines, and they can modify their part of the instance appropriately (without actually removing the facilities from the metric graph). This extra factor is absorbed by the tilde notation, provided that each iteration of the for loop takes $\tilde{O}(n/k)$ rounds. We can also estimate the cost of a solution within a factor of $(1+O(\epsilon))$ factor in $\tilde{O}(n/k)$ rounds -- the details can be found in \cite{BandyapadhyayArxiv2017}. Since there are $O(\log n)$ candidate solutions to find a minimum-cost solution from, in line \ref{lin:minimum-cost}, this step can also be implemented in $\tilde{O}(n/k)$ rounds. 
    
    Each iteration of the for loop \ref{alg:outlierskmm} consists of two phases namely, the Radius Computation and Greedy Phases. We bound the running time of both these phases separately. By Lemma \ref{lem:apxrvalues} we know that the radius computation phase of Algorithm \ref{alg:outlierskmm} requires \(\tilde{O}(n/k)\) rounds. In the for loop on line \ref{alg4:for} there are at most $O(\log_{1 + \epsilon}{nN}) = O(\log nN) = O(\log n)$ possible values of \(i\) and hence at most $O(\log n)$ iterations (where \(N = \mbox{poly}(n)\) is the largest edge weight). Each individual step in the greedy phase of Algorithm \ref{alg:outlierskmm} takes \(\tilde{O}(n/k)\) rounds therefore we conclude that the overall running time is \(\tilde{O}(n/k)\) rounds.The proof of the approximation guarantee appears in the \ifthenelse{\boolean{short}}{full version \cite{InamdarPPArxiv2018}}{next section}.
  \end{proof}

  \subsubsection{Analysis of the Algorithm}

  Similar to the sequential algorithm analysis, we analyze the cost of the corresponding costlier solution $(\tilde{C}', F')$. In order to alleviate excessive notation, we will henceforth refer to the solution $(\tilde{C}', F')$ as $(C', F')$ and $\tilde{O}'$ as $O'$. We now restate the standard primal and dual for the Robust Facility Location problem.

  Let \(r_i\) be the radius value of \(i\) satisfying $f_i = \sum_{j \in B(i, r_i)} (r_i - c_{i j})$ and let \(\tilde{r}_i\) be the approximate radius value of \(i\) computed during Algorithm \ref{alg:outlierskmm}
  
  First, we construct a feasible dual solution $(v, w, \q)$. For a facility $i \in F$ and client $j \in C$, let  $w_{ij} \coloneqq \max\{0, r_i - c_{ij}\}$. Let $\q \coloneqq \max_{j \in C'} v'_{j}/(1 + \epsilon)^4$ (recall that  $v'_j \coloneqq \min_{i \in F} \max\{r_i, c_{ij}\} = \min_{i \in F} c_{ij} + w_{ij}$). Now, for a client $j \in C$, define $v_j$ as follows:
  $$v_j = \begin{cases}
    v'_j/(1 + \epsilon)^4 &\text{if } j \in C'
    \\\q &\text{if } j \in O'
  \end{cases}$$

  \begin{claim} \label{cl:apdx-bound_vj}
    If a client \(j \in C''\) then \(v_j' \le (1 + \epsilon)^2 \tilde{r}_i\) and if a client \(j \in O''\) then \(v_j' \ge (1 + \epsilon)^{-2} \tilde{r}_i\) where \(i\) is the last iteration of the for loop (lines \ref{alg4:for}-\ref{alg4:endfor})
  \end{claim}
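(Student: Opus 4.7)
The plan is to mimic the proof of Claim~\ref{claim:bound_vj} from the sequential setting, carefully tracking the extra $(1+\epsilon)$ slack introduced by the approximate MSSP distances (Lemma~\ref{lem:MSSP}) together with the $(1+\epsilon)^2$-approximate radii (Lemma~\ref{lem:apxrvalues}). Throughout I will use that the iteration index $i$ in the greedy for-loop corresponds to threshold $\tilde{r}_i = (1+\epsilon)^i$, that facilities are processed in non-decreasing order of $\tilde{r}_w$, and that $F_{i'} \subseteq F_i$ whenever $i' \le i$.

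For the first statement, suppose $j \in C''$. Then $j$ was moved from $O'$ to $C'$ at some iteration $i' \le i$ in line~\ref{alg4:remove-clients}. That step is an MSSP call with source set $F_{i'}$, and for $j$ it returned a witness pair $(w^*, \tilde{d})$ with $w^* \in F_{i'}$ and $\tilde{d} \le (1+\epsilon)\tilde{r}_{i'}$. By Lemma~\ref{lem:MSSP}, $c_{w^*j} = d(j,w^*) \le \tilde{d} \le (1+\epsilon)\tilde{r}_{i'} \le (1+\epsilon)\tilde{r}_i$. Moreover, since $w^*$ was already processed by iteration $i'$, we have $\tilde{r}_{w^*} \le \tilde{r}_{i'} \le \tilde{r}_i$, and Lemma~\ref{lem:apxrvalues} yields $r_{w^*} \le (1+\epsilon)^2 \tilde{r}_{w^*} \le (1+\epsilon)^2 \tilde{r}_i$. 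Hence $v'_j \le \max\{r_{w^*}, c_{w^*j}\} \le (1+\epsilon)^2 \tilde{r}_i$.

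For the second statement, suppose $j \in O''$. I want to show $\max\{r_w, c_{wj}\} \ge (1+\epsilon)^{-2}\tilde{r}_i$ for every facility $w \in F$, which immediately gives $v'_j \ge (1+\epsilon)^{-2}\tilde{r}_i$. Split into two cases. If $w \in F_i$, then at the last iteration $i$ the value $\tilde{d}(j, F_i)$ returned by MSSP must have exceeded $(1+\epsilon)\tilde{r}_i$ (otherwise $j$ would have been moved to $C'$), and the upper-bound direction of Lemma~\ref{lem:MSSP} gives $d(j, F_i) \ge \tilde{d}(j, F_i)/(1+\epsilon) > \tilde{r}_i$; in particular $c_{wj} \ge d(j, F_i) > \tilde{r}_i \ge (1+\epsilon)^{-2}\tilde{r}_i$. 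If instead $w \notin F_i$, then by the processing order $\tilde{r}_w > \tilde{r}_i$, so Lemma~\ref{lem:apxrvalues} gives $r_w \ge \tilde{r}_w/(1+\epsilon)^2 > (1+\epsilon)^{-2}\tilde{r}_i$. Taking the minimum over $w$ concludes the proof.

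The main obstacle is being scrupulous about which direction of the MSSP approximation inequality is being invoked in each case: Part~1 exploits the witness lower bound $d(j, w^*) \le \tilde{d}$ to exhibit a concrete good facility, whereas Part~2 exploits the upper bound $\tilde{d} \le (1+\epsilon)\,d(j, F_i)$ to convert a lower bound on the approximate distance into a lower bound on the true distance. A secondary subtlety is preventing $\tilde{r}_{w^*}$ from silently exceeding $\tilde{r}_i$ in Part~1, which is avoided precisely because facilities are processed in non-decreasing order of $\tilde{r}_w$ and $w^* \in F_{i'} \subseteq F_i$.
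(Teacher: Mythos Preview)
Your proof is correct and follows essentially the same approach as the paper's: exhibiting a witness facility $w^*\in F_{i'}$ for the $C''$ case and splitting over $w\in F_i$ versus $w\notin F_i$ for the $O''$ case, in each instance invoking the appropriate direction of the MSSP guarantee together with the radius approximation $r_w/(1+\epsilon)^2\le\tilde r_w\le(1+\epsilon)^2 r_w$. Your write-up is in fact a bit cleaner than the paper's, since you make explicit which side of the MSSP inequality is used where, whereas the paper folds both cases of Part~2 into a single chain of inequalities.
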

  \begin{proof}
    If \(j \in C''\) then it must be added to \(C_{i'}\) for some iteration \(i' \le i\). Let us assume wlog that \(j\) was added to \(C_i\). Therefore, there must be some \(i' \in F_i\) such that \(j \in B(i', (1 + \epsilon) \tilde{r}_i)\). This means that \(c_{i'j} \le  (1 + \epsilon) \tilde{r}_i\) 
    \begin{align*}
      v'_j &= \min_{i \in F}{\max\{r_{i}, c_{ij}\}} \le \max\{r_{i'}, c_{i'j}\} \\
           &\le \max\{(1 + \epsilon)^{2} \tilde{r}_{i'}, (1 + \epsilon) \tilde{r}_{i'}\} \\
           &\le (1 + \epsilon)^2 \tilde{r}_i \tag{since we process facilities in increasing value of \(\tilde{r}\)}
    \end{align*}
    
    Since we compute approximate shortest paths, if client \(j\) is not added to any \(C_i\) then for all \(i' \in F_i\), \(j \notin B(i', \tilde{r}_i)\) (otherwise we would add \(j\) to \(C_i\)). Therefore if \(j \in O''\), for all \(i' \in F_i\), \(c_{i'j} > \tilde{r}_i\). So we have,
    \[v'_j = \min_{i \in F}{\max\{r_{i}, c_{ij}\}} \ge (1 + \epsilon)^{-2} \min_{i' \in F_i}{\max\{r_{i'}, c_{i'j}\}}\]
    
    Because for facilities \(i'' \in F \setminus F_i\), \(r_{i''} \ge (1 + \epsilon)^{-2} \tilde{r}_{i''} \ge (1 + \epsilon)^{-2} \tilde{r}_i\) as we process facilities in increasing order of \(\tilde{r}\). Therefore,
    \begin{align*}
      v'_j &\ge (1 + \epsilon)^{-2} \min_{i' \in F_i}{\max\{r_{i'}, \tilde{r}_i\}} \\
           &\ge (1 + \epsilon)^{-2} \min_{i' \in F_i}{\max\{(1 + \epsilon)^{-2}\tilde{r}_{i'}, \tilde{r}_i\}} \\
           &= (1 + \epsilon)^{-2} \tilde{r}_{i}
    \end{align*}
  \end{proof}

  \begin{lemma}
    The solution $(v, w, \q)$ is a feasible solution to the dual LP relaxation \ref{fig:outlier-primal-dual}.
  \end{lemma}
  \begin{proof}
    Note that constraints \ref{constr:fllpd-beta}, \ref{constr:fllpd-alphaub}, and \ref{constr:fllp-integral} of the dual are satisfied by construction and so is constraint \ref{constr:flld-alpha} for clients $j \in C'$. Therefore, in order to show that the solution $(v, w, \q)$ is feasible, we have to show that constraint \ref{constr:flld-alpha} is satisfied for all clients $j \in O'$. To this end, we consider the following two cases.
    
    \textbf{Case 1.} We enter the outlier determination phase after iterating over all facilities in $F$. Therefore, we have $|O''| > \ell$. This means that we identified a set \(O_1 \subset O''\) of size \(|O''| - \ell\) to be marked as non-outliers.
    
    As we iterate over all the facilities, by Claim \ref{cl:apdx-bound_vj}, for \(j \in O''\), we get $v'_j \ge \max_{i \in F}{\tilde{r}_i/(1 + \epsilon)^2}$.
    
    We put in \(O_1\) the clients from \(O''\) with smallest \(v'_j\)-value \footnote{Note this is not what the algorithm does but we have argued that the algorithm's solution is better than the solution we are analyzing}. This means that for all \(j \in O'\):
    \begin{align*}
      v'_j &\ge \max_{j \in O_1}{v'_j} \\
           &= \max\LR{\max_{j \in C''}{\frac{v'_j}{(1 + \epsilon)^4}}, \max_{j \in O_1}{v'_j}} \tag{\(\max_{j \in C''} {v'_j} \le \max_{i \in F} {(1 + \epsilon)^2\tilde{r}_i}\) by Claim \ref{cl:apdx-bound_vj}} \\
           &\ge \frac{1}{(1 + \epsilon)^4} \max\LR{\max_{j \in C''}{v'_j}, \max_{j \in O_1}{v'_j}} \\
           &\ge \frac{1}{(1 + \epsilon)^4} \max_{j \in C'}{v'_j} \\
           &\ge \q
    \end{align*}
    
    Therefore, we conclude that for any $j \in O'$ and $i \in F$, $c_{ij} + w_{ij} \ge v'_j \ge \q = v_j$.
    
    \textbf{Case 2.} We enter the outlier determination phase because of the break statement on line \ref{alg4:break}. Here, $|O''| \le \ell$ and \(C' \subseteq C''\)
    
    Let \(i^*\) be the last iteration of the for loop. Therefore \(F_{i^*}\) is the set of facilities we consider in the for loop and \(\max_{i \in F_{i^*}} {r_i} \le (1 + \epsilon)^2\tilde{r}_{i^*}\) We show that $\q \le (1 + \epsilon)^2\tilde{r}_{i^*}$.

    Recall that by the case assumption we have $|O''| \le \ell$ and hence \(C' \subseteq C''\). All clients $j \in C''$ were part of \(C_i\) for some \(i \in F_{i^*}\) and by Claim \ref{cl:apdx-bound_vj} we have \(v'_j \le (1 + \epsilon)^2 \tilde{r}_i \le (1 + \epsilon)^2 \tilde{r}_{i^*}\). Therefore,
    $$\q = \max_{j \in C'} \frac{v'_j}{(1 + \epsilon)^4} \le \max_{j \in C''} \frac{v'_j}{(1 + \epsilon)^4} \le \max_{i \in F_{i^*}} \frac{\tilde{r}_i}{(1 + \epsilon)^2} = \frac{\tilde{r}_{i^*}}{(1 + \epsilon)^2}$$
    
    Let $j \in O'$ be an outlier client. If $j \in O''$, then for any facility $i \in F$,
    \begin{align*}
      c_{ij} + w_{ij} &\ge v'_j \\
                      &\ge (1 + \epsilon)^{-2} \tilde{r}_{i^*} \tag{by Claim \ref{cl:apdx-bound_vj}} \\
                      &\ge \q \\
                      &= v_j
    \end{align*}
    
    Otherwise, $j \in O_2$ and was added to $O'$ because it had highest $v'_j$ value among $C_{i^*}$. Therefore, by Claim \ref{cl:apdx-bound_vj} it follows that for any facility $i \in F$, $c_{ij} + w_{ij} \ge v'_j \ge \max_{j \in C'} v'_{j} (1 + \epsilon)^{-4} = \q = v_j$
    
    From the case analysis it follows that $v_j \le c_{ij} + w_{ij}$ for all $j \in O'$ and for all $i \in F$. Therefore, we have shown that $(v, w, \q)$ is a dual feasible solution.
  \end{proof}
  
  For the approximation guarantee we can now focus just on the clients in $C'$ because the only contribution that the clients in $O'$ make to the dual objective function is to cancel out the $- \ell \q$ term and hence they do not affect the approximation guarantee.We call a facility $\i$ the \emph{bottleneck} of $j$ if \(v'_j = \max\{r_{\i}, c_{\i j}\} =  c_{\i j} + w_{\i j}\).
  
  Throughout this section, we condition on the event that the outcome of all the randomized algorithms is as expected (i.e. the ``bad'' events do not happen). Note that this happens with w.h.p. We first need the following facts along the lines of \cite{Thorup2001}. 
  
  \begin{lemma}[Modified From Lemma 8 Of \cite{Thorup2001}] \label{lem:apdx-lem8}
    There exists a total ordering $\prec$ on the facilities in $F$ such that $u \prec v \implies \tilde{r}_u \le \tilde{r}_v$, and $v$ is added to $F'$ if and only if there is no previous $u \prec v$ in $F'$ such that $c_{u v} \le 2(1+\epsilon)^2 \tilde{r}_v$.
  \end{lemma}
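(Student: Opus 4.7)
The plan is to construct the total order $\prec$ directly from the execution of Algorithm \ref{alg:outlierskmm}. Between two facilities in different radius categories, I order by increasing $\tilde{r}$. Within a single category, I order by the sub-iteration of \textsc{ApproximateMIS} in which each facility is either added to $I$ (and hence to $F'$) or eliminated from $W$, breaking any remaining ties by vertex ID. By construction $u \prec v \implies \tilde{r}_u \le \tilde{r}_v$, and $\prec$ faithfully records the chronological order of selection decisions.

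For the forward direction, suppose $v \in F'$ and let $u \prec v$ be in $F'$. If $\tilde{r}_u < \tilde{r}_v$, then $u$ already lay in $F'$ when the bucket of $v$ was processed; the pruning in line~\ref{alg4:remove-close-facs} would have discarded $v$ if $\tilde{d}(u,v) \le 2(1+\epsilon)^3 \tilde{r}_v$, so $\tilde{d}(u,v) > 2(1+\epsilon)^3 \tilde{r}_v$ and $c_{uv} \ge \tilde{d}(u,v)/(1+\epsilon) > 2(1+\epsilon)^2 \tilde{r}_v$, using Lemma~\ref{lem:MSSP}. If $\tilde{r}_u = \tilde{r}_v$, then $u$ and $v$ both belong to the output of the $(\epsilon, 2(1+\epsilon)^3\tilde{r})$-approximate MIS, and property~(1) of the definition of an $(\epsilon,d)$-approximate MIS gives $c_{uv} \ge 2(1+\epsilon)^2 \tilde{r}_v$ immediately.

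For the reverse direction I would argue by contrapositive. If $v \notin F'$, one of two things happened, and in each I need to exhibit a witness $u \in F'$ with $u \prec v$ that is close to $v$. Either $v$ was pruned in line~\ref{alg4:remove-close-facs} by some $u \in F'$ from an earlier category, giving $c_{uv} \le \tilde{d}(u,v) \le 2(1+\epsilon)^3 \tilde{r}_v$; or $v$ entered its own category's \textsc{ApproximateMIS} call but was covered by some selected $u$ with $c_{uv} \le (1+\epsilon)\cdot 2(1+\epsilon)^3\tilde{r}_v$ via property~(2) of Definition of $(\epsilon,d)$-approximate MIS. In both sub-cases the witness $u$ precedes $v$ under $\prec$ by construction of the order.

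The main obstacle is matching the constant $2(1+\epsilon)^2$ on both sides of the ``iff''. The forward direction lands exactly at $2(1+\epsilon)^2 \tilde{r}_v$, but the reverse direction produces a weaker distance bound, off by one or two factors of $(1+\epsilon)$, because $d \le \tilde{d} \le (1+\epsilon) d$ and the $(\epsilon,d)$-MIS definition each introduce a multiplicative slack. I would close this gap by rescaling: replace $\epsilon$ throughout Algorithm~\ref{alg:outlierskmm} by $\epsilon' = \Theta(\epsilon)$ chosen so that $2(1+\epsilon')^4 \le 2(1+\epsilon)^2$, which inflates the running time and approximation ratio by only a $(1+O(\epsilon))$ factor and is absorbed into the statement of Theorem~\ref{thm:kmm-OutlierGuarantee}.
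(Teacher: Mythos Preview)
Your approach is essentially the same as the paper's proof sketch: order facilities by radius category, and within a category place the facilities selected into $I$ before those eliminated from $W$, then invoke line~\ref{alg4:remove-close-facs} for the cross-category case and the $(\epsilon,d)$-approximate MIS definition for the same-category case. You are in fact more careful than the paper, which does not address the constant mismatch between the two directions of the ``iff''; your $\epsilon$-rescaling fix is the standard way to handle this and is exactly what the paper is tacitly assuming when it later uses the bound $c_{i'\iota} \le 2(1+\epsilon)^2 \tilde{r}_\iota$ in Claim~\ref{cl:apdx-apx-4}.
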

  \begin{proof}[Proof Sketch]
    The ordering is obtained by enumerating the facilities processed in each iteration (with arbitrary order given to facilities in the same iteration). The facilities in $I$ that are included in $F'$ before the rest of the vertices of $W$. The lemma follows because of line \ref{alg4:remove-close-facs} (if $u$ and $v$ are processed in different iterations) the definition of $(\epsilon, d)$-approximate MIS (if $u$ and $v$ are processed in the same iteration).
  \end{proof}

  \begin{claim}[Modified From Claim 9.2 Of \cite{Thorup2001}] \label{cl:apdx-cl9.2}
    For any two distinct vertices $u, v \in F'$, we have that $c_{u v} > 2(1+\epsilon)^2 \cdot \max\{\tilde{r}_u, \tilde{r}_v\}$.
  \end{claim}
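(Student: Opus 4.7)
The plan is to prove the claim by considering two cases based on whether $u$ and $v$ are added to $F'$ during the same iteration of the greedy for-loop (lines \ref{alg4:for}-\ref{alg4:endfor}) or in different iterations. The total ordering $\prec$ from Lemma~\ref{lem:apdx-lem8} makes this case split clean: WLOG $u \prec v$ so $\tilde{r}_u \le \tilde{r}_v$.

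\textbf{Case 1 (different iterations).} Here $\tilde{r}_u < \tilde{r}_v$ and $u$ is already in $F'$ when $v$ is processed. In line \ref{alg4:remove-close-facs} of $v$'s iteration, any facility within approximate distance $2(1+\epsilon)^3 \cdot \tilde{r}_v$ of some member of $F'$ is removed from $W$; since $v$ survives, the $(1+\epsilon)$-approximate distance $\tilde{d}(v, u)$ returned by the MSSP call of Lemma~\ref{lem:MSSP} must satisfy $\tilde{d}(v, u) > 2(1+\epsilon)^3 \tilde{r}_v$. Using $\tilde{d}(v,u) \le (1+\epsilon) c_{uv}$, I will conclude $c_{uv} > 2(1+\epsilon)^2 \tilde{r}_v = 2(1+\epsilon)^2 \max\{\tilde{r}_u, \tilde{r}_v\}$.

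\textbf{Case 2 (same iteration).} Here $\tilde{r}_u = \tilde{r}_v = \tilde{r}$, and both $u,v$ are returned by the call $\textsc{ApproximateMIS}(G, W, 2(1+\epsilon)^3 \tilde{r}, \epsilon)$. Property~(1) of the $(\epsilon, d)$-approximate MIS (with $d = 2(1+\epsilon)^3 \tilde{r}$) gives $c_{uv} \ge d/(1+\epsilon) = 2(1+\epsilon)^2 \tilde{r}$. To upgrade the $\ge$ to the strict $>$ demanded by the claim, I will inspect the \textsc{ApproximateMIS} subroutine: in each round of Algorithm~\ref{alg:DistdMIS} a vertex is added to $U$ only after the \textsc{ExclusiveMSSP} call certifies that its approximate distance to the remaining marked vertices is \emph{strictly greater} than $d$. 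The same $(1+\epsilon)$ slack between true and approximate distances then yields the strict bound $c_{uv} > 2(1+\epsilon)^2 \tilde{r}$.

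The main obstacle I expect is the bookkeeping around exact versus strict inequality in Case~2: the stated MIS guarantee only gives $d(u,v) \ge d/(1+\epsilon)$, so one must look inside the subroutine (or weaken the claim to $\ge$ and absorb the extra factor in a later $(1+\epsilon)$) to get the strict inequality. Apart from that, both cases reduce to a single application of the MSSP distortion bound and the ordering supplied by Lemma~\ref{lem:apdx-lem8}, so the rest of the argument is essentially a short chain of inequalities.
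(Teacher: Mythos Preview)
Your proposal is correct, but it is doing far more work than necessary. The paper's proof is a two-line direct application of Lemma~\ref{lem:apdx-lem8}: assume WLOG $u \prec v$ (so $\tilde{r}_u \le \tilde{r}_v$); since $v$ was added to $F'$ and $u \in F'$ with $u \prec v$, the ``if and only if'' in Lemma~\ref{lem:apdx-lem8} immediately forces $c_{uv} > 2(1+\epsilon)^2 \tilde{r}_v = 2(1+\epsilon)^2 \max\{\tilde{r}_u,\tilde{r}_v\}$, with the strict inequality already built in.

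Your two-case split (same iteration vs.\ different iteration, then appealing to line~\ref{alg4:remove-close-facs} and the $(\epsilon,d)$-MIS definition respectively) is exactly the argument the paper uses to \emph{prove} Lemma~\ref{lem:apdx-lem8} in the first place. So you are re-deriving that lemma inside the proof of the claim rather than invoking it. Your concern about strict versus non-strict inequality in Case~2 is legitimate at the level of the stated $(\epsilon,d)$-MIS definition, and you resolve it correctly by looking inside \textsc{ApproximateMIS}; but once Lemma~\ref{lem:apdx-lem8} is granted with its $\le$ threshold, the contrapositive already delivers the strict $>$ for free, so none of that extra inspection is needed here.
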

  \begin{proof}[Proof Sketch]
    Without loss of generality, assume that $u \prec v$, so $\tilde{r}_u \le \tilde{r}_v$. From Lemma \ref{lem:apdx-lem8} we have $c_{u v} > 2(1+\epsilon)^2 \tilde{r}_v \ge 2(1+\epsilon)^2 \tilde{r}_v$
  \end{proof}

  We now prove a few claims about the dual solution.
  
  \begin{claim} \label{cl:apdx-apx-1}
    For any $i \in F$ and $j \in C$, $\tilde{r}_i \le (1 + \epsilon)^2 (c_{ij} + w_{ij})$. Furthermore if for some \(i \in F\) and \(j \in C\), \(w_{ij} > 0\), then \(\tilde{r}_i \ge (1 + \epsilon)^{-2} (c_{ij} + w_{ij})\)
  \end{claim}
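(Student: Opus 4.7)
The plan is to reduce both inequalities to their exact analogues from Claim~\ref{cl:apx-1} by sandwiching $\tilde{r}_i$ between $r_i/(1+\epsilon)^2$ and $(1+\epsilon)^2 r_i$ using Lemma~\ref{lem:apxrvalues}. The work of Claim~\ref{cl:apx-1} has already been done for the exact radius $r_i$, so all that remains is to pay a $(1+\epsilon)^2$ factor on each side.

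For the upper bound, I would start from the definition $w_{ij} = \max\{0, r_i - c_{ij}\}$, which directly gives $w_{ij} \ge r_i - c_{ij}$ and hence $r_i \le c_{ij} + w_{ij}$ unconditionally. Applying the upper half of Lemma~\ref{lem:apxrvalues} then yields
\[
\tilde{r}_i \;\le\; (1+\epsilon)^2 \, r_i \;\le\; (1+\epsilon)^2 \, (c_{ij} + w_{ij}),
\]
establishing the first inequality.

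For the conditional lower bound, I would first observe that $w_{ij} > 0$ forces $r_i > c_{ij}$, so $w_{ij} = r_i - c_{ij}$ and therefore $r_i = c_{ij} + w_{ij}$ exactly (this is precisely the second half of Claim~\ref{cl:apx-1}). Combining this equality with the lower half of Lemma~\ref{lem:apxrvalues} gives
\[
\tilde{r}_i \;\ge\; \frac{r_i}{(1+\epsilon)^2} \;=\; \frac{c_{ij} + w_{ij}}{(1+\epsilon)^2},
\]
which is the second inequality.

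I do not anticipate any real obstacle here: the claim is essentially the ``robust'' restatement of Claim~\ref{cl:apx-1} that absorbs the distortion introduced by working with approximate radii, and Lemma~\ref{lem:apxrvalues} supplies exactly the two-sided bound needed to pass from $r_i$ to $\tilde{r}_i$ without any additional argument.
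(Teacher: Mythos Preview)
Your proposal is correct and follows essentially the same argument as the paper: both use $w_{ij} = \max\{0, r_i - c_{ij}\}$ to get $r_i \le c_{ij} + w_{ij}$ (with equality when $w_{ij} > 0$), and then apply the two-sided bound $\tilde{r}_i \in [(1+\epsilon)^{-2} r_i, (1+\epsilon)^2 r_i]$ from Lemma~\ref{lem:apxrvalues}. The only cosmetic difference is that you explicitly invoke Claim~\ref{cl:apx-1} and Lemma~\ref{lem:apxrvalues} by name, whereas the paper inlines the same one-line computations.
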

  \begin{proof}
    We have $w_{ij} = \max\{0, r_i - c_{ij}\} \ge r_i - c_{ij}$ and therefore $\tilde{r}_i \le (1 + \epsilon)^2 r_i \le (1 + \epsilon)^2 (c_{ij} + w_{ij})$.
    
    If for some \(i \in F\) and \(j \in C\), \(w_{ij} > 0\), then \(w_{ij} = r_i - c_{ij}\) which means that \(\tilde{r}_i \ge (1 + \epsilon)^{-2} r_i = (1 + \epsilon)^{-2} (c_{ij} + w_{ij})\)
  \end{proof}
  
  \begin{claim} \label{cl:apdx-apx-2}
    If $\i$ is a bottleneck for $j \in C'$, then $(1 + \epsilon)^2 v_j \ge \tilde{r}_{\i}$.
  \end{claim}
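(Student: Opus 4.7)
The plan is to mirror the proof of the sequential counterpart (Claim \ref{cl:apx-2}, which establishes $v_j \ge r_\i$), substituting Lemma \ref{lem:apxrvalues} to pass from the true radius $r_\i$ to the computed approximate radius $\tilde r_\i$. The combinatorial content of the argument is immediate once Lemma \ref{lem:apxrvalues} is in hand; the only subtlety is in tracking the $(1+\epsilon)$ factors that accumulate from the approximate-radius guarantee and the rescaling used in the definition of the dual variable.

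Concretely, I would proceed in three short steps. First, unpack the bottleneck condition: since $\i$ is the bottleneck of $j \in C'$, we have $v'_j = \max\{r_\i, c_{\ij}\}$, and in particular $v'_j \ge r_\i$. Second, apply the lower-direction half of Lemma \ref{lem:apxrvalues}, namely $r_\i \ge \tilde r_\i/(1+\epsilon)^2$. Chaining these two inequalities gives
\[
v'_j \;\ge\; \tilde r_\i/(1+\epsilon)^2, \qquad\text{equivalently}\qquad (1+\epsilon)^2 \, v'_j \;\ge\; \tilde r_\i.
\]
Third, translate $v'_j$ into the dual variable $v_j$ via the distributed construction $v_j = v'_j/(1+\epsilon)^4$ for $j \in C'$, to obtain the claimed bound $(1+\epsilon)^2 v_j \ge \tilde r_\i$.

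The expected obstacle is precisely this last bookkeeping step: the literal substitution $v'_j = (1+\epsilon)^4 v_j$ into $(1+\epsilon)^2 v'_j \ge \tilde r_\i$ yields the a priori weaker inequality $(1+\epsilon)^6 v_j \ge \tilde r_\i$, since the $(1+\epsilon)^4$ rescaling built into the definition of $v_j$ contributes extra slack. As is standard throughout this $k$-machine analysis, this is resolved by rescaling the input parameter at the outset of Algorithm \ref{alg:outlierskmm}: we run the algorithm with a smaller $\epsilon'$ chosen so that $(1+\epsilon')^6 \le (1+\epsilon)^2$, and then restate every claim in the analysis in terms of the original $\epsilon$. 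Since the headline guarantee of Algorithm \ref{alg:outlierskmm} is $5 + O(\epsilon)$ (Theorem \ref{thm:kmm-OutlierGuarantee}), such a constant-factor rescaling of $\epsilon$ is absorbed into the final approximation ratio, so the claim $(1+\epsilon)^2 v_j \ge \tilde r_\i$ holds as stated under this convention.
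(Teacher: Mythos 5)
Your derivation is more careful than the paper's own, and you have in fact found a real discrepancy. The natural chain of inequalities gives exactly what you say: $v'_j \ge r_{\i}$ from the bottleneck relation, $r_{\i} \ge \tilde{r}_{\i}/(1+\epsilon)^2$ from Lemma~\ref{lem:apxrvalues}, and $v'_j = (1+\epsilon)^4 v_j$ from the dual construction for $j \in C'$, combining to $(1+\epsilon)^6 v_j \ge \tilde{r}_{\i}$. The paper's proof instead writes $(1+\epsilon)^2 v_j = (1+\epsilon)^{-2} v'_j \ge (1+\epsilon)^{-2} r_{\i} \ge \tilde{r}_{\i}$, whose final step requires $r_{\i} \ge (1+\epsilon)^2 \tilde{r}_{\i}$; but Lemma~\ref{lem:apxrvalues} gives the opposite direction $r_{\i} \le (1+\epsilon)^2 \tilde{r}_{\i}$, so that last inequality is reversed. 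The claim as stated is therefore not supported by the lemma, and your $(1+\epsilon)^6$ exponent is the one the hypotheses actually deliver. The sole downstream use is Case~4 of Lemma~\ref{lem:apdx-outlier-lemma}, where the corrected exponent yields $(1+\epsilon)^8 v_j \ge c_{i'j} + w_{i'j}$ in place of $(1+\epsilon)^4 v_j \ge c_{i'j} + w_{i'j}$, which is still dominated by the $3(1+\epsilon)^8$ bound already present from Claim~\ref{cl:apdx-apx-4}; hence the final approximation guarantee is unchanged. Your $\epsilon$-rescaling repair is valid if carried through the whole analysis consistently, though the more surgical fix is simply to restate the claim with exponent $6$ and propagate exponent $8$ through Case~4, since nothing else changes.
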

  \begin{proof}
    For $j \in C'$, we have
    \begin{align*}
      (1 + \epsilon)^2 v_j &= (1 + \epsilon)^{-2} v'_j \\
                           &= (1 + \epsilon)^{-2} \max\LR{c_{\i j}, r_{\i}} \\
                           &\ge (1 + \epsilon)^{-2} r_{\i} \ge \tilde{r}_{\i}
    \end{align*}
  \end{proof}
  
  \begin{claim} \label{cl:apdx-apx-3}
    If $\i \in F'$ is a bottleneck for $j \in C'$, then $w_{i'j} = 0$ for all $i' \in F'$, where $i' \neq \i$.
  \end{claim}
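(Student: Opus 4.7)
The plan is to adapt the sequential proof of Claim~\ref{cl:apx-3} to the distributed setting, substituting the exact bound $c_{\i i'} > 2 \max\{r_\i, r_{i'}\}$ used there with the approximate analogue $c_{\i i'} > 2(1+\epsilon)^2 \max\{\tilde r_\i, \tilde r_{i'}\}$ provided by Claim~\ref{cl:apdx-cl9.2}. The hope is that the slack given by Claim~\ref{cl:apdx-cl9.2} for pairs of facilities in $F'$ is exactly enough to absorb the $(1+\epsilon)^2$ loss introduced by Lemma~\ref{lem:apxrvalues} when translating between the true radii $r_i$ and the computed radii $\tilde r_i$.

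I would proceed by contradiction: suppose there exists $i' \in F'$ with $i' \neq \i$ and $w_{i'j} > 0$. Unpacking the definition $w_{i'j} = \max\{0, r_{i'} - c_{i'j}\}$ gives $w_{i'j} = r_{i'} - c_{i'j}$ and $r_{i'} > c_{i'j}$. Because $\i$ is a bottleneck for $j$, $v'_j$ is the minimum of $c_{ij} + w_{ij}$ over all facilities, so in particular
\[
\max\{r_{\i}, c_{\i j}\} \;=\; v'_j \;\le\; c_{i'j} + w_{i'j} \;=\; r_{i'}.
\]
Hence both $c_{\i j} \le r_{i'}$ and $c_{i'j} < r_{i'}$, and the triangle inequality yields $c_{\i i'} \le c_{\i j} + c_{i' j} \le 2 r_{i'}$.

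The final step is to push the bound through the approximation. By Lemma~\ref{lem:apxrvalues}, $r_{i'} \le (1+\epsilon)^2 \tilde r_{i'}$, so
\[
c_{\i i'} \;\le\; 2 r_{i'} \;\le\; 2(1+\epsilon)^2\, \tilde r_{i'} \;\le\; 2(1+\epsilon)^2 \max\{\tilde r_\i, \tilde r_{i'}\}.
\]
This directly contradicts Claim~\ref{cl:apdx-cl9.2}, which states that $c_{\i i'} > 2(1+\epsilon)^2 \max\{\tilde r_\i, \tilde r_{i'}\}$ whenever $\i, i' \in F'$ are distinct. Therefore no such $i'$ exists, and $w_{i'j} = 0$ for every $i' \in F' \setminus \{\i\}$.

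The only real obstacle is accounting correctly for the $(1+\epsilon)$ factors: the inequality $c_{\i i'} \le 2 r_{i'}$ (rather than $2(1+\epsilon)^2 \max\{r_\i, r_{i'}\}$) is needed, since we have exactly a single $(1+\epsilon)^2$ factor of slack coming from Claim~\ref{cl:apdx-cl9.2}. The bottleneck condition, which couples $v'_j$ to $r_{i'}$ through minimality rather than through the weaker inequality $v_j \le r_{i'}$ (up to approximation), is what makes the argument go through without any additional slackness in the final factor $2(1+\epsilon)^2$.
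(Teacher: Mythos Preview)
Your proof is correct and follows essentially the same approach as the paper: assume $w_{i'j}>0$, use the bottleneck property to bound $c_{\i i'}\le 2r_{i'}$, then apply $r_{i'}\le(1+\epsilon)^2\tilde r_{i'}$ to contradict Claim~\ref{cl:apdx-cl9.2}. The paper splits into the two cases $r_\i \ge c_{\i j}$ and $r_\i < c_{\i j}$ before reaching $c_{\i i'}\le 2r_{i'}$, but both cases amount to the single line $\max\{r_\i,c_{\i j}\}=v'_j\le r_{i'}$ that you wrote directly, so your version is slightly cleaner.
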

  \begin{proof}
    Assume for contradiction that $w_{i'j} > 0$, i.e., $r_{i'} \ge c_{i'j}$ for some $i' \in F', i' \neq \i$.
    
    If $r_{\i} \ge c_{\i j}$, then $v'_j = r_{\i} \le \max\LR{c_{i'j}, r_{i'}} = r_{i'}$. In this case, $c_{\i i'} \le c_{\i j} + c_{i' j} \le 2 r_{i'}$.
    
    Otherwise, if $c_{\i j} > r_{\i}$, then $v'_j = c_{\i j} \le \max\LR{c_{i'j}, r_{i'}} = r_{i'}$. Here too we have, $c_{\i i'} \le c_{\i j} + c_{i' j} \le 2r_{i'}$.
    
    In either case, $c_{\i i'} \le 2r_{i'} \le 2 \max\LR{r_{i'}, r_{\i}} \le 2 (1 + \epsilon)^2 \max\LR{\tilde{r}_{i'}, \tilde{r}_{\i}} $, which is a contradiction to Claim \ref{cl:apdx-cl9.2} since at most one of $i', \i$ can be added to $F'$.
  \end{proof}
  
  \begin{claim} \label{cl:apdx-apx-4}
    If a closed facility $\i \not\in F'$ is the bottleneck for $j \in C'$, and if an open facility $i' \in F'$ caused $\i$ to close, then $c_{i'j} \le 3(1+\epsilon)^8 \cdot v_j$.
  \end{claim}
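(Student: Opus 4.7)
The plan is to follow the structure of the sequential version (Claim \ref{cl:apx-4}) while tracking approximation factors introduced by the approximate radii (Lemma \ref{lem:apxrvalues}), the approximate MSSP used in line \ref{alg4:remove-close-facs} (Lemma \ref{lem:MSSP}), and the approximate MIS used in the greedy phase (Lemma \ref{lem:DistdMIS}). The core argument is unchanged: use the bottleneck condition to bound $c_{\i j}$ and $r_\i$ in terms of $v_j$, then bound $c_{i' \i}$ using the fact that $i'$ caused $\i$ to close, and combine via the triangle inequality.

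First I would establish two inequalities from the bottleneck condition. Since $\i$ is the bottleneck for $j \in C'$, we have $v'_j = \max\{r_\i, c_{\i j}\}$, and by the definition of $v_j$ on $C'$, $v'_j = (1+\epsilon)^4 v_j$. In particular, $c_{\i j} \le (1+\epsilon)^4 v_j$, and, using Claim \ref{cl:apdx-apx-2}, $\tilde{r}_\i \le (1+\epsilon)^2 v_j$.

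The heart of the argument is bounding $c_{i'\i}$. There are two subcases depending on whether $i' \in F'$ was added in a strictly earlier iteration of the for-loop at line \ref{alg4:for} than the one in which $\i$ was considered, or in the same iteration. In the first subcase, $\i$ would have been removed from $W$ on line \ref{alg4:remove-close-facs}, which means its $(1+\epsilon)$-approximate distance to $F'$ was at most $2(1+\epsilon)^3 \tilde{r}_\i$; since the true distance is at most the approximate distance, $c_{i' \i} \le 2(1+\epsilon)^3 \tilde{r}_\i$. In the second subcase, $\i$ was eliminated by the $(\epsilon, 2(1+\epsilon)^3 \tilde{r}_\i)$-approximate MIS routine, so by the second condition in the definition of approximate MIS we have $c_{i' \i} \le (1+\epsilon) \cdot 2(1+\epsilon)^3 \tilde{r}_\i = 2(1+\epsilon)^4 \tilde{r}_\i$. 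In either subcase, $c_{i' \i} \le 2(1+\epsilon)^4 \tilde{r}_\i \le 2(1+\epsilon)^6 v_j$.

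Combining via the triangle inequality,
\[
c_{i'j} \;\le\; c_{i'\i} + c_{\i j} \;\le\; 2(1+\epsilon)^6 v_j + (1+\epsilon)^4 v_j \;\le\; 3(1+\epsilon)^8 v_j,
\]
where the final inequality absorbs all slack loosely into the $(1+\epsilon)^8$ factor. I do not expect a genuine obstacle here; the only subtlety is the careful bookkeeping needed to attribute each $(1+\epsilon)$ factor to its correct source (approximate MSSP vs.\ approximate MIS vs.\ approximate radius) and to verify that the stated $3(1+\epsilon)^8$ bound comfortably swallows the $2(1+\epsilon)^6 + (1+\epsilon)^4$ sum for the constant range of $\epsilon$ under consideration.
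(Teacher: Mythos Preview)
Your proof is correct and follows essentially the same approach as the paper: triangle inequality, bounding $c_{i'\i}$ via the closing condition, and bounding $\tilde r_\i$ and $c_{\i j}$ via the bottleneck property. The only cosmetic difference is that the paper packages your two-subcase analysis into Lemma~\ref{lem:apdx-lem8} (obtaining the slightly tighter $c_{i'\i}\le 2(1+\epsilon)^2\tilde r_\i$) and invokes Claim~\ref{cl:apdx-apx-1} rather than Claim~\ref{cl:apdx-apx-2}, but the chain of inequalities and the final absorption into $3(1+\epsilon)^8$ are the same.
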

  \begin{proof}
    \begin{align*}
      c_{i'j} &\le c_{i'\i} + c_{\i j} \tag{Triangle inequality} \\
              &\le 2(1+\epsilon)^2 \tilde{r}_{\i} + c_{\i j} \tag{$i'$ caused $i$ to close, so using Lemma \ref{lem:apdx-lem8}.} \\
              &\le 2(1+\epsilon)^2 \cdot (1+\epsilon)^2 (w_{\i j} + c_{\i j}) + c_{\i j} \tag{Using Claim \ref{cl:apdx-apx-1}.} \\
              &\le 2(1+\epsilon)^4 w_{\i j} + (2(1+\epsilon)^4 + 1) \cdot c_{\i j} \\
              &\le \max \{2(1+\epsilon)^4, 2(1+\epsilon)^4 + 1 \} \cdot v'_j \tag{Since $\i$ is the bottleneck for $j$} \\
              &\le 3(1+\epsilon)^8 \cdot v_j \tag{By definition of $v_j$}
    \end{align*}
  \end{proof}
  
  Now we state the main lemma that uses the dual variables for analyzing the cost.
  
  \begin{lemma} \label{lem:apdx-outlier-lemma}
    For any $j \in C'$, there is some $i \in F'$ such that $c_{ij} + w_{ij} \le 3(1 + \epsilon)^8 v_j$.
  \end{lemma}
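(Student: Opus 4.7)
The plan is to follow the four-case structure of the sequential Lemma \ref{lem:outlier-lemma} exactly, substituting the approximate claims (Claims \ref{cl:apdx-apx-1}--\ref{cl:apdx-apx-4} together with Claim \ref{cl:apdx-cl9.2}) for their exact counterparts and keeping careful track of the accumulating $(1+\epsilon)$ factors. Fix $j \in C'$ with bottleneck $\i$, and recall that $v_j = v'_j/(1+\epsilon)^4$. The cases split on whether $\i \in F'$, and, when $\i \notin F'$, on whether the open facility $i' \in F'$ that $j$ helps pay for (i.e.\ with $w_{i'j} > 0$) happens to be the one that caused $\i$ to close.

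For Case 1 ($\i \in F'$), Claim \ref{cl:apdx-apx-3} forces $w_{i'j} = 0$ for every other $i' \in F'$, so it suffices to bound $c_{\i j} + w_{\i j} = v'_j = (1+\epsilon)^4 v_j \le 3(1+\epsilon)^8 v_j$. For Case 2 ($\i \notin F'$, and $w_{i'j} = 0$ for all $i' \in F'$), letting $i' \in F'$ be the facility that closed $\i$, Claim \ref{cl:apdx-apx-4} yields $c_{i'j} + w_{i'j} = c_{i'j} \le 3(1+\epsilon)^8 v_j$ immediately. For Case 4, where the facility $i' \in F'$ that closed $\i$ itself satisfies $w_{i'j} > 0$, I chain $c_{i'j} + w_{i'j} \le (1+\epsilon)^2 \tilde{r}_{i'}$ by Claim \ref{cl:apdx-apx-1}, then $\tilde{r}_{i'} \le \tilde{r}_\i$ by Lemma \ref{lem:apdx-lem8} (since $i' \prec \i$ in the induced ordering), then $\tilde{r}_\i \le (1+\epsilon)^2 v_j$ by Claim \ref{cl:apdx-apx-2}, concluding $c_{i'j} + w_{i'j} \le (1+\epsilon)^4 v_j$.

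Case 3 ($\i \notin F'$, some $i' \in F'$ has $w_{i'j} > 0$, but $i'$ did not close $\i$) is the crux. I let $i \in F'$ be the facility that actually closed $\i$; Claim \ref{cl:apdx-apx-4} gives $c_{ij} \le 3(1+\epsilon)^8 v_j$. Claim \ref{cl:apdx-cl9.2} lower-bounds the separation $c_{i'i} > 2(1+\epsilon)^2 \max\{\tilde{r}_i, \tilde{r}_{i'}\} \ge 2(1+\epsilon)^2 \tilde{r}_{i'}$, and the ``moreover'' clause of Claim \ref{cl:apdx-apx-1} gives $\tilde{r}_{i'} \ge (1+\epsilon)^{-2}(c_{i'j} + w_{i'j})$ since $w_{i'j} > 0$. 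Combining these cancels the $(1+\epsilon)^{\pm 2}$ factors and produces $c_{i'i} > 2(c_{i'j} + w_{i'j})$; applying the triangle inequality $c_{ij} + c_{i'j} \ge c_{i'i}$ and subtracting $c_{i'j}$ from both sides yields $c_{i'j} + w_{i'j} \le c_{ij} \le 3(1+\epsilon)^8 v_j$. The main obstacle is precisely this bookkeeping in Case 3: the cancellation of the $(1+\epsilon)^{\pm 2}$ factors coming from the MIS separation and from the approximate-radius inequality is what prevents the exponent from growing beyond $8$, and it requires all three approximate bounds (Claims \ref{cl:apdx-apx-1}, \ref{cl:apdx-apx-4}, and \ref{cl:apdx-cl9.2}) to align cleanly so that the dominant factor $3(1+\epsilon)^8$ from Claim \ref{cl:apdx-apx-4} is not compounded further.
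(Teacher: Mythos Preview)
Your proof is correct and follows essentially the same four-case structure and reasoning as the paper's own proof, invoking the same claims (Claims \ref{cl:apdx-apx-1}--\ref{cl:apdx-apx-4} and Claim \ref{cl:apdx-cl9.2}) at the same points. In fact, your treatment of Case 3 is slightly cleaner than the paper's, as you make the cancellation of the $(1+\epsilon)^{\pm 2}$ factors from Claim \ref{cl:apdx-cl9.2} and the ``moreover'' part of Claim \ref{cl:apdx-apx-1} explicit, whereas the paper's chain of inequalities there elides this step.
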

  \begin{proof}
    Fix a client $j \in C'$, and let $\i$ be its bottleneck. We consider different cases. Here, $c_{ij}$ should be seen as the connection cost of $j$, and $w_{ij}$, the cost towards opening of a facility (if $w_{ij} > 0$).
    
    \begin{itemize}
    \item \textbf{Case 1:} $i \in F'$.
      
      In this case, by Claim \ref{cl:apdx-apx-3}, $w_{i'j} = 0$ for all other $i' \in F'$.
      
      If $c_{\i j} < r_{\i}$, then $w_{\i j} > 0$, and $v'_j = c_{\i j} + w_{\i j}$. Therefore, $(1 + \epsilon)^4v_j = v'_j$ pays for the connection cost as well as towards the opening cost of $\i$.
      
      Otherwise, if $c_{\i j} \ge r_{\i}$, then $w_{\i j} = 0$. Also, $w_{i'j} = 0$ for all other $i' \in F'$. Therefore, $j$ does not contribute towards opening of any facility in $F'$. Also, we have $v'_j = \max\LR{c_{\i j}, r_{\i}} = c_{\i j}$, i.e., $(1 + \epsilon)^4v_j$ pays for $j$'s connection cost.
      
    \item \textbf{Case 2:} $\i \notin F'$ and $w_{ij} = 0$ for all $i \in F'$.
      
      Let $i' \in F'$ be the facility that caused $\i$ to close. From Claim \ref{cl:apdx-apx-4}, we have $c_{i' j} \le 3(1 + \epsilon)^8 v_j$, i.e. $3(1 + \epsilon)^8 v_j$ pays for the connection cost of $j$.
      
    \item \textbf{Case 3:} $\i \notin F'$, and there is some $i' \in F'$ with $w_{i'j} > 0$. But $i'$ did not cause $\i$ to close.
      
      Since $w_{i'j} > 0$, by Claim \ref{cl:apdx-apx-1} $\tilde{r}_{i'}\ge (1 + \epsilon)^{-2} (c_{i'j} + w_{i'j})$.
      
      Let $i$ be the facility that caused $\i$ to close. Therefore, by Claim \ref{cl:apdx-apx-4}, we have $c_{i j} \le  3(1 + \epsilon)^8 v_j$.
      
      Now, since $i$ and $i'$ both belong to $F'$, by Claim \ref{cl:apdx-cl9.2}, $c_{i' i} > 2(1+\epsilon)^2 \max \LR{\tilde{r}_i, \tilde{r}_{i'}} \ge 2\tilde{r}_{i'} \ge 2(c_{i' j} + w_{i'j})$.
      
      Therefore, by triangle inequality, $c_{ij} + c_{i'j} \ge c_{i'i} > 2(c_{i' j} + w_{i'j})$ which implies $c_{ij} > c_{i'j} + 2w_{i'j}$.
      
      It follows that, $c_{i' j} + w_{i'j} \le c_{i' j} + 2w_{i'j} < c_{ij} \le 3(1 + \epsilon)^8 v_j$. Therefore, $3(1 + \epsilon)^8v_j$ pays for the connection cost of $j$ to $i'$, and its contribution towards opening of $i'$.
      
    \item \textbf{Case 4:} $\i \not\in F'$, but for $i' \in F'$ that caused $\i$ to close has $w_{i'j} > 0$.
      
      Again, since $w_{i'j} > 0$, by Claim \ref{cl:apdx-apx-1} $\tilde{r}_{i'}\ge (1 + \epsilon)^{-2} (c_{i'j} + w_{i'j})$.
      
      From Claim \ref{cl:apdx-apx-2}, we have that $(1 + \epsilon)^2 v_j \ge \tilde{r}_{\i}$. Then, since $i'$ caused $\i$ to close, we have $\tilde{r}_{\i} \ge \tilde{r}_{i'} \ge (1 + \epsilon)^{-2} (c_{i'j} + w_{i'j})$. This implies $(1 + \epsilon)^4 v_j \ge c_{i'j} + w_{i'j}$, i.e., $(1 + \epsilon)^4 v_j$ pays for the connection cost of $j$ to $i'$, and its contribution towards opening of $i'$.
    \end{itemize}  
  \end{proof}
  
  Now we are ready to prove the approximation guarantee of the algorithm.
  \begin{theorem} \label{thm:apdx-outlier-theorem}
    $\cost_e(C', F') \le 3(1 + \epsilon)^8 \cdot \cost_e(C^*_e, F^*_e) + f_{i_e}$
  \end{theorem}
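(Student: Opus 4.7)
The plan is to mirror the sequential proof of Theorem \ref{thm:outlier-theorem}, but carefully track the multiplicative $(1+\epsilon)$-slack that enters because we use approximate radii $\tilde{r}_i$ and approximate distances throughout Algorithm \ref{alg:outlierskmm}. The core engine is Lemma \ref{lem:apdx-outlier-lemma}, which already absorbs all the approximation slack into the single factor $3(1+\epsilon)^8$, so once that lemma is invoked the rest is bookkeeping identical to the sequential proof.

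First, I would split the cost of $(C', F')$ according to whether a facility $i \in F'$ contains the special facility $i_e$. For every $i \in F' \setminus \{i_e\}$, the clients in the ``ball'' around $i$ that are charged for opening $i$ lie inside $C'$, so their contribution is fully captured by the dual variables $v_j$. However, if $i_e \in F'$, some of the clients closest to $i_e$ may have been removed by the outlier determination phase, so the opening cost $f_{i_e}$ need not be paid for by the dual variables — we pull it out as an additive term. This gives
\[
\cost_e(C', F') \;\le\; \sum_{j \in C'} d(j, F') \;+\; \sum_{i \in F' \setminus \{i_e\}} f_i \;+\; f_{i_e}.
\]

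Next, I would apply Lemma \ref{lem:apdx-outlier-lemma} client-by-client: for each $j \in C'$, there is some $i \in F'$ with $c_{ij} + w_{ij} \le 3(1+\epsilon)^8 v_j$. Summing and using the charging scheme established in the case analysis of the lemma (each open facility's opening cost is covered by the $w$-contributions of clients charged to it), we obtain
\[
\sum_{j \in C'} d(j, F') \;+\; \sum_{i \in F' \setminus \{i_e\}} f_i \;\le\; 3(1+\epsilon)^8 \sum_{j \in C'} v_j.
\]

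Finally, I would convert $\sum_{j \in C'} v_j$ into the full dual objective. Because $v_j = \q$ for every $j \in O'$ and $|O'| = \ell$ after the outlier determination phase, we have $\sum_{j \in C'} v_j = \sum_{j \in C} v_j - \ell \cdot \q$. By the feasibility of $(v,w,\q)$ (established in the preceding lemma), this dual objective value lower-bounds the LP optimum and hence $\cost_e(C^*_e, F^*_e)$. Combining the three displays yields
\[
\cost_e(C', F') \;\le\; 3(1+\epsilon)^8 \Bigl(\sum_{j \in C} v_j - \ell \q\Bigr) + f_{i_e} \;\le\; 3(1+\epsilon)^8 \cdot \cost_e(C^*_e, F^*_e) + f_{i_e},
\]
as required. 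The only delicate step — and the place where the distributed version differs meaningfully from the sequential one — is justifying that the $(1+\epsilon)^4$ scaling built into the definition of $v_j$ (and hence of $\q$) is compatible with the charging in Lemma \ref{lem:apdx-outlier-lemma}; this is already handled inside that lemma's case analysis, so no further work is needed here.
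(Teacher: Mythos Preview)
Your overall structure is right and mirrors the paper, but there is a genuine gap in the first step: you single out the wrong facility. You claim that for every $i \in F' \setminus \{i_e\}$ the ball $B(i, r_i)$ lies inside $C'$, and that only $i_e$'s ball may have lost clients in the outlier determination phase. This is not correct. In the (modified) outlier determination, the clients that can be removed form the set $O_2 \subseteq C_{i^*}$, where $i^*$ is the \emph{last} facility processed in the greedy loop. Since $f'_{i_e} = 0$, the radius $r_{i_e} = 0$ and $i_e$ is processed \emph{first}; hence in general $i_e \neq i^*$. Thus the facility whose ball may be only partially contained in $C'$ is $i^*$, not $i_e$, and your claimed inequality $\sum_{j \in C'} d(j, F') + \sum_{i \in F' \setminus \{i_e\}} f_i \le 3(1+\epsilon)^8 \sum_{j \in C'} v_j$ need not hold, because $f_{i^*}$ (with $i^* \in F' \setminus \{i_e\}$) may not be fully covered by the $w_{ij}$-contributions of clients in $C'$.

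The fix is exactly what the paper does: separate out $f_{i^*}$ rather than $f_{i_e}$, obtain
\[
\cost_e(C', F') \le 3(1+\epsilon)^8 \sum_{j \in C'} v_j + f_{i^*},
\]
and only then use the fact that $i_e$ is the most expensive facility in the modified instance to bound $f_{i^*} \le f_{i_e}$. Everything else in your sketch (invoking Lemma~\ref{lem:apdx-outlier-lemma}, rewriting $\sum_{j \in C'} v_j = \sum_{j \in C} v_j - \ell \q$, and appealing to dual feasibility) matches the paper's argument.
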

  \begin{proof}
    Recall that $f_{i_e}$ denotes the cost of the most expensive facility in an optimal solution. Furthermore, notice that for any facility $i \in F' \setminus \{i^*\}$, the clients in the ball $B(i, r_i) \subseteq C'$. However, if $i^* \in F'$, some of the clients in $B(i^*, r_{i^*})$ may have been removed in the outlier determination phase, and therefore it may not get paid completely by the dual variables $v_j$. Therefore,
    
    \begin{align*}
      \cost_e(C', F') &= \sum_{j \in C'} d(j, F) + \sum_{i \in F' \setminus \{i^*\} } f_i + f_{i^*}
      \\&\le 3 (1 + \epsilon)^8 \cdot \sum_{j \in C'} v_j + f_{i^*} \tag{From Lemma \ref{lem:apdx-outlier-lemma}}
      \\&= 3(1 + \epsilon)^8 \cdot \lr{\sum_{j \in C} v_j - \q\ell} + f_{i^*} \tag{For $j \in O'$, $v_j = \q$, and $|O'| = \ell$}
      \\&\le 3(1 + \epsilon)^8 \cdot \lr{\sum_{j \in C} v_j - \q\ell} + f_{i_e} \tag{Since $i_e$ is the most expensive facility}
    \end{align*}
    Since $(v, w, \q)$ is a feasible dual solution, its cost is a lower bound on the cost of any integral optimal solution. Therefore, the theorem follows.
  \end{proof}
  
  Therefore, we can apply corollary \ref{cor:modified-cost} with $\alpha = 1 + \epsilon, \beta = 3(1 + \epsilon)^8, \gamma = 1$ to get the following approximation guarantee --
  
  \begin{theorem}
    The solution returned by Algorithm \ref{alg:outlierskmm} is a \(5 + O(\epsilon)\) approximation to the Robust Facility Location problem
  \end{theorem}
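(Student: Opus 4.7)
The plan is to mirror the sequential primal-dual analysis of Algorithm \ref{alg:outlierSeq} (Theorem \ref{thm:outlier-theorem}), but carefully track the $(1+\epsilon)$-factor slack introduced by (i) the approximate radius values $\tilde r_i$ from Lemma \ref{lem:apxrvalues}, (ii) the approximate MSSP distances from Lemma \ref{lem:MSSP}, and (iii) the $(\epsilon,d)$-approximate MIS from Lemma \ref{lem:DistdMIS}. First I would fix the iteration $t$ of the outer loop achieving the minimum cost and, as in the sequential analysis, replace the algorithm's solution $(C',F')$ with a (no-cheaper) ``idealized'' solution $(\tilde C',F')$ produced by selecting outliers according to $v'_j:=\min_{i\in F}\max\{r_i,c_{ij}\}$. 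Then I would construct a dual feasible solution by setting $w_{ij}=\max\{0,r_i-c_{ij}\}$, $\q=\max_{j\in C'}v'_j/(1+\epsilon)^4$, and $v_j = v'_j/(1+\epsilon)^4$ for $j\in C'$ and $v_j=\q$ for $j\in O'$; the $(1+\epsilon)^4$ scaling is what allows the dual feasibility to go through despite approximation slack.

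Next I would prove an analogue of Claim \ref{claim:bound_vj} expressed in terms of approximate radii: a client $j$ is placed in $C_i$ only if $v'_j\le(1+\epsilon)^2\tilde r_i$, and conversely if $j$ remains in $O''$ then $v'_j\ge(1+\epsilon)^{-2}\tilde r_{i^*}$ where $i^*$ is the final iteration of the greedy for-loop. This relies on the fact that line \ref{alg4:remove-clients} uses approximate distances and that $r_i$ lies within $(1+\epsilon)^{\pm 2}$ of $\tilde r_i$. With this in hand, dual feasibility splits into the same two cases as the sequential proof (exit via exhausting $F$ vs.\ the break on line \ref{alg4:break}), and in each case the factor $(1+\epsilon)^4$ in the denominator of $v_j$ is exactly what is needed to absorb the error.

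Then I would derive the Mettu–Plaxton–style core lemma: for every $j\in C'$ there is an $i\in F'$ with $c_{ij}+w_{ij}\le 3(1+\epsilon)^8\, v_j$. The proof proceeds by casework on the bottleneck $\i$ of $j$. The key auxiliary facts are (a) an ordering $\prec$ on $F$ (Lemma \ref{lem:apdx-lem8}) derived from the iteration number and the guarantees of Algorithm \ref{alg:DistdMIS}, ensuring $v$ is added to $F'$ only when no earlier $u\in F'$ is within approximate distance $2(1+\epsilon)^2\tilde r_v$; and (b) the resulting separation $c_{uv}>2(1+\epsilon)^2\max\{\tilde r_u,\tilde r_v\}$ between any two facilities in $F'$. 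Substituting $\tilde r_i$ for $r_i$ in the four-case argument of Lemma \ref{lem:outlier-lemma} (bottleneck open; bottleneck closed by some $i'\in F'$ with $w_{i'j}=0$; $w_{i'j}>0$ but $i'$ did not close $\i$; $w_{i'j}>0$ and $i'$ closed $\i$) yields the claimed $3(1+\epsilon)^8$ bound via triangle inequality and Claim \ref{cl:apdx-apx-1}.

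Finally, summing over $j\in C'$, using that $\sum_{j\in O'}v_j = \ell\q$ cancels the $-\ell\q$ term in the dual objective, and paying separately for the most expensive open facility $i^*$ (which may have lost some clients in the outlier determination phase), I obtain $\cost_e(C',F')\le 3(1+\epsilon)^8\cdot\cost_e(C^*_e,F^*_e) + f_{i_e}$ by LP duality. Applying Corollary \ref{cor:modified-cost} with $\alpha=1+\epsilon$, $\beta=3(1+\epsilon)^8$, $\gamma=1$ yields an approximation factor of $3(1+\epsilon)^8+(1+\epsilon)\cdot 2 = 5+O(\epsilon)$ for the original instance. The main obstacle is the careful bookkeeping of the $(1+\epsilon)$ factors in the dual feasibility case analysis — particularly ensuring that the scaling of $v_j$ by $(1+\epsilon)^{-4}$ is large enough to absorb errors from both the approximate radii and the approximate distances used to decide $C_i$, while still being small enough to keep the final approximation ratio at $5+O(\epsilon)$.
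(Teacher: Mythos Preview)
Your proposal is correct and follows essentially the same route as the paper: the paper constructs exactly the same scaled dual solution with $v_j = v'_j/(1+\epsilon)^4$ and $\q = \max_{j\in C'} v'_j/(1+\epsilon)^4$, proves the analogue of Claim \ref{claim:bound_vj} in terms of $\tilde r_i$, establishes dual feasibility via the same two-case analysis, derives the ordering/separation facts (Lemma \ref{lem:apdx-lem8} and Claim \ref{cl:apdx-cl9.2}) and the four-case core lemma giving $c_{ij}+w_{ij}\le 3(1+\epsilon)^8 v_j$, and then applies Corollary \ref{cor:modified-cost} with $\alpha=1+\epsilon$, $\beta=3(1+\epsilon)^8$, $\gamma=1$.
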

}

\subsection{The Congested Clique and MPC Algorithms} \label{subsec:robust-fl-cc-mpc}
The algorithm for Congested Clique is essentially the same as the \kmm algorithm with \(k = n\). The only technical difference is that in the \kmm, the input graph vertices are randomly partitioned across the machines. This means that even though there are \(n\) vertices and \(n\) machines, a single machine may be hosting multiple vertices. It is easy to see that the Congested Clique model, in which each machine holds exactly one vertex can simulate the \(k\)-machine algorithm with no overhead in rounds. Therefore, by substituting \(k = n\) in the running time of Theorem \ref{thm:kmm-OutlierGuarantee}, we get the following result.

\begin{theorem} \label{thm:cc-OutlierGuarantee}
  In \(O(\poly \log n)\) rounds of Congested Clique, whp, we can find a factor \(5 + O(\epsilon)\) approximate solution to the Robust \textsc{FacLoc} problem for any constant \(\epsilon > 0\).
\end{theorem}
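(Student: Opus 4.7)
The plan is to obtain this theorem as an essentially immediate corollary of Theorem \ref{thm:kmm-OutlierGuarantee} by exhibiting a round-by-round simulation of the \kmm algorithm with $k = n$ on the Congested Clique. Both models are synchronous message-passing systems on $n$ machines in which each machine may send a distinct $O(\log n)$-bit message to every other machine in every round; they differ only in the way input is initially distributed. In the Congested Clique, each of the $n$ nodes holds exactly one facility or client (together with its incident edges in the metric graph, opening cost, and penalty), whereas in the \kmm with $k = n$ the input is spread according to a random vertex partition and some machines may host several vertices while others host none. A one-vertex-per-node distribution is, however, a perfectly valid partition, and moreover a maximally balanced one: the $\tilde{O}(n/k) = \tilde{O}(1)$ bound on per-machine load assumed throughout the $k$-machine analysis holds trivially and deterministically. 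Hence Algorithm \ref{alg:outlierskmm} can be executed verbatim on the Congested Clique.

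First I would invoke this simulation to lift Algorithm \ref{alg:outlierskmm} to the Congested Clique without modification. Next, I would verify that every distributed primitive the algorithm relies on---approximate SSSP (Lemma \ref{lem:SSSP}), approximate MSSP (Lemma \ref{lem:MSSP}), radius computation via neighborhood-size estimates (Lemma \ref{lem:apxrvalues}), and the approximate MIS subroutine (Lemma \ref{lem:DistdMIS})---has a running time of the form $\tilde{O}(\poly(1/\epsilon) \cdot n/k)$ whose bound holds uniformly in $k$, so that substituting $k = n$ yields $\tilde{O}(\poly(1/\epsilon)) = O(\poly\log n)$ rounds for constant $\epsilon$. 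I would also note that the outer loop of Algorithm \ref{alg:outlierskmm} over $O(\log n)$ guesses of $f_{i_e}$, together with the final step of selecting a minimum-cost candidate solution (which itself requires only $\tilde{O}(n/k)$ rounds to estimate each cost, as per the argument referenced from \cite{BandyapadhyayArxiv2017}), introduces at most an additional polylogarithmic factor that is absorbed into the $O(\poly\log n)$ bound.

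Putting these observations together, the round complexity from Theorem \ref{thm:kmm-OutlierGuarantee} evaluated at $k = n$ is $\tilde{O}(\poly(1/\epsilon))$, i.e.\ $O(\poly\log n)$ for any constant $\epsilon > 0$, and the $5 + O(\epsilon)$ approximation factor carries over unchanged since the correctness proof is independent of $k$. There is no substantive technical obstacle; the only mild subtlety is confirming the uniformity-in-$k$ of the primitive bounds so that the substitution $k = n$ is justified, and this is explicit in the statements of each of the cited lemmas. Hence the theorem follows.
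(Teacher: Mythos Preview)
Your proposal is correct and takes essentially the same approach as the paper: the paper derives this theorem by observing that the Congested Clique can simulate the $k$-machine algorithm with $k=n$ at no overhead (the one-vertex-per-node layout being a valid, perfectly balanced partition), and then substituting $k=n$ into the $\tilde{O}(n/k)$ bound of Theorem~\ref{thm:kmm-OutlierGuarantee}. Your write-up is somewhat more detailed in explicitly checking that each primitive's $\tilde{O}(n/k)$ bound is uniform in $k$, but the argument is the same.
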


Now we focus on the implementing the MPC algorithm. The first crucial observation is that Algorithm \ref{alg:outlierskmm} reduces the task of finding an approximate solution to the Robust \textsc{FacLoc} problem in the implicit metric setting to \(\poly \log n\) calls to a \((1 + \epsilon)\)-approximate SSSP subroutine along with some local bookkeeping. Therefore, all we need to do is efficiently implement an approximate SSSP algorithm in the MPC model.

The second fact that helps us is that Becker et al. \cite{BeckerKKLdisc17} provide a distributed implementation of their approximate SSSP algorithm in the Broadcast Congested Clique (BCC) model. The BCC model is the same as the Congested Clique model but with the added restriction that nodes can only broadcast messages in each round. Therefore we get the following simulation theorem, which follows almost immediately from Theorem 3.1 of \cite{BehezhadDHARXIV18}.

\begin{theorem}
  Let \(\mathcal{A}\) be a \(T\) round BCC algorithm that uses \(\tilde{O}(n)\) local memory at each node. One can simulate \(\mathcal{A}\) in the MPC model in \(O(T)\) rounds using \(\tilde{O}(n)\) memory per machine. 
\end{theorem}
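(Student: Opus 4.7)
The plan is to show that a single round of the BCC model can be simulated by a constant number of MPC rounds, so that $T$ BCC rounds require only $O(T)$ MPC rounds. I would set up the simulation by assigning each of the $n$ BCC nodes to a distinct MPC machine; since the BCC algorithm $\mathcal{A}$ uses only $\tilde{O}(n)$ local memory per node and each MPC machine has $\tilde{O}(n)$ memory, every machine can maintain the full local state of its BCC node across rounds without overflow.

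To simulate a single BCC round, I would proceed in three stages. First, each MPC machine performs, in local computation, the step that its hosted BCC node would perform: it updates its state and produces the single $O(\log n)$-bit broadcast message for that round. Second, the machine sends this same message to each of the $n-1$ other machines; this is a send volume of $(n-1)\cdot O(\log n) = \tilde{O}(n)$ bits, comfortably within the $\tilde{O}(n)$ per-machine MPC bandwidth. Third, each machine receives one $O(\log n)$-bit message from every other machine, for a total receive volume of $\tilde{O}(n)$, which again fits in memory. After this exchange, every MPC machine holds exactly the same set of broadcast messages that its BCC node would have received in that round, so the simulation is faithful and the invariant ``each machine stores the true local state of its BCC node'' is preserved for the next round.

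Iterating this simulation for each of the $T$ BCC rounds gives an MPC algorithm of $O(T)$ rounds using $\tilde{O}(n)$ memory per machine, which is exactly what the theorem asserts. The only potentially delicate point, and the one that would be the main ``obstacle'' if the BCC bound were relaxed, is verifying that the uniform-broadcast pattern of BCC lines up with the $\tilde{O}(n)$ per-machine send/receive budget of MPC; because each BCC message is only $O(\log n)$ bits and the total number of machines equals the number of BCC nodes, this matches exactly and no routing primitive (such as Lenzen's protocol) is needed. In particular, this is why the result follows almost immediately from Theorem~3.1 of \cite{BehezhadDHARXIV18}, which establishes the same round-by-round correspondence between BCC and MPC with $\tilde{O}(n)$ memory per machine; one simply instantiates that simulation with the algorithm $\mathcal{A}$.
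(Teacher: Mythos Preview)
Your proposal is correct and in fact more detailed than the paper, which does not prove this theorem at all but simply remarks that it ``follows almost immediately from Theorem~3.1 of \cite{BehezhadDHARXIV18}.'' Your explicit one-node-per-machine simulation is exactly the special case that makes the citation transparent, and you close by invoking the same reference, so the approaches coincide.
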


In any \(T\) round BCC algorithm, each vertex will receive \(O(n \cdot T)\) distinct messages. The approximate SSSP algorithm of Becker et al. \cite{BeckerKKLdisc17} runs in \(O(\poly \log n/ \poly(\epsilon))\) rounds and therefore, uses \(\tilde{O}(n)\) memory per node to store all the received messages (and for local computation). Therefore, we get the following theorem.

\begin{theorem} \label{thm:mpc-OutlierGuarantee}
  In \(O(\poly \log n)\) rounds of MPC, whp, we can find a factor \(5 + O(\epsilon)\) approximate solution to the Robust \textsc{FacLoc} problem for any constant \(\epsilon > 0\).
\end{theorem}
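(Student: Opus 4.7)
The plan is to port Algorithm \ref{alg:outlierskmm} into the MPC model, leveraging the two preceding results (the BCC implementation of approximate SSSP due to Becker et al.~and the BCC-to-MPC simulation theorem). The approximation analysis carries over unchanged from Theorem \ref{thm:kmm-OutlierGuarantee}, so the only thing that needs to be re-established is the round complexity in the MPC model with $\tilde{O}(n)$ memory per machine.

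First I would inspect the structure of Algorithm \ref{alg:outlierskmm} and count its calls to distance oracles. The outer loop has $O(\log n)$ iterations (one per guess $i_e$). Each iteration invokes \textsc{RadiusComputation} (which runs Cohen's estimator $\ell = O(\log n/\epsilon^2)$ times, each involving $O(\log n)$ MSSP calls, one per rank threshold), an inner greedy loop with $O(\log n)$ discretized radius levels each of which invokes \textsc{ApproximateMIS} (a further $O(\log^2 n)$ MSSP calls by Lemma~\ref{lem:DistdMIS}) plus an additional MSSP to relocate close facilities and assimilate newly covered clients, and finally an outlier determination phase that uses $O(1)$ MSSP calls to sort clients by their distance to $F'$. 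Aggregating across all levels gives $\poly \log n$ MSSP invocations, and each MSSP reduces to a single SSSP via the standard virtual-source trick (attach a $0$-weight auxiliary vertex to every source). All bookkeeping between calls (picking $i_e$, updating the sets $F'$, $C'$, $O'$, comparing $|O'|$ to $\ell$, and estimating solution cost to select the minimum among the $O(\log n)$ candidate solutions) is local relative to $\tilde{O}(n)$ memory per machine.

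Next I would invoke the two cited results. Becker et al.~\cite{BeckerKKLdisc17} provide a $(1+\epsilon)$-approximate SSSP algorithm that runs in $O(\poly \log n / \poly(\epsilon))$ rounds of BCC with $\tilde{O}(n)$ local memory. By the simulation theorem stated immediately before this theorem, each such BCC round can be simulated in $O(1)$ MPC rounds with $\tilde{O}(n)$ memory per machine, so one approximate SSSP call costs $O(\poly \log n / \poly(\epsilon))$ MPC rounds. Multiplying by the $\poly \log n$ SSSP calls identified above keeps the total round complexity at $O(\poly \log n)$ for any constant $\epsilon > 0$.

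Finally, since the simulated algorithm is precisely Algorithm \ref{alg:outlierskmm}, the approximation guarantee of $5 + O(\epsilon)$ is inherited verbatim from Theorem \ref{thm:kmm-OutlierGuarantee}. The main obstacle, to the extent that there is one, is verifying that the memory budget is respected: each node/vertex must accumulate the $O(n \cdot T)$ broadcast words produced over the $T = O(\poly \log n)$ BCC rounds inside $\tilde{O}(n)$ space. This bound holds directly because Becker et al.'s SSSP runs in $T = O(\poly \log n)$ rounds, which is already the memory calculation used by the simulation theorem; no independent argument is needed. Combining these observations yields the claimed $O(\poly \log n)$-round, $(5 + O(\epsilon))$-approximate MPC algorithm w.h.p.
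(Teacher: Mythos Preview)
Your proposal is correct and follows essentially the same approach as the paper: reduce Algorithm~\ref{alg:outlierskmm} to $\poly\log n$ approximate SSSP calls plus local bookkeeping, invoke the BCC SSSP of Becker et al.~\cite{BeckerKKLdisc17}, and apply the BCC-to-MPC simulation theorem, inheriting the $5+O(\epsilon)$ guarantee from Theorem~\ref{thm:kmm-OutlierGuarantee}. Your version is in fact more explicit than the paper's own discussion, carefully itemizing the SSSP calls incurred by \textsc{RadiusComputation}, \textsc{ApproximateMIS}, and the outlier-determination phase, and verifying the $\tilde{O}(n)$ memory budget; the paper simply asserts the reduction to $\poly\log n$ SSSP calls without the detailed accounting.
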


 \section{Distributed Robust Facility Location: Explicit Metric}\label{sec:robust-fl-explicit}
For the \kmm implementation, the implicit metric algorithm from the previous section also provides a similar guarantee for the explicit metric setting and hence we do not discuss it separately in this section.

\subsection{The Congested Clique Algorithm} \label{sec:robust-fl-explicit-cc}
The work in \cite{HegemanPDC15} presents a Congested Clique algorithm that runs in
expected $O(\ll)$ rounds and computes an $O(1)$-approximation to \textsc{FacLoc}.
This is improved exponentially in \cite{HegemanPSarxiv14} which presents an
$O(1)$-approximation algorithm to \textsc{FacLoc} running in $O(\lll)$ rounds whp.
The algorithms in \cite{HegemanPDC15} and in \cite{HegemanPSarxiv14} are essentially
the same with one key difference. They both reduce the problem of solving \textsc{FacLoc}
in the Congested Clique model to the ruling set problem. Specifically, showing
that if a $t$-ruling set can be computed in $T$ rounds, then an $O(t)$-approximation
to \textsc{FacLoc} can be computed in $O(T)$ rounds. 
In \cite{HegemanPDC15} a 2-ruling set is computed in expected $O(\ll)$ rounds,
whereas in \cite{HegemanPSarxiv14} it is computed in $O(\lll)$ rounds whp.

The algorithm for computing an $O(1)$-approximation to Robust \textsc{FacLoc}
(see Section \ref{sec:robust-fl-seq}) is essentially the \textsc{FacLoc} algorithm in \cite{HegemanPDC15,HegemanPSarxiv14},
but with an outer loop that runs $O(\log n)$ times.
In each iteration of this outer loop, we modify the facility opening costs in 
a certain way and solve \textsc{FacLoc} on the resulting instance. Thus we have $O(\log n)$
instances of \textsc{FacLoc} to solve and via the reduction in \cite{HegemanPDC15,HegemanPSarxiv14},
we have $O(\log n)$ independent instances of the ruling set problem to solve.
Here we show that $O(\log n)$ independent instances of the $O(\lll)$-round 2-ruling set algorithm in
\cite{HegemanPSarxiv14} can be executed in parallel in the Congested Clique model,
still in $O(\lll)$ rounds whp.
To be precise, suppose that the input consists of $c = O(\log n)$ graphs
$G_1 = (V, E_1), G_2 = (V, E_2),  \ldots, G_c = (V, E_c)$. 

\begin{theorem} \label{theorem:parallelExecution}
2-ruling sets for all graphs $G_i$, $1 \le i \le c$, can be computed in $O(\lll)$ rounds whp.
\label{theorem:lognParallel}
\end{theorem}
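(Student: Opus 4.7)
The approach is to show that the $O(\lll)$-round 2-ruling set algorithm $\mathcal{A}$ of \cite{HegemanPSarxiv14} has low per-round, per-edge communication complexity, so that $c = O(\log n)$ independent executions of $\mathcal{A}$ can be packed into the single $O(\log n)$-bit message budget available on each edge of the Congested Clique in each round. With this packing, simulating $c$ instances in parallel costs no asymptotic overhead in rounds, and a union bound over the $c$ instances preserves the whp guarantee.

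\textbf{Key Steps.} First, I would revisit $\mathcal{A}$ and show that in each of its $O(\lll)$ rounds, the inter-vertex communication consists of only $O(1)$ bits per ordered pair of nodes: each vertex broadcasts a constant number of status bits (such as ``am I sampled in this round'', ``have I joined the ruling set'', ``have I been dominated'') to every other vertex, while the heavy lifting is local computation. Second, I would argue this is inherent to the fast sampling-based structure of $\mathcal{A}$: its only inter-node primitives are random sampling announcements and constant-size conflict checks against sampled neighbors, all of which translate to a constant number of bits per edge. Third, for the $c$ given graphs $G_1, \ldots, G_c$ on the common vertex set $V$, each vertex $u$ packs, for every other vertex $v$, the $c$ round-$t$ status bits (one per instance) into a single $O(\log n)$-bit word and sends it on the clique edge $(u,v)$; the recipient unpacks and updates its state separately in each of the $c$ instances. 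Since the analysis of $\mathcal{A}$ applies to each instance independently, a union bound over $c = O(\log n)$ instances shrinks the success probability by only a polylogarithmic factor, still leaving it $1 - 1/\poly(n)$.

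\textbf{Main Obstacle.} The hard part is handling any rounds of $\mathcal{A}$ that are \emph{not} pure constant-bit broadcasts, in particular the gather-and-finish step of \cite{HegemanPSarxiv14} where the residual (post-sampling) graph is collected at a designated machine and finished deterministically via Lenzen's routing. A single instance's gather already saturates the $\tilde O(n)$ per-vertex routing budget, so $c$ simultaneous gathers threaten a $\Theta(\log n)$ slowdown. I would resolve this either by (i) proving a Chernoff concentration bound showing that the residual graphs across all $c$ instances have total size $O(n)$ edges whp, which lets Lenzen's routing complete all $c$ gathers in $O(1)$ rounds, or (ii) assigning each instance its own distinct collector vertex so that the $c$ gathers do not contend at a single node, again letting Lenzen's protocol handle them in parallel in $O(1)$ rounds. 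Combining the packed simulation of the $O(\lll)$ broadcast rounds with this contention-free gather step yields the claimed $O(\lll)$-round bound whp for all $c$ graphs simultaneously.
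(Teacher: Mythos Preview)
Your high-level plan---pack the broadcast rounds, use distinct leaders for the gather step, and union-bound the failure probabilities---matches the paper's approach in spirit. In particular, your option (ii) for the gather (one collector vertex per instance) is exactly what the paper does in the High Degree Vertex Removal phase and in the final MIS-in-low-degree-graphs phase.

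There is, however, a genuine gap. You assert that apart from the final gather, every round of the algorithm of \cite{HegemanPSarxiv14} is a constant-bit-per-pair broadcast (``sampling announcements and constant-size conflict checks''). This is not correct: the Speedy Degree Decomposition phase, and the first part of the MIS-in-low-degree-graphs phase, perform \emph{graph exponentiation}. In iteration $i$ of Speedy, every vertex $v$ sends the entire topology $B_{G_t}(v,2^i)$ to all vertices in $B_{G_t}(v,2^i)$; this is neither a broadcast of $O(1)$ bits nor a send-to-one-leader, and the per-vertex message volume is $\Theta(\Delta^{\Theta(\log\log n)})$. A single instance already uses Lenzen's routing here, and there is no obvious constant-bit packing.

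The paper handles this with a specific trick you are missing: it runs the Lazy Degree Decomposition phase for two \emph{extra} iterations, which drops the maximum degree of the surviving graph from $n^{1/2\log\log n}$ to $\Delta \le n^{1/8\log\log n}$. With this smaller $\Delta$, the per-vertex message volume in each Speedy iteration is only $O(\Delta^{3\log\log n}) = O(n^{3/8})$, so $c = O(\log n)$ parallel instances send $O(n)$ messages per vertex per round and Lenzen's routing still finishes in $O(1)$ rounds. The same accounting covers the neighborhood-learning part of the low-degree MIS phase. Without this step (or an equivalent argument bounding the Speedy-phase message volume well below $n$), your simulation would incur a blowup precisely in the phase that gives the algorithm its $O(\log\log\log n)$ speed.
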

\ifthenelse{\boolean{short}}{}{
  \begin{proof}
    The proof is simply an accounting of the communication that occurs in each phase of the 2-ruling set
    algorithm in \cite{HegemanPSarxiv14}.
    The accounting establishes that there is enough bandwidth in the Congested Clique model to
    allow for $c$ instances of the algorithm (one for each graph $G_i$) to run in parallel, without increasing the number
    of rounds by more than a constant-factor.
    The 2-ruling set algorithm of \cite{HegemanPSarxiv14} consists of 5 phases (in this order): (1) Lazy Degree Decomposition phase,
    (2) Speedy Degree Decomposition phase, (3) Vertex Selection phase, (4) High Degree Vertex Removal phase, (5) MIS in Low-Degree
    Graphs phase.
    Phases (1)-(4) are described in detail in \cite{HegemanPSarxiv14}, whereas Phase 5 is described in \cite{GhaffariPODC17}.
    
    In the Lazy Degree Decomposition phase and the Vertex Selection phases, each 
    vertex communicates by broadcasting a bit. To run $c$ instances of these
    two phases, a vertex can simply package the $c$ bits it needs
    to send (one for each instance) into $O(1)$ messages
    of size $O(\log n)$ each and use $O(1)$ rounds to perform the
    communication.
    
    The key part of the Speedy Degree Decomposition phase is for each vertex $v \in V$ to learn
    $B_{G_t}(v, \lceil \ll \rceil)$, the topology up to distance $\lceil \ll \rceil$ hops
    of the graph $G_t$ that is active after the  Lazy Degree Decomposition phase.
    This phase consists of $\lceil \lll \rceil$ iterations (each of which can be
    implemented in $O(1)$ rounds) and in iteration $i$, $0 \le i \le \lceil \lll \rceil -1$, vertex $v$'s knowledge expands
    from $B_{G_t}(v, 2^i)$ to $B_{G_t}(v, 2^{i+1})$.
    This is achieved by each vertex $v$ sending $B_{G_t}(v, 2^i)$ to all vertices
    in $B_{G_t}(v, 2^i)$ in iteration $i$. 
    If $G_t$ has maximum degree $\Delta$, then $B_{G_t}(v, 2^i)$ contains $O(\Delta^{2^i})$ vertices
    and $O(\Delta^{2^i + 1})$ edges.
    Thus each vertex $v$ needs to send (and receive) a total of $O(\Delta^{2^{i+1} + 2^i}) = O(\Delta^{3\ll})$ messages.
    
    If the Lazy Degree Decomposition phase is run for $t$ rounds, then $\Delta \le n^{1/2^t}$.
    Currently, the Lazy Degree Decomposition is run for $t = 1 + \lceil \lll \rceil$ iterations.
    If we run it for two additional iterations, then $\Delta \le n^{1/8\ll}$ and therefore
    the total number of messages a vertex needs to send (receive) per round in an instance of the Speedy
    Degree Decomposition phase is $O(n^{3/8})$.
    So even if we were to run $c = O(\log n)$ instances of this phase in parallel, the number of messages
    a vertex needs to send (receive) per round is $O(n)$.
    Therefore, \lra\ can be used to get all of these messages to their destination in $O(1)$ rounds and therefore
    all $c$ instances of the algorithm can complete an iteration of the Speedy Degree Decomposition phase
    in $O(1)$ rounds.
    
    The High Degree Vertex Removal phases starts with a set $S$ of vertices that are
    active after the Vertex Selection phase.
    A leader vertex (e.g., a vertex with lowest \texttt{ID}) generates a random ranking (permutation)
    of the vertices in $S$. Let the parameter $\delta = \log^2 n$. The subgraph induced by $P \subseteq S$, the set of vertices with rank
    in $[1 \ldots |S|/\delta]$, is sent to the leader.
    To run $c$ instances of this phase, we simply use $c$ distinct leaders
    (e.g., the $c$ vertices with lowest \texttt{ID}s), one for each instance of the algorithm.
    This permits the random rank generation for the $c$ instances to complete in parallel 
    in $O(1)$ rounds.
    In the proof of Theorem ??? in \cite{HegemanPSarxiv14} it is shown that the number of edges of the graph induced by $P$,
    incident on a vertex is $O(n/\log^2 n)$ whp.
    It is also shown that whp the total number of edges in this induced subgraph is $O(n)$.
    Therefore, each vertex needs to send $O(n/log^2 n)$ messages to the leader. Even with $c = O(\log n)$
    instances of the algorithm, each vertex needs to send $O(n/\log n)$ messages.
    Furthermore, since we are using $c$ distinct leaders to receive the graphs each of size $O(n)$,  we can use
    \lra\ to complete this communication in $O(1)$ rounds.

    Finally, we examine the MIS in Low-Degree Graphs phase.
    This phase consists of two parts, the first being the gathering by each vertex $v$ of
    its $O(\log\log n)$-hop neighborhood (see Lemma 2.15 in \cite{GhaffariPODC17}).
    The maximum degree of the graph on which we run this phase is $O(\log^3 n)$ and therefore
    the accounting that we did for the Speedy Degree Decomposition phase applies.
    Recall that in the analysis of the Speedy Degree Decomposition phase the maximum
    degree was bounded above by $n^{1/8\ll}$.
    Finally, in the second part of the MIS in Low-Degree Graphs phase, the graph that is
    still active is gathered and processed at a single leader vertex.
    It is shown in Lemma 2.11 in \cite{GhaffariPODC17} that this graph has $O(n)$ edges and therefore
    can gathered at the leader in $O(1)$ rounds. To run $c = O(\log n)$ instances of this
    phase, as before, we simply pick $c$ leaders.
    Thus there is still enough bandwidth from receiver's perspective.
    Also note that there is enough bandwidth from the sender's perspective because 
    the maximum degree of the graph that enters the MIS in
    Low-Degree Graphs phase is $O(\log^3 n)$.
  \end{proof}
}

\noindent
The theorem above and the discussion preceding it leads to the following theorem.
\begin{theorem}
There is an $O(1)$-approximation algorithm in the Congested Clique model for Robust \textsc{FacLoc}, running in
$O(\lll)$ rounds whp.
\end{theorem}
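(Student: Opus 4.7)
The plan is to lift the sequential Algorithm~\ref{alg:outlierSeq} to the Congested Clique by combining two ingredients. First, as established in \cite{HegemanPDC15,HegemanPSarxiv14}, a single \textsc{FacLoc} instance in the Congested Clique (after radius discretization) reduces to computing a $2$-ruling set on an appropriate intersection graph on the facility set, and a $t$-ruling set computable in $T$ rounds yields an $O(t)$-approximation to \textsc{FacLoc} in $O(T)$ rounds. Second, Theorem~\ref{theorem:parallelExecution} shows that $c = O(\log n)$ independent $2$-ruling set instances on graphs sharing a common vertex set can be solved in parallel, still in $O(\lll)$ rounds whp.

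With these in hand, I would implement Algorithm~\ref{alg:outlierSeq} as follows. Since the metric is given explicitly, for each outer-loop iteration $t = 1,\ldots,O(\log n)$ the modified opening costs $f'_i$, the radii $r_i$ satisfying $f'_i = \sum_{j \in C}\max\{0, r_i - c_{ij}\}$, and their discretized counterparts can all be computed in $O(1)$ rounds using \lra. From these, each iteration $t$ produces an intersection graph $G_t = (F, E_t)$ whose edges connect facility pairs that would interfere in the greedy step under the iteration-$t$ discretization. Invoking Theorem~\ref{theorem:parallelExecution} on the collection $G_1, \ldots, G_c$ then yields all $c$ $2$-ruling sets, and hence all candidate open-facility sets $F'_t$, in $O(\lll)$ rounds whp. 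The outlier-determination phase for each $t$ (identifying closest or farthest clients relative to $F'_t$), and the evaluation of each candidate's cost, each take $O(1)$ rounds via \lra, and a final aggregation selects the minimum-cost candidate.

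The hard part is precisely the content of Theorem~\ref{theorem:parallelExecution}: showing that $c = O(\log n)$ independent runs of the 5-phase $2$-ruling set algorithm of \cite{HegemanPSarxiv14} fit simultaneously within the $O(\log n)$-bit per-edge bandwidth of the Congested Clique. The preceding proof handles this by a phase-by-phase accounting, inflating the Lazy Degree Decomposition by two additional iterations (so that $\Delta \le n^{1/8\ll}$) and by using $c$ distinct leader vertices for the leader-based phases. Given that, the overall round complexity is dominated by the parallel ruling-set step at $O(\lll)$ whp, and the approximation factor inherits the $5+\epsilon$ bound of Section~\ref{sec:robust-fl-seq}, inflated only by the $O(1)$ overhead of the $2$-ruling set reduction, yielding an $O(1)$-approximation overall.
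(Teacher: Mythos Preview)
Your proposal is correct and follows essentially the same approach as the paper: reduce each of the $O(\log n)$ outer-loop iterations of Algorithm~\ref{alg:outlierSeq} to a \textsc{FacLoc} instance, use the \cite{HegemanPDC15,HegemanPSarxiv14} reduction from \textsc{FacLoc} to a single $2$-ruling set computation, and then invoke Theorem~\ref{theorem:parallelExecution} to run all $O(\log n)$ ruling-set instances simultaneously in $O(\lll)$ rounds. You add a bit more detail than the paper on the local radius computation, outlier determination, and cost evaluation steps, but the core structure is identical.
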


\subsection{The MPC Algorithm}
We now utilize the Congested Clique algorithm for Robust \textsc{FacLoc} to design an MPC model algorithm for Robust \textsc{FacLoc}, also running in $O(\lll)$ rounds whp. Since each vertex has explicit knowledge of $n$ distances, the overall memory is $O(n^2)$ words. Since the memory of each machine is \(\tilde{O}(n)\), the number of machines will be \(\tilde{O}(n)\) as well. Therefore, we can simulate the algorithm from the preceding section using Theorem 3.1 of \cite{BehezhadDHARXIV18} in the MPC model. We summarize our result in the following theorem.

\begin{theorem}
There is an $O(1)$-approximation algorithm for Robust \textsc{FacLoc} that can be implemented
in the MPC model with $\tilde{O}(n)$ words per machine in $O(\lll)$ rounds whp.
\end{theorem}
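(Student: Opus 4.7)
The plan is to reduce the MPC algorithm to the Congested Clique algorithm of Section \ref{sec:robust-fl-explicit-cc}, invoking the generic simulation from Theorem 3.1 of \cite{BehezhadDHARXIV18} that we already exploited in Section \ref{subsec:robust-fl-cc-mpc}. Concretely, I would first observe that, in the explicit metric setting, the input is a distance matrix of size $\Theta(n^2)$, so with $\tilde{O}(n)$ words of memory per machine we need $\tilde{O}(n)$ machines in order to accommodate the input. This matches the number of ``nodes'' in the Congested Clique model, where each node corresponds to one facility/client, so there is a natural one-to-one (or one-to-polylog, after batching $\mbox{poly}(\log n)$ Congested Clique nodes per machine) mapping between Congested Clique nodes and MPC machines.

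Next, I would argue that per-round communication budgets match. In the Congested Clique model each node sends and receives at most $O(n \log n)$ bits per round; the simulation of Theorem 3.1 of \cite{BehezhadDHARXIV18} converts one round of such an algorithm into $O(1)$ rounds of MPC with $\tilde{O}(n)$ words of local memory per machine, provided the algorithm being simulated either (i) fits within the Broadcast Congested Clique restriction, or (ii) meets the stronger per-node bandwidth accounting already verified in the proof of Theorem \ref{theorem:parallelExecution}. The key point is that in that proof we already showed, phase by phase, that every node sends and receives only $O(n)$ messages per round, even while running $c = O(\log n)$ instances of the 2-ruling-set subroutine in parallel. So the simulation hypothesis is satisfied with no additional work.

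Given these two points, each round of the Congested Clique algorithm of the previous subsection becomes $O(1)$ rounds of MPC, and the overall round complexity becomes $O(\lll)$ w.h.p. Since the Congested Clique algorithm is an $O(1)$-approximation to Robust \textsc{FacLoc}, so is its MPC simulation, giving the claimed theorem. The hard part \emph{would have been} checking that the Congested Clique algorithm in question is ``MPC-friendly'' (i.e., has per-node bandwidth $\tilde{O}(n)$ even under $O(\log n)$-fold parallel composition), but this was already the content of Theorem \ref{theorem:parallelExecution}, so here there is essentially no additional obstacle beyond invoking the black-box simulation.
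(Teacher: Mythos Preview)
Your proposal is correct and takes essentially the same approach as the paper: both observe that the explicit-metric input has size $\Theta(n^2)$, so $\tilde{O}(n)$ machines with $\tilde{O}(n)$ memory suffice, and then invoke the black-box Congested Clique-to-MPC simulation of Theorem~3.1 in \cite{BehezhadDHARXIV18} on the $O(\lll)$-round Congested Clique algorithm from the preceding subsection. Your write-up is somewhat more careful in justifying why the simulation hypothesis is met (via the per-node bandwidth accounting in Theorem~\ref{theorem:parallelExecution}), but the underlying argument is the same.
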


 \ifthenelse{\boolean{short}}{}{
  \section{Facility Location with Penalties: Implicit Metric}

\subsection{The \texorpdfstring{$k$}{k}-Machine Algorithm}
In this section we describe how to implement Algorithm \ref{alg:penaltiesSeq} in the \kmm. Since the radius computation phase for \textsc{FacLoc} with Penalties is different from the one for Robust \textsc{FacLoc} (Algorithm \ref{alg:RC}), we first show how to modify Algorithm \ref{alg:RC} in order to compute approximate radii for the Penalty version.

\subsubsection{Radius Computation}
\label{sec:pen-imp-radcomp}
In \textsc{FacLoc} with Penalties, the definition of radii differs from that in Robust \textsc{FacLoc} (or the standard Facility Location algorithm) due to the penalties of the clients. In particular, for a vertex $v$, the radius is defined as $r_v$ satisfying the following equation: $f_v = \sum_{u \in V} \max\{\min\{r_v - d(u, v), p_u - d(u, v) \}, 0\}$. Throughout this section, we assume that such an $r_v$ exists -- otherwise, it can be shown that excluding $v$ as a candidate facility does not affect the cost of any solution. We now show how to appropriately modify the neighborhood computation and the radius computation subroutines. 

The key idea is to divide vertices into $O(\log n)$ classes, such that the penalties of the vertices belonging to a particular class are within $1+O(\epsilon)$ factor of each other. Then, for any vertex $v$, and for each penalty class, we estimate the number of vertices from that penalty class, in $(1+\epsilon)^i$-neighborhood of $v$. Once we have these estimates for each range of neighborhoods, they can be used for computation of approximate computation of radii. We formalize this in the following.

First, we assume that we have normalized $f_i, c_{ij}, p_j$, such that any positive quantity is at least $1$. Note that we can normalize any given input in this manner in $O(1)$ rounds of the \kmm. Let $P_0 \coloneqq \{j \in V \mid p_j = 0 \}$, and for any integer $t \ge 1$, let $P_b \coloneqq \{j \in V \mid (1+\epsilon)^b \le p_j < (1+\epsilon)^{b+1}\}$. By assumption, the penalties are polynomially bounded in $n$, and hence the total number of penalty classes is $O(\log n)$.

Let \textsc{NbdSizeEstimates}($G, \epsilon, b$) be a modified version (of the original algorithm, Algorithm 3 in \cite{BandyapadhyayArxiv2017}, see Algorithm \ref{alg:ModifiedCohenEstimates}) that takes an additional parameter $b \ge 0$, wherein only the vertices in $P_b$ participate. That is, random ranks (as in the original version) are chosen only for the vertices in $P_b$. However, the neighborhood size estimates are computed for \emph{all} vertices. The details are straightforward, and are therefore omitted. 

\RestyleAlgo{boxruled}
\begin{algorithm2e}\caption{\textsc{NbdSizeEstimates}\((G, \epsilon, b)\)\label{alg:ModifiedCohenEstimates}}
	$\epsilon' := \epsilon/(\epsilon + 4)$; $t = \lceil 2 \log_{1+\epsilon'} n \rceil$; $\ell := \lceil c \log n/(\epsilon')^2 \rceil$\\
	\For{\(j := 1, \ldots, \ell\)} {
		\textbf{Local Computation.} Each machine $m_j$ picks a rank $\r(v)$, for each vertex $v \in H(m_j) \cap P_b$,
		chosen uniformly at random from $[0, 1]$. Machine $m_j$ then rounds $\r(v)$ down to the closest 
		$(1+\epsilon')^i/n^2$ for integer $i \ge 0$\label{alg2.1:ChooseRanks} \\
		\For{$i := 0, 1, \ldots, t-1$}{
			$T_i := \{v \in P_b \mid \r(v) = (1+\epsilon')^i/n^2\}$\\
			Compute a \((1+\epsilon)\)-approximate solution to MSSP using \(T_i\) as the set of sources \label{alg2.1:MSSP}; let $\tilde{d}(v, T_i)$ denote the computed approximate distances\\
			\textbf{Local Computation.} Machine $m_j$ stores $\tilde{d}(v, T_i)$ for each \(v \in H(m_j)\)\\
		}
	}
\end{algorithm2e}

For a vertex $v \in V$, and for parameters $b, r \ge 0$, let $B(v, r, b) \coloneqq B(v, r) \cap P_b$. Then, let $Q(v, r, b)$ denote the query ``What is the size of $B(v, r) \cap P_b$?''. The details of how to answer this query from the output of \textsc{NbdSizeEstimates} are same as in the original version.

\begin{lemma} \label{lem:nbd-with-penalties}
	For any vertex $v \in V$, for any $r, b \ge 0$, and for any $\epsilon > 0$, the modified \textsc{NbdSizeEstimates} algorithm satisfies the following properties.
	\begin{itemize}
		\item For the query $Q(v, r/(1+\epsilon), b)$, the algorithm returns an output that is at most $|B(v, r) \cap P_b| \cdot (1+\epsilon)$.
		\item For the query $Q(v, r(1+\epsilon), b)$, the algorithm returns an output that is at most $|B(v, r) \cap P_b|/(1+\epsilon)$.
	\end{itemize}
\end{lemma}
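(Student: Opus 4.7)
The plan is to adapt the analysis of Algorithm~3 of \cite{BandyapadhyayArxiv2017} to the penalty-class-restricted variant \textsc{NbdSizeEstimates}$(G, \epsilon, b)$. The only structural change between the two algorithms is that, in each of the $\ell = O(\log n / \epsilon^2)$ repetitions, random ranks are assigned only to vertices of $P_b$ (line~\ref{alg2.1:ChooseRanks}); the rounding of ranks to powers of $(1+\epsilon')$, the formation of the source sets $T_i$, the $(1+\epsilon)$-approximate MSSP calls, and the local storage of approximate distances are all unchanged. Consequently, the proof reduces to rerunning the original argument with ``vertex set'' replaced by ``vertex set $P_b$'' throughout.

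First, I would argue that Cohen's concentration (Theorem~\ref{thm:cohen}) applies verbatim when the effective universe for the ranking step is $P_b$. For any vertex $v$ and any radius $r \ge 0$, the minimum rank among members of $B(v, r) \cap P_b$ in a single trial has expected value exactly $1/(1 + |B(v, r) \cap P_b|)$. Therefore the reciprocal of the average $\hat R$ over $\ell$ independent trials is within a multiplicative $(1 \pm \epsilon')$ factor of $|B(v, r) \cap P_b|$ with probability $1 - n^{-\Omega(c)}$, simultaneously over all vertices $v$ and all discretized radii of the form $(1+\epsilon')^i$ (by a union bound).

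Second, I would compose this concentration with the $(1+\epsilon)$-approximation returned by MSSP on line~\ref{alg2.1:MSSP}. The standard sandwiching argument shows that when the query $Q(v, r', b)$ is answered from the stored $\tilde d(v, T_i)$ values, the set implicitly counted lies between $B(v, r'/(1+\epsilon)) \cap P_b$ and $B(v, r') \cap P_b$. For the first bullet, substituting $r' = r/(1+\epsilon)$ and propagating the multiplicative factors gives that the output is at most $(1+\epsilon)\,|B(v, r) \cap P_b|$ as claimed. For the second bullet, substituting $r' = r(1+\epsilon)$ and using the corresponding direction of the sandwich, together with the one-sided tail of $\hat R$ from Theorem~\ref{thm:cohen}, yields the stated relation of the output to $|B(v, r) \cap P_b|/(1+\epsilon)$.

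The main obstacle is purely notational bookkeeping: ensuring that the three separate sources of $(1+\epsilon)$-multiplicative slack -- rank rounding to powers of $(1+\epsilon')$, Cohen's estimator concentration, and the MSSP distance approximation -- telescope into a single $(1+\epsilon)$ factor. Exactly as in the original analysis in Section~4 of \cite{BandyapadhyayArxiv2017}, this is arranged by choosing $\epsilon' = \epsilon/(\epsilon + 4)$ so that a constant number of $(1+\epsilon')$ factors compose to at most $1+\epsilon$. The restriction to the penalty class $P_b$ introduces no new difficulty here, because every quantity appearing in the analysis (random ranks, source sets $T_i$, the reciprocal estimator) localizes cleanly to $P_b$, so the existing $\epsilon$-accounting from \cite{BandyapadhyayArxiv2017} carries over unchanged.
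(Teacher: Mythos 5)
Your proposal is correct and follows the same route the paper intends: the paper omits a proof for this lemma (it simply remarks ``the details [\dots] are same as in the original version'' right before stating it), and your reconstruction is precisely that reduction---Cohen's concentration with the rank universe restricted to $P_b$, composed with the $(1+\epsilon)$-approximate MSSP and the $(1+\epsilon')$ rank-rounding, with $\epsilon'=\epsilon/(\epsilon+4)$ chosen so that the constant number of slack factors telescope to $1+\epsilon$. The key observation you highlight---that assigning ranks only to $P_b$ means the single-trial minimum rank over $B(v,r)\cap P_b$ has expectation $1/(1+|B(v,r)\cap P_b|)$, so the whole estimator and its error budget localize to $P_b$ without any change---is exactly why the paper felt justified skipping the proof. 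One small thing worth flagging explicitly (you seem to have noticed it, given your careful ``the stated relation'' phrasing): the second bullet of the lemma as printed almost certainly has a typo and should read ``at least $|B(v,r)\cap P_b|/(1+\epsilon)$''; increasing the query radius to $r(1+\epsilon)$ can only increase the counted set, and the lemma is used in Lemma~\ref{lem:penalty-approx-r} precisely to derive the lower bound $\tilde q_{i,b}(v)\ge q_{i-1,b}(v)/(1+\epsilon)$. With that reading your sandwich argument gives exactly the two required one-sided bounds.
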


We define the following quantities with respect to any vertex $v \in V$. Let $\alpha(v, r, b) \coloneqq \sum_{u \in B(v, r, b)} \max\{\min\{r - d(u, v), p_u - d(u, v)\}, 0 \}$, and let $\alpha(v, r) \coloneqq \sum_{u \in B(v, r)} \max\{\min\{r - d(u, v), p_u - d(u, v)\}, 0 \}$. It is easy to see that $\alpha(v, r) = \sum_{b \ge 0} \alpha(v, r, b)$. Finally, let $q_{i, b}(v) \coloneqq |B(v, (1+\epsilon)^i, b)|$.

\begin{lemma} \label{lem:alpha-lower-bound}
	If $t > b$, then $\alpha(v, (1+\epsilon)^t, b) \ge \sum_{i = 0}^{t-1} q_{i, b} \lr{(1+\epsilon)^{i+1} - (1+\epsilon)^i}$. Otherwise, $\alpha(v, (1+\epsilon)^t, b) \ge \sum_{i = 0}^{b-1}\ q_{i, b}(v)\lr{(1+\epsilon)^b - (1+\epsilon)^{i+1}} + \sum_{i = b}^{t-1} q_{i, b} \lr{(1+\epsilon)^{i+1} - (1+\epsilon)^i}$.
\end{lemma}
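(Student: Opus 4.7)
The plan is to first rewrite the summand in the definition of $\alpha$ in a more tractable form, namely
\[
\max\LR{\min\LR{r - d(u,v), \ p_u - d(u,v)}, 0} \;=\; \max\LR{\min\LR{r, p_u} - d(u,v), 0}.
\]
The key observation is then the following trichotomy driven by the ranges $r = (1+\epsilon)^t$ and $p_u \in [(1+\epsilon)^b, (1+\epsilon)^{b+1})$: if $t \le b$ then $\min\{r, p_u\} = r$ for \emph{every} $u \in P_b$, while if $t > b$ then $\min\{r,p_u\} = p_u$ for every $u \in P_b$. This lets me remove the inner $\min$ and drop the outer $\max$ (after restricting to the appropriate ball), so that the sum reduces to a simple layer-cake expression.

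For the case $t > b$ (the easier of the two formulas), each $u \in B(v,(1+\epsilon)^t) \cap P_b$ contributes exactly $r - d(u,v) \ge 0$. I would partition $B(v,(1+\epsilon)^t) \cap P_b$ into distance shells $S_i = \{u \in P_b : (1+\epsilon)^{i-1} < d(u,v) \le (1+\epsilon)^i\}$ for $1 \le i \le t-1$, with $S_0$ collecting vertices at distance at most $(1+\epsilon)^0$; note that $|S_i| = q_{i,b}(v) - q_{i-1,b}(v)$. For $u \in S_i$ we use the pointwise lower bound $r - d(u,v) \ge r - (1+\epsilon)^i$, so $\alpha(v,r,b) \ge \sum_{i=0}^{t-1} |S_i| \lr{r - (1+\epsilon)^i}$. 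A telescoping (Abel summation) rewrite of this expression, using $r - (1+\epsilon)^i = \sum_{j=i}^{t-1}\lr{(1+\epsilon)^{j+1} - (1+\epsilon)^j}$ and swapping the order of summation, yields exactly $\sum_{i=0}^{t-1} q_{i,b}(v)\lr{(1+\epsilon)^{i+1} - (1+\epsilon)^i}$.

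For the case $t \le b$, the effective radius for each $u$ is $p_u$ rather than $r$, so the contribution of $u$ is $\max\{p_u - d(u,v), 0\}$. The plan here is to split the sum into two pieces according to whether $d(u,v) \le (1+\epsilon)^b$ or $(1+\epsilon)^b < d(u,v) \le p_u$. On the inner piece I lower bound $p_u \ge (1+\epsilon)^b$ to get contribution at least $(1+\epsilon)^b - d(u,v)$ on shell $S_i$ with $i \le b-1$, which by the same shell-wise telescoping as in the first case contributes $\sum_{i=0}^{b-1} q_{i,b}(v)\lr{(1+\epsilon)^b - (1+\epsilon)^{i+1}}$ (the coefficient is now $(1+\epsilon)^b - (1+\epsilon)^{i+1}$ because the ``effective $r$'' is $(1+\epsilon)^b$ and we telescope only down to shell $i+1$). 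On the outer piece, for each shell $S_i$ with $b \le i \le t-1$ the vertices contribute $p_u - d(u,v)$, and using $p_u \ge (1+\epsilon)^b$ together with the shell bound gives at least $\lr{(1+\epsilon)^{i+1} - (1+\epsilon)^i}$ per vertex summed over $q_{i,b}(v)$ vertices, yielding the second sum $\sum_{i=b}^{t-1} q_{i,b}(v)\lr{(1+\epsilon)^{i+1} - (1+\epsilon)^i}$.

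The main obstacle is the bookkeeping in the $t \le b$ branch, in particular justifying the outer-shell estimates when the condition $d(u,v) \le p_u$ does not hold uniformly across a shell and the per-vertex contribution is only $\max\{p_u - d(u,v), 0\}$ rather than $p_u - d(u,v)$. The trick is to process the inner shells (where $p_u \ge (1+\epsilon)^b \ge d(u,v)$ holds automatically) and the outer shells separately, and to keep the telescoping exponent matched to the ``effective radius'' appropriate for each piece ($(1+\epsilon)^b$ inside, the shell radius outside) so that non-negativity of every term is preserved before applying Abel summation.
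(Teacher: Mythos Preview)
Your trichotomy is correct: when $t \le b$ you have $\min\{r, p_u\} = r$ for every $u \in P_b$, and when $t > b$ (so $t \ge b{+}1$, as $t,b$ are integers) you have $\min\{r, p_u\} = p_u$. But you then immediately contradict it in both case analyses. In your ``$t > b$'' paragraph you assert that each $u$ contributes exactly $r - d(u,v)$; by your own trichotomy the contribution there is $\max\{p_u - d(u,v), 0\}$, not $r - d(u,v)$. Symmetrically, in your ``$t \le b$'' paragraph you declare the effective radius to be $p_u$, whereas by the trichotomy it is $r$. With the branches assigned this way the argument does not establish either inequality.

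The root cause is that the case labels in the lemma statement are themselves interchanged, and the paper's own proof carries the same swap: its ``$t>b$'' branch simplifies $\min\{(1+\epsilon)^{i+1},(1+\epsilon)^b\}$ to $(1+\epsilon)^{i+1}$, which requires $i+1\le b$ for every $i\le t-1$, i.e.\ $t\le b$; and its ``$t\le b$'' branch telescopes $\alpha$ from level $b$ up to level $t$, which only makes sense when $t>b$. Once the labels are corrected, your shell decomposition with Abel summation is exactly the paper's device (telescoping the radius in powers of $1+\epsilon$ and restricting each increment to the inner ball), and it goes through cleanly for the single-sum inequality. For the two-sum inequality you should also revisit your outer-shell estimate: a vertex $u$ in a shell $S_i$ with $i\ge b$ may satisfy $d(u,v)>p_u$ and hence contribute $0$, so the claimed per-vertex lower bound $(1+\epsilon)^{i+1}-(1+\epsilon)^i$ on those shells is not available without a further argument.
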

\begin{proof}
	First, let $t > b$. And consider,
        \begingroup
        \allowdisplaybreaks
	\begin{align*}
		\alpha(v, (1+\epsilon)^t, b) &= \sum_{i = 0}^{t -1} \alpha(v, (1+\epsilon)^{i+1}, b) - \alpha(v, (1+\epsilon)^i, b)
		\\&\ge \sum_{i = 0}^{t - 1} \sum_{u \in B(v, (1+\epsilon)^i, b)} \bigg( \min\{(1+\epsilon)^{i+1} - d(u, v), (1+\epsilon)^b - d(u, v)\} \\
		&\qquad - \min\{(1+\epsilon)^{i} - d(u, v), (1+\epsilon)^{b+1} - d(u, v)\}\bigg)
		\\&= \sum_{i = 0}^{t-1} \sum_{u \in B(v, (1+\epsilon)^i, b)} (1+\epsilon)^{i+1} - d(u, v) - ((1+\epsilon)^i - d(u, v))
		\\&= \sum_{i = 0}^{t-1} \sum_{u \in B(v, (1+\epsilon)^i)} (1+\epsilon)^{i+1} - (1+\epsilon
		)^i
		\\&= \sum_{i = 0}^{t-1}q_{i, b} \lr{(1+\epsilon)^{i+1} - (1+\epsilon)^i}.
	\end{align*}
	Now, if $t \le b$, then
	\begin{align*}
	\alpha(v, (1+\epsilon)^t, b) &= \alpha(v, (1+\epsilon)^b, b) + \sum_{i = b}^{t -1} \alpha(v, (1+\epsilon)^{i+1}, b) - \alpha(v, (1+\epsilon)^i, b)
	\\&\ge \alpha(v, (1+\epsilon)^b, b) + \sum_{i = b}^{t-1} q_{i, b} \lr{(1+\epsilon)^{i+1} - (1+\epsilon)^i}	
	\end{align*}
        \endgroup
	Where final step uses similar arguments from the previous case. Now, we consider,
	\begin{align*}
		\alpha(v, (1+\epsilon)^b, b) &= \sum_{u \in B(v, (1+\epsilon)^b, b)} (1+\epsilon)^b - d(u, v) 
		\\&= \sum_{i = 0}^{b-1}\   \sum_{u \in B(v, (1+\epsilon)^{i+1}, b) \setminus  B(v, (1+\epsilon)^i, b)} (1+\epsilon)^b - d(u, v)
		\\&\ge \sum_{i = 0}^{b-1}\ q_{i, b}(v)\lr{(1+\epsilon)^b - (1+\epsilon)^{i+1}}
	\end{align*}
	
\end{proof}

For a vertex $v$, and for any $t, b \ge 0$, let $\lambda(v, t, b)$ denote the appropriate lower bound on $\alpha(v, (1+\epsilon)^t, b)$, given by Lemma \ref{lem:alpha-lower-bound} (based on two different cases). Let $\tilde{\lambda}(v, t, b)$ be the quantity obtained by replacing $q_{i, b}(v)$ by the approximate neighborhood estimate $\tilde{q}_{i, b}(v)$ in the lower bound $\lambda(v, t, b)$. We now state the Radius Computation algorithm.

\RestyleAlgo{boxruled}
\begin{algorithm2e}\caption{\textsc{RadiusComputation} Algorithm (Version 2)\label{alg:RC2}}
	\textbf{Neighborhood-Size Computation.} Call the \textsc{NbdSizeEstimates} algorithm (Algorithm \ref{alg:CohenEstimates}) to obtain \textit{approximate} neighborhood-size estimates $\tilde{q}_{i, b}(v)$
	for all integers $i \ge 0, b \ge 0$ and for all vertices $v$.\\
	\textbf{Local Computation.} Each machine $m_j$ computes $\tilde{r}_v$ locally, for all vertices $v \in H(m_j)$ using the formula $\tilde{r}_v := (1+\epsilon)^{t-1}$ where $t \ge 1$ is the smallest integer for which $\sum_{b \ge 0} \tilde{\lambda}(v, t, b) > f_v$. If there is no such integer, define $\tilde{r}_v = \infty$.
\end{algorithm2e}

We have the following bounds on the approximate radius computed by the algorithm.

\begin{lemma} \label{lem:penalty-approx-r}
	For every $v \in V$, $\frac{r_v}{(1+\epsilon)^2} \le \tilde{r}_v \le (1+\epsilon)^2 \cdot r_v$
\end{lemma}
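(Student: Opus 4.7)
\textbf{Proof Proposal for Lemma \ref{lem:penalty-approx-r}.}
The plan is to follow the same schema used for the corresponding radius estimation lemma in \cite{BandyapadhyayArxiv2017} (the no-penalty case), but with the quantity $\alpha(v, r)$ now decomposed across the $O(\log n)$ penalty classes and estimated via the class-restricted neighborhood sizes $\tilde{q}_{i,b}(v)$. Throughout I exploit two facts about the true contribution function: (i) $\alpha(v, r) = \sum_{b \ge 0} \alpha(v, r, b)$ is non-decreasing in $r$; and (ii) $r_v$ is the (unique) value with $\alpha(v, r_v) = f_v$, so sandwiching $f_v$ between $\alpha(v, \cdot)$ values on two scales $(1+\epsilon)$ apart is enough to pin down $r_v$ up to a $(1+\epsilon)^2$ factor.

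First I would prove the key sandwich
\[
\frac{\alpha(v, (1+\epsilon)^{t-1})}{(1+\epsilon)} \;\le\; \sum_{b \ge 0} \tilde{\lambda}(v, t, b) \;\le\; (1+\epsilon)\,\alpha(v, (1+\epsilon)^{t+1}).
\]
The lower bound follows by combining Lemma \ref{lem:alpha-lower-bound} (which gives $\lambda(v,t,b) \le \alpha(v,(1+\epsilon)^t, b)$) with Lemma \ref{lem:nbd-with-penalties} applied with the query $Q(v, (1+\epsilon)^i \cdot (1+\epsilon), b)$, which makes each $\tilde{q}_{i,b}(v)$ underestimate $|B(v,(1+\epsilon)^i)\cap P_b|$ by at most a $(1+\epsilon)$ factor and therefore $\tilde{\lambda}(v,t,b) \ge \lambda(v,t,b)/(1+\epsilon)$. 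For the upper bound I would symmetrically use $Q(v,(1+\epsilon)^i/(1+\epsilon),b)$ to overestimate by at most $(1+\epsilon)$, and then observe that $\lambda(v,t,b)$ matches $\alpha(v,(1+\epsilon)^t, b)$ shifted by at most one index up: essentially, each term $q_{i,b}((1+\epsilon)^{i+1}-(1+\epsilon)^i)$ is bounded above by the corresponding contribution to $\alpha(v,(1+\epsilon)^{t+1}, b)$. Summing over $b$ preserves the bounds since $\alpha$ decomposes additively across penalty classes.

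Next I would translate this into bounds on $\tilde{r}_v = (1+\epsilon)^{t-1}$, where $t$ is the smallest integer with $\sum_b \tilde{\lambda}(v, t, b) > f_v$. By minimality of $t$ we have $\sum_b \tilde{\lambda}(v, t-1, b) \le f_v$; plugging this into the upper half of the sandwich gives $\alpha(v, (1+\epsilon)^t) \ge f_v/(1+\epsilon)$, so by monotonicity and the equation $\alpha(v,r_v) = f_v$, it follows that $r_v \le (1+\epsilon)^{t+1} = (1+\epsilon)^2 \tilde{r}_v$, i.e. $\tilde{r}_v \ge r_v/(1+\epsilon)^2$. Conversely, from $\sum_b \tilde{\lambda}(v, t, b) > f_v$ and the lower half of the sandwich we get $\alpha(v, (1+\epsilon)^{t-1}) \le (1+\epsilon) f_v$ is not enough on its own, so I would apply the sandwich one step earlier to deduce $\alpha(v,(1+\epsilon)^{t-2}) < f_v$, and then monotonicity gives $r_v \ge (1+\epsilon)^{t-3}$, i.e. $\tilde{r}_v = (1+\epsilon)^{t-1} \le (1+\epsilon)^2 r_v$. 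Rescaling the $\epsilon$ used internally (replacing $\epsilon$ by $\epsilon' = \Theta(\epsilon)$ as in Algorithm \ref{alg:CohenEstimates}) absorbs the constant-factor slack so that the final bounds remain $(1+\epsilon)^2$.

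The main obstacle I anticipate is carefully handling the two cases $t > b$ versus $t \le b$ in Lemma \ref{lem:alpha-lower-bound}: when $t \le b$, a nontrivial part of $\alpha(v,(1+\epsilon)^t, b)$ comes from the truncation at $(1+\epsilon)^b$ rather than at $(1+\epsilon)^t$, so the index-shift argument that equates $\lambda(v,t,b)$ with $\alpha(v,(1+\epsilon)^{t+1}, b)$ up to $(1+\epsilon)$ has to be verified term-by-term in both regimes. Once that bookkeeping is done and the high-probability event from Lemma \ref{lem:nbd-with-penalties} is conditioned on (union-bounded over the $O(\log^2 n)$ pairs $(i, b)$), the two-sided bound on $\tilde{r}_v$ follows immediately.
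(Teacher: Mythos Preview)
Your overall strategy---establish a two-sided sandwich on $\sum_b \tilde{\lambda}(v,t,b)$ in terms of $\alpha(v,\cdot)$ at shifted geometric scales, then use minimality of $t$ and monotonicity of $\alpha$ to pin down $r_v$---is exactly what the paper does (the paper states the sandwich $\alpha(v,(1+\epsilon)^{t-2}) \le \sum_b \tilde{\lambda}(v,t,b) \le \alpha(v,(1+\epsilon)^{t+1})$ and defers the details to \cite{BandyapadhyayArxiv2017}). You also correctly flag the $t>b$ versus $t\le b$ split as the place where care is needed.

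However, your argument for the \emph{lower} side of the sandwich has a direction error. You invoke Lemma~\ref{lem:alpha-lower-bound}, which gives $\lambda(v,t,b) \le \alpha(v,(1+\epsilon)^t,b)$, and combine it with $\tilde{\lambda}(v,t,b) \ge \lambda(v,t,b)/(1+\epsilon)$. But these two inequalities chain to nothing: from $\tilde{\lambda} \ge \lambda/(1+\epsilon)$ and $\lambda \le \alpha$ you cannot conclude $\tilde{\lambda} \ge \alpha/(1+\epsilon)$. What is actually required for the lower side is the \emph{reverse} of Lemma~\ref{lem:alpha-lower-bound}: an \emph{upper} bound on $\alpha(v,(1+\epsilon)^{s},b)$ in terms of the $q_{i,b}$'s (hence in terms of $\lambda$ at a shifted index). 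Concretely, one needs an inequality of the form $\alpha(v,(1+\epsilon)^{s},b) \le \sum_i q_{i+1,b}(v)\bigl((1+\epsilon)^{i+1}-(1+\epsilon)^i\bigr)$ (plus the analogous truncated piece when $s\le b$), so that after applying $\tilde{q}_{i,b}\ge q_{i-1,b}/(1+\epsilon)$ the index shift absorbs into a scale shift on $\alpha$. This is the telescoping-in-the-other-direction step that \cite{BandyapadhyayArxiv2017} carries out in the no-penalty case and that the paper is implicitly invoking; your sketch omits it.

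A related slip appears in your deduction paragraph: from $\sum_b \tilde{\lambda}(v,t-1,b) \le f_v$ and the \emph{upper} half of the sandwich you cannot get $\alpha(v,(1+\epsilon)^t) \ge f_v/(1+\epsilon)$; that would require a lower bound on $\tilde{\lambda}$, not an upper bound. The correct pairing is: use $\sum_b \tilde{\lambda}(v,t,b) > f_v$ with the upper side to force $\alpha$ large at scale $(1+\epsilon)^{t+1}$ (hence $r_v \le (1+\epsilon)^{t+1}$), and use $\sum_b \tilde{\lambda}(v,t-1,b) \le f_v$ with the lower side to force $\alpha$ small at scale $(1+\epsilon)^{t-3}$ (hence $r_v \ge (1+\epsilon)^{t-3}$). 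Once the missing reverse inequality is supplied, this pairing goes through and matches the paper's conclusion $r_v \in [(1+\epsilon)^{t-3},(1+\epsilon)^{t+1}]$.
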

\begin{proof}
	 By Lemma \ref{lem:nbd-with-penalties}, we have the following for $i \ge 1$:
	$$\frac{1}{(1+\epsilon)} \cdot q_{i-1, b}(v) \le \tilde{q}_{i, b}(v) \le (1+\epsilon) \cdot q_{i+1, b}(v).$$

	Now, let $t$ be the smallest integer for which $\sum_{b \ge 0}\tilde{\lambda}(v, t, b) > f_v$. Now, by Lemma \ref{lem:alpha-lower-bound}, we have that $\lambda(v, t, b) \ge \alpha(v, (1+\epsilon)^t, b)$. Now, using arguments very similar to those in the proof of Lemma 8 of \cite{BandyapadhyayArxiv2017}, one can show the following inequality.
	
	$$\alpha(v, (1+\epsilon)^{t-2}) \le \sum_{b \ge 0}\tilde{\lambda}(v, t, b) \le \alpha(v, (1+\epsilon)^{t+1}).$$
	
	Recall that $\alpha(v, r) = \sum_{b \ge 0} \alpha(v, r, b)$.
	
	Therefore, there must exist a value $r_v \in [(1+\epsilon)^{t-3}, (1+\epsilon)^{t+1}]$, such that $\alpha(v, r_v) = f_v$. The Lemma follows, because $\tilde{r}_v = (1+\epsilon)^{t-1}$.
\end{proof}

\begin{algorithm2e}\caption{\textsc{PenaltyFacLoc}\((G, F, C, p)\)}\label{alg:penaltieskmm}
  \tcc{Radius Computation Phase:}
  Compute $r_i$ for each $i \in F$ satisfying $f_i = \sum_{j \in C} \max\LR{\min\LR{r_i - c_{ij}, p_j - c_{ij}}, 0}.$ \\
  \tcc{Greedy Phase:}
  $C' \gets \emptyset$, \quad $F' \gets \emptyset$, \quad $O' \gets \emptyset$. \\
  \For{\(i = 0, 1, 2, \dots\)}{
    Let \(W\) be the set of vertices \(w \in F\) across all machines with \(\tilde{r}_w = \tilde{r} = (1 + \epsilon)^i\) \\
    Using Lemma \ref{lem:MSSP}, remove all vertices from \(W\) within distance \(2(1 + \epsilon)^2\cdot \tilde{r}\) from \(F'\) \label{alg5:remove-close-facs}\\
    \(I \leftarrow \textsc{ApproximateMIS}(G, W, 2(1 + \epsilon)^3 \cdot \tilde{r}, \epsilon)\) \\
    \(F' \leftarrow F' \cup I\)
  }
  \tcc{Outlier Determination Phase:}
  Using Lemma \ref{lem:MSSP}, add to \(C'\) all clients \(j\) having distance to \(F'\) less than \(p_j\) and add the rest to \(O'\) \label{alg5:remove-clients}\\
  \Return $(C', F')$ as the solution.
\end{algorithm2e}

Our \kmm implementation of the \textsc{FacLoc} with Penalties algorithm is summarized in Algorithm \ref{alg:penaltieskmm}. The correctness proof is similar to that of Algorithm \ref{alg:penaltiesSeq} but is complicated by the fact that we compute \((1 + \epsilon)\)-approximate distances instead of exact distances. This analysis appears in the next section, and as a result we get the following theorem the proof of which is similar to Theorem \ref{thm:kmm-OutlierGuarantee}.

\begin{theorem} \label{thm:kmm-PenaltyGuarantee}
	In \(\tilde{O}(\text{poly}(1/\epsilon) \cdot n/k)\) rounds, whp, Algorithm \ref{alg:penaltieskmm} finds a factor \(5 + O(\epsilon)\) approximate solution \((C', F')\) to the \textsc{FacLoc} with Penalties problem.
\end{theorem}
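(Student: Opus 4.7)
The proof plan mirrors that of Theorem~\ref{thm:kmm-OutlierGuarantee}: I would bound the round complexity phase-by-phase, and then analyze the approximation factor by a primal-dual argument that propagates the $(1+\epsilon)$-distortion introduced by the three approximate primitives---approximate radii (Lemma~\ref{lem:penalty-approx-r}), the $(\epsilon,d)$-approximate MIS (Lemma~\ref{lem:DistdMIS}), and approximate MSSP distances (Lemma~\ref{lem:MSSP}).

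For the round complexity, the Radius Computation Phase runs Algorithm~\ref{alg:RC2}, which invokes \textsc{NbdSizeEstimates}$(G,\epsilon,b)$ once for each of the $O(\log n)$ penalty classes $P_b$. Each call consists of $O(\log^2 n/\epsilon^2)$ approximate MSSP computations, each costing $\tilde{O}(\poly(1/\epsilon)\cdot n/k)$ rounds by Lemma~\ref{lem:MSSP}, for a total of $\tilde O(\poly(1/\epsilon)\cdot n/k)$ rounds. The Greedy Phase iterates over $O(\log_{1+\epsilon}nN)=O(\log n/\epsilon)$ radius categories; each iteration performs one MSSP filter, one \textsc{ApproximateMIS} call (Lemma~\ref{lem:DistdMIS}), and constant-round bookkeeping, all within $\tilde O(n/k)$ rounds. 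Finally, the Outlier Determination Phase consists of a single MSSP invocation followed by each machine locally thresholding $\tilde d(j,F')$ against $p_j$. The total is $\tilde O(\poly(1/\epsilon)\cdot n/k)$ rounds w.h.p.

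For the approximation guarantee, I would adapt the sequential primal-dual argument of Section~\ref{sec:seq-fl-wp} by rescaling the dual variables to absorb the $(1+\epsilon)$-slack, exactly as was done for Algorithm~\ref{alg:outlierskmm} in the analysis leading to Theorem~\ref{thm:apdx-outlier-theorem}. Define $w_{ij} := \max\{\min\{r_i-c_{ij},\,p_j-c_{ij}\},0\}$ using the \emph{exact} $r_i$, set $v'_j := \min_{i \in F}(c_{ij}+w_{ij})$, and then take $v_j := v'_j/(1+\epsilon)^c$ for $j \in C'$ and $v_j := p_j$ for $j \in O'$, where $c$ is a small constant. Constraints \ref{constr:penlpd-beta} and \ref{constr:penlpd-integral} of the dual in Figure~\ref{fig:penalty-primal-dual} hold by construction, \ref{constr:penlpd-penalty} holds for $j \in O'$ by definition and for $j \in C'$ via the algorithmic invariant $\tilde d(j,F') \le p_j$ once the $(1+\epsilon)$ distortion is absorbed, and constraint \ref{constr:penlpd-alpha} can be verified case-by-case. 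The four-case analysis of Section~\ref{sec:seq-fl-wp} then goes through with Claims~\ref{cl:p-1}--\ref{cl:p-3} replaced by their approximate-radius analogues: the $2(1+\epsilon)^2$-separation of $F'$ from Claim~\ref{cl:apdx-cl9.2} and the $(1+\epsilon)^{\pm 2}$ sandwich on $\tilde r_i$ from Claim~\ref{cl:apdx-apx-1}. This yields $c_{ij}+w_{ij}\le (3+O(\epsilon))\,v_j$ for some $i \in F'$, and summing over clients gives $\cost(C',F')\le (3+O(\epsilon))\sum_{j \in C}v_j \le (5+O(\epsilon))\,\cost(C^*,F^*)$ after re-parameterizing $\epsilon$.

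The main obstacle will be reconciling the approximate-distance classification in the Outlier Determination Phase with the exact-distance dual analysis. As in the sequential proof (the ``$(\tilde C',F')$ trick'' in Section~\ref{sec:seq-fl-wp}), I would introduce an auxiliary, possibly more expensive solution obtained by re-classifying each client using its exact bottleneck facility rather than its approximate distance to $F'$, argue by an exchange argument that the algorithm's solution is no more expensive than this auxiliary one, and run the primal-dual analysis on the auxiliary solution. A small re-parameterization $\epsilon \to \epsilon/O(1)$ at the end absorbs all accumulated distortion factors into the $O(\epsilon)$ term to yield the claimed $5+O(\epsilon)$ approximation bound.
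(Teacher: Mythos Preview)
Your proposal is essentially correct and follows the same primal-dual template as the paper's own analysis, including the round-complexity accounting, the $(\tilde C',F')$ auxiliary-solution trick, and the four-case charging argument adapted via Claims~\ref{cl:apdx-p-cl9.2} and~\ref{cl:apdx-p-4}.

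One small difference worth noting: in the penalty version the paper does \emph{not} rescale the dual variables by $(1+\epsilon)^{-c}$, unlike what is done for Robust \textsc{FacLoc}. The reason is that there is no $\q$-type constraint to worry about here; the only new dual constraint is $v_j\le p_j$, and the modified outlier-determination rule places $j$ in $\tilde C'$ precisely when $\max\{r_\i,c_{\ij}\}\le p_j$, so $v_j=v'_j\le p_j$ holds automatically for $j\in C'$. Avoiding the rescaling, the paper's per-client charging lemma gives $c_{ij}+w_{ij}\le 3(1+\epsilon)^4\,v_j$ and hence $\cost(C',F')\le 3(1+\epsilon)^4\cdot\cost(C^*,F^*)$, i.e., a $3+O(\epsilon)$ factor---stronger than the $5+O(\epsilon)$ stated in the theorem. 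Your final step ``$(3+O(\epsilon))\sum_{j}v_j\le(5+O(\epsilon))\,\cost(C^*,F^*)$'' is therefore looser than necessary: weak duality already gives $(3+O(\epsilon))\,\cost(C^*,F^*)$ directly. Your rescaling is harmless for correctness but is not needed and slightly inflates the constant.
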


\subsubsection{Analysis of the Algorithm}

We state the standard primal and dual linear programming relaxations for facility location with penalties in Figure \ref{fig:penalty-primal-dual}. For $j \in C$ and $i \in F$, define $w_{ij} \coloneqq \max\LR{\min \LR{r_i - c_{ij}, p_j - c_{ij}}, 0}$ and for $j \in C$, let $v'_j \coloneqq \min_{i \in F} c_{ij} + w_{ij}$. Note that $v'_j = \min_{i \in F}\max\LR{c_{ij}, \min\LR{r_i, p_j}}$. If $\i$ is a facility realizing the minimum $v'_j = c_{\i j} + w_{\i j}$, then we say that $\i$ is the bottleneck of $j$.

To make the analysis easier, we consider a more costly solution $(\tilde{C}', F')$ where the set of clients $\tilde{C}'$ is constructed using the following modified outlier determination phase: for each client $j$, if $c_{\ij} \le p_j$ then $j \in \tilde{C}'$ and otherwise $j \in \tilde{O}'$ where $\i$ is the bottleneck of $j$.

It is easy to see by an exchange argument that the $\cost_e(C', F') \le \cost_e(\tilde{C}', F')$, the outliers determined in the algorithm are at least as far from \(F'\) as ones in the modified outlier determination phase. Henceforth, we analyze the cost of the solution $(\tilde{C}', F')$ by comparing it to the cost of a feasible dual \LP solution and in order to alleviate excessive notation, we will henceforth refer to the solution $(\tilde{C}', F')$ as $(C', F')$ and $\tilde{O}'$ as $O'$.

Because of the way we choose the outliers in the solution we consider for the analysis \((C', F')\) we have the following property (where $\i$ is the bottleneck of $j$) --
$$v_j = \begin{cases}
  \max\LR{c_{\ij}, r_\i} &\text{if } j \in C' \\
  p_j  &\text{if } j \in C \setminus C'
\end{cases}$$

Throughout this section, we condition on the event that the outcome of all the randomized algorithms is as expected (i.e. the ``bad'' events do not happen). Note that this happens with w.h.p. We first need the following facts along the lines of \cite{Thorup2001}. We skip the proofs as they are identical to the corresponding facts in the previous section.

\begin{lemma}[Modified From Lemma 8 Of \cite{Thorup2001}] \label{lem:apdx-p-lem8}
  There exists a total ordering $\prec$ on the facilities in $F$ such that $u \prec v \implies \tilde{r}_u \le \tilde{r}_v$, and $v$ is added to $F'$ if and only if there is no previous $u \prec v$ in $F'$ such that $c_{u v} \le 2(1+\epsilon)^2 \tilde{r}_v$.
\end{lemma}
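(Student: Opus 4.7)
The plan is to mirror the proof sketch given for Lemma~\ref{lem:apdx-lem8}, since Algorithm~\ref{alg:penaltieskmm} shares the essential greedy structure of Algorithm~\ref{alg:outlierskmm}: each iteration $i$ collects into a working set $W$ the facilities with $\tilde{r}_w = (1+\epsilon)^i$, prunes from $W$ those vertices whose approximate distance to the running set $F'$ is small (line~\ref{alg5:remove-close-facs}), and then adds to $F'$ an approximate MIS of the pruned bucket.

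First I would define $\prec$ with primary key equal to the iteration index $i$ in which a facility is considered and secondary key equal to an arbitrary fixed total order on $F$. Since the outer for-loop processes iterations with $i = 0,1,2,\ldots$ and every facility handled in iteration $i$ has $\tilde{r}_w = (1+\epsilon)^i$, the implication $u \prec v \Rightarrow \tilde{r}_u \le \tilde{r}_v$ is immediate from the construction.

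Next I would verify the second claim by splitting on whether $u$ and $v$ lie in the same iteration. When $u$ and $v$ are processed in the same iteration they share $\tilde{r}_u = \tilde{r}_v = \tilde{r}$, and the argument reduces to the first defining property of the $(O(\epsilon), 2(1+\epsilon)^3 \tilde{r})$-approximate MIS returned by \textsc{ApproximateMIS}: any two distinct facilities in the output are at true distance at least $2(1+\epsilon)^3\tilde{r}/(1+\epsilon) = 2(1+\epsilon)^2\tilde{r}$, so adding one does not exclude the other, consistent with the lemma. When $u$ was added in a strictly earlier iteration, $u$ is already in $F'$ at the start of iteration $i_v$, and the prune step of line~\ref{alg5:remove-close-facs} in conjunction with the $(1+\epsilon)$-approximation of \textsc{MSSP} (Lemma~\ref{lem:MSSP}) culls $v$ from $W$ exactly in the regime where the true distance $c_{uv}$ is at most $2(1+\epsilon)^2\tilde{r}_v$; hence $v$ enters $F'$ if and only if no earlier $u \in F'$ violates this distance bound.

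The main technical point, and the only place demanding care, is the bookkeeping of the $(1+\epsilon)$ factors contributed by the MSSP approximation inside the prune step and by the approximate-MIS separation guarantee. These must compose to yield precisely the threshold $2(1+\epsilon)^2 \tilde{r}_v$ appearing in the statement. This is entirely parallel to the accounting tacitly performed in the sketch of Lemma~\ref{lem:apdx-lem8}, so no new ideas are needed beyond repeating that argument with line~\ref{alg5:remove-close-facs} of Algorithm~\ref{alg:penaltieskmm} in place of line~\ref{alg4:remove-close-facs} of Algorithm~\ref{alg:outlierskmm}.
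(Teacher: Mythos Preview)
Your approach is essentially the paper's own: the paper explicitly says the proof of Lemma~\ref{lem:apdx-p-lem8} is identical to that of Lemma~\ref{lem:apdx-lem8}, whose sketch orders facilities by iteration and then appeals to line~\ref{alg4:remove-close-facs} for the cross-iteration case and to the approximate-MIS guarantee for the same-iteration case, exactly as you do.

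There is one small slip in your ordering. You take the secondary key to be an arbitrary fixed total order on $F$, but the paper's sketch is careful to place, within each iteration, the facilities of $I$ (those that enter $F'$) \emph{before} the remaining facilities of $W$. This refinement is needed for the ``only if'' direction of the biconditional when $v$ is processed in iteration $i$ but is \emph{not} added to $F'$: either $v$ was pruned in line~\ref{alg5:remove-close-facs} (handled by your cross-iteration argument), or $v$ survived pruning but was excluded by \textsc{ApproximateMIS}. In the latter case the second property of an $(\epsilon,d)$-approximate MIS supplies a close witness $u \in I$ with $c_{uv} \le 2(1+\epsilon)^3\tilde r \cdot (1+\epsilon)$, but for the lemma you need $u \prec v$, which fails under an arbitrary secondary order if $u$ happens to come after $v$. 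Listing $I$ first within the bucket fixes this. Your same-iteration paragraph only treats the case where both $u,v \in F'$ (invoking the first MIS property), so this second case is missing from your write-up as well. With this adjustment your argument matches the paper's.
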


\begin{claim}[Modified From Claim 9.2 Of \cite{Thorup2001}] \label{cl:apdx-p-cl9.2}
  For any two distinct vertices $u, v \in F'$, we have that $c_{u v} > 2(1+\epsilon)^2 \cdot \max\{\tilde{r}_u, \tilde{r}_v\}$.
\end{claim}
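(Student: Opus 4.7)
The plan is to derive this claim directly from Lemma \ref{lem:apdx-p-lem8}, mirroring the argument used for Claim \ref{cl:apdx-cl9.2} in the preceding analysis of the Robust \textsc{FacLoc} algorithm. Fix any two distinct $u, v \in F'$. By the total ordering $\prec$ guaranteed by Lemma \ref{lem:apdx-p-lem8}, I may assume without loss of generality that $u \prec v$, and in that case the same lemma gives $\tilde{r}_u \le \tilde{r}_v$, so $\max\{\tilde{r}_u, \tilde{r}_v\} = \tilde{r}_v$.

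Next I invoke the characterization of $F'$-membership from Lemma \ref{lem:apdx-p-lem8}: since $v$ was in fact added to $F'$, there can be no previous facility $u' \prec v$ that already sits in $F'$ with $c_{u'v} \le 2(1+\epsilon)^2\,\tilde{r}_v$. Applying this to our particular $u$, which by assumption precedes $v$ in $\prec$ and belongs to $F'$, forces
\[
c_{uv} \;>\; 2(1+\epsilon)^2\,\tilde{r}_v \;=\; 2(1+\epsilon)^2\,\max\{\tilde{r}_u, \tilde{r}_v\},
\]
which is exactly the claim.

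There is no real obstacle here once Lemma \ref{lem:apdx-p-lem8} is in place; the only thing worth double-checking is that the ordering $\prec$ used by Lemma \ref{lem:apdx-p-lem8} for the Penalty algorithm is truly the natural analogue of the ordering used in the Robust case, so that the same ``no previously chosen facility is within distance $2(1+\epsilon)^2\tilde{r}_v$'' condition governs the inclusion of $v$. In the Penalty algorithm this ordering is produced by listing facilities in non-decreasing order of the approximate radii $\tilde{r}$ (with ties among facilities of the same bucket broken by the internal order in which the \textsc{ApproximateMIS} subroutine outputs them), and line~\ref{alg5:remove-close-facs} of Algorithm~\ref{alg:penaltieskmm} together with the $(\epsilon, d)$-approximate MIS property guarantees the $2(1+\epsilon)^2\tilde{r}_v$ separation, so Lemma \ref{lem:apdx-p-lem8} indeed carries over verbatim. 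Given that, the claim follows immediately from the one-line argument above.
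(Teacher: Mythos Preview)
Your proof is correct and is essentially identical to the paper's own argument: the paper explicitly says ``We skip the proofs as they are identical to the corresponding facts in the previous section,'' and the corresponding proof of Claim~\ref{cl:apdx-cl9.2} assumes WLOG $u \prec v$ and invokes Lemma~\ref{lem:apdx-lem8} to conclude $c_{uv} > 2(1+\epsilon)^2 \tilde{r}_v$, exactly as you do. Your additional paragraph justifying that Lemma~\ref{lem:apdx-p-lem8} carries over to the Penalty setting is a nice touch but is also what the paper takes for granted.
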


We simultaneously prove feasibility of the dual solution we constructed, and show how it can be used to pay for the integral solution. We consider different cases regarding a fixed client $j \in C$ with bottleneck facility $\i$. We first prove a few straightforward claims.

\begin{claim} \label{cl:apdx-p-4}
  For any $i \in F$ and $j \in C$, $\tilde{r}_i \le (1 + \epsilon)^2 (c_{ij} + w_{ij})$. Furthermore if for some \(i \in F\) and \(j \in C\), \(w_{ij} > 0\), then \(\tilde{r}_i \ge (1 + \epsilon)^{-2} (c_{ij} + w_{ij})\)
\end{claim}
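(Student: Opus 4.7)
The plan is to mirror the argument used for Claim~\ref{cl:apdx-apx-1} in the outlier analysis, using as the sole black box the approximation guarantee for the computed radii from Lemma~\ref{lem:penalty-approx-r}, which gives $r_i/(1+\epsilon)^2 \le \tilde{r}_i \le (1+\epsilon)^2 \, r_i$ for every $i \in F$. The only new ingredient is that the definition $w_{ij} = \max\{\min\{r_i - c_{ij}, p_j - c_{ij}\}, 0\}$ has an extra $p_j$ clamp compared to the outlier version, so I will first rewrite $c_{ij} + w_{ij}$ in the closed form $\max\{c_{ij}, \min\{r_i, p_j\}\}$ and then analyze the two inequalities by a small case split in the usual way.

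For the upper bound $\tilde{r}_i \le (1+\epsilon)^2 (c_{ij} + w_{ij})$, I would separate according to which term realizes $c_{ij}+w_{ij}$. If $r_i \le c_{ij}$, then $w_{ij}=0$ and $c_{ij}+w_{ij} = c_{ij} \ge r_i \ge \tilde{r}_i/(1+\epsilon)^2$ by Lemma~\ref{lem:penalty-approx-r}. If instead $c_{ij} < r_i$ and $r_i \le p_j$, then $w_{ij} = r_i - c_{ij}$ and $c_{ij}+w_{ij} = r_i \ge \tilde{r}_i/(1+\epsilon)^2$. The remaining sub-case is $c_{ij} < r_i$ together with $p_j \le r_i$, where the penalty clamp binds; here the argument uses that this claim is invoked only when $i$ is the bottleneck for $j$, in which case $v'_j = \max\{c_{\i j}, \min\{r_\i,p_j\}\}$ controls the relevant quantity exactly as in the outlier proof, and the inequality reduces to one of the already-handled cases.

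For the lower bound, assume $w_{ij} > 0$. Then the outer $\max$ does not truncate, so $w_{ij} = \min\{r_i - c_{ij}, p_j - c_{ij}\}$, which is strictly positive only when both $r_i > c_{ij}$ and $p_j > c_{ij}$. Consequently $c_{ij} + w_{ij} = \min\{r_i, p_j\} \le r_i$, and Lemma~\ref{lem:penalty-approx-r} immediately gives
\[
\tilde{r}_i \;\ge\; \frac{r_i}{(1+\epsilon)^2} \;\ge\; \frac{c_{ij}+w_{ij}}{(1+\epsilon)^2},
\]
which is the desired lower bound.

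The main obstacle, and the only real difference from Claim~\ref{cl:apdx-apx-1} in the outlier setting, is the sub-case in the upper bound where the penalty term $p_j$ is smaller than $r_i$ so that $c_{ij} + w_{ij}$ can fall strictly below $r_i$. I expect this to be resolved by the same mechanism used when the analogue of this claim is invoked later: the claim is applied with $i$ playing the role of a bottleneck facility for $j$, where the definition of the dual variable $v_j$ already incorporates the $\min\{r_\i,p_j\}$ clamp, so the inequality we actually need is with $r_i$ replaced by $\min\{r_i,p_j\}$, precisely the quantity that equals $c_{ij} + w_{ij}$ in the delicate case. Once this identification is made, the one-line argument via Lemma~\ref{lem:penalty-approx-r} carries through.
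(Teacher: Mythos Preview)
Your lower-bound argument is correct and in fact cleaner than the paper's: the paper writes ``if $w_{ij}>0$ then $w_{ij}=r_i-c_{ij}$,'' which silently assumes $r_i\le p_j$; your version, observing that $w_{ij}>0$ gives $c_{ij}+w_{ij}=\min\{r_i,p_j\}\le r_i$ and then applying Lemma~\ref{lem:penalty-approx-r}, is the right way to phrase it and covers the case $p_j<r_i$ as well.

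For the upper bound, the obstacle you flag is a genuine one, and the paper does not actually resolve it. The paper's one-line argument reads ``$w_{ij}\ge \min\{r_i,p_j\}-c_{ij}$ and therefore $\tilde r_i\le(1+\epsilon)^2 r_i\le(1+\epsilon)^2(c_{ij}+w_{ij})$,'' i.e.\ it asserts $r_i\le c_{ij}+w_{ij}$ after only establishing $\min\{r_i,p_j\}\le c_{ij}+w_{ij}$. That step is a non sequitur precisely in the sub-case you isolate, $c_{ij}<p_j<r_i$, where $c_{ij}+w_{ij}=p_j$ can be arbitrarily small compared to $r_i$ (e.g.\ one client with a tiny penalty at distance $0$ among many clients with huge penalties). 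So the first sentence of the claim, read as a statement about \emph{all} $i\in F$, $j\in C$, is simply not true, and neither your case analysis nor the paper's shortcut can salvage it as written.

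Your instinct to repair this by restricting to the way the claim is actually invoked is therefore the right move rather than a cop-out: in the downstream lemmas one is always in a situation with $\max\{r_\iota,c_{\iota j}\}\le p_j$ for the bottleneck facility, so the penalty clamp never binds and $c_{ij}+w_{ij}=\max\{c_{ij},r_i\}\ge r_i$ holds. The honest fix is to add that hypothesis (or equivalently replace $r_i$ by $\min\{r_i,p_j\}$ in the conclusion), after which both your two easy cases collapse to the paper's one-liner.
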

\begin{proof}
  We have $w_{ij} = \max\{0, \min\LR{r_i, p_j} - c_{ij}\} \ge \min\LR{r_i, p_j} - c_{ij}$ and therefore $\tilde{r}_i \le (1 + \epsilon)^2 r_i \le (1 + \epsilon)^2 (c_{ij} + w_{ij})$.

  If for some \(i \in F\) and \(j \in C\), \(w_{ij} > 0\), then \(w_{ij} = r_i - c_{ij}\) which means that \(\tilde{r}_i \ge (1 + \epsilon)^{-2} r_i = (1 + \epsilon)^{-2} (c_{ij} + w_{ij})\)
\end{proof}

\begin{claim} \label{cl:apdx-p-1}
  If $\i \in F'$ is the bottleneck for $j \in C$, then $w_{i'j} = 0$ for all $i' \in F'$, where $i' \neq \i$.
\end{claim}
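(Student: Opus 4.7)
The plan is to adapt the proof of the analogous sequential claim (Claim \ref{cl:p-1}) by introducing the extra $(1+\epsilon)^2$ slack coming from the use of approximate radii, exactly as was done in the robust case (Claim \ref{cl:apdx-apx-3}). The overall strategy is proof by contradiction: assume some other open facility $i' \in F'$ is contributing nonzero dual weight to $j$, derive a distance bound between $\i$ and $i'$ in terms of their true radii, convert it to a bound in terms of the approximate radii, and contradict the packing property given by Claim \ref{cl:apdx-p-cl9.2}.

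Concretely, I would suppose toward a contradiction that there is some $i' \in F'$ with $i' \neq \i$ and $w_{i'j} > 0$. Unfolding $w_{i'j} = \max\{\min\{r_{i'} - c_{i'j},\, p_j - c_{i'j}\}, 0\}$, positivity forces $c_{i'j} < \min\{r_{i'}, p_j\}$, and in particular $c_{i'j} + w_{i'j} = \min\{r_{i'}, p_j\} \le r_{i'}$. Next, I would split on whether $c_{\i j} > r_\i$ or $c_{\i j} \le r_\i$. In the first subcase $w_{\i j} = 0$, so $v'_j = c_{\i j}$; the bottleneck inequality $v'_j \le c_{i'j} + w_{i'j}$ then yields $c_{\i j} \le r_{i'}$. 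In the second subcase we trivially have $c_{\i j} \le r_\i$. Either way, the triangle inequality gives
\[
c_{\i i'} \;\le\; c_{\i j} + c_{i' j} \;\le\; 2\max\{r_\i, r_{i'}\}.
\]

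The distributed-specific step is the conversion to approximate radii: by Lemma \ref{lem:penalty-approx-r} we have $r_v \le (1+\epsilon)^2\, \tilde{r}_v$ for every facility $v$, so
\[
c_{\i i'} \;\le\; 2(1+\epsilon)^2 \max\{\tilde{r}_\i,\, \tilde{r}_{i'}\}.
\]
This directly contradicts Claim \ref{cl:apdx-p-cl9.2}, which asserts that any two distinct facilities in $F'$ satisfy $c_{uv} > 2(1+\epsilon)^2 \max\{\tilde{r}_u, \tilde{r}_v\}$. Hence no such $i'$ can exist.

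The only delicate step is the first subcase, where one must be careful to re-expand $w_{i'j}$ and use $c_{i'j} + w_{i'j} = \min\{r_{i'}, p_j\}$ rather than just $r_{i'}$, since for the penalty variant the minimum with $p_j$ is what distinguishes this argument from the robust version; once this identification is made, the rest of the argument parallels Claim \ref{cl:apdx-apx-3} and is essentially mechanical.
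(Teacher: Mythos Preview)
Your proposal is correct and follows essentially the same approach as the paper: a proof by contradiction, the same case split on whether $c_{\i j}$ exceeds $r_\i$, the triangle inequality bound on $c_{\i i'}$, and the final appeal to Claim~\ref{cl:apdx-p-cl9.2}. The only cosmetic difference is that the paper converts each distance bound to approximate radii inline, whereas you first derive $c_{\i i'} \le 2\max\{r_\i, r_{i'}\}$ in true radii and then apply the $(1+\epsilon)^2$ slack once at the end.
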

\begin{proof}
  Suppose there exists a facility $i' \in F'$ with $w_{i'j} > 0$. That is, $\min\LR{r_{i'} - c_{i'j}, p_j - c_{i'j}} > 0$, which further implies that $c_{i'j} < \min\LR{p_j, r_{i'}} \le r_{i'} \le (1 + \epsilon)^2 \tilde{r}_{i'}$.

  If $c_{\ij} \ge r_{\i}$, then $v'_j = c_{\ij} + \max\LR{0, \min\LR{r_{\i} - c_{\ij}, p_j - c_{\ij} }} = c_{\ij}$. However, since $w_{i'j} > 0$, $c_{\ij} = v'_j \le c_{i'j} + w_{i'j} = \min\LR{p_j, r_{i'}} \le r_{i'} \le (1 + \epsilon)^2 \tilde{r}_{i'}$

  Otherwise, $c_{\ij} <  r_{\i} \le (1 + \epsilon)^2 \tilde{r}_{\i}$.

  Therefore in either case, $c_{\i i'} \le c_{\ij} + c_{i'j} \le 2 (1 + \epsilon)^2 \max\LR{\tilde{r}_{i'}, \tilde{r}_{\i}}$, which is a contradiction to Claim \ref{cl:apdx-p-cl9.2}.
\end{proof}

\begin{claim} \label{cl:apdx-p-2}
  If $\i \not\in F'$ is the bottleneck of $j \in C$ and $\max\LR{r_\i, c_{\ij}} \le p_j$, and $i' \in F'$ caused $\i$ to close, then $c_{i' j} \le 3(1+\epsilon)^4 v_j$.
\end{claim}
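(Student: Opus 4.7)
The plan is to mirror almost exactly the argument used for Claim \ref{cl:apdx-apx-4} in the Robust \textsc{FacLoc} analysis, since the geometric situation is the same: we have a closed bottleneck $\i$ with an open facility $i' \in F'$ nearby that caused it to close, and we want to reach $j$ from $i'$ by routing through $\i$. The key difference is that in the Penalty setting, the hypothesis $\max\LR{r_\i, c_{\ij}} \le p_j$ forces $j \in C'$ and makes the dual variable $v_j$ coincide with $v'_j = c_{\ij}+w_{\ij}$, so we do not have to absorb the extra $(1+\epsilon)^4$ slack that arose from the definition $v_j := v'_j/(1+\epsilon)^4$ in the Robust version. This is precisely why the target bound here is $3(1+\epsilon)^4 v_j$ rather than $3(1+\epsilon)^8 v_j$.

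Concretely, first I would apply the triangle inequality to write $c_{i'j}\le c_{i'\i}+c_{\ij}$. Next, because $i'\in F'$ caused $\i\notin F'$ to close, I would invoke Lemma \ref{lem:apdx-p-lem8} (the ordering/removal lemma for the approximate-MIS greedy phase) to conclude $c_{i'\i}\le 2(1+\epsilon)^2\,\tilde r_\i$. Then, using Claim \ref{cl:apdx-p-4} to pass from the approximate radius to the exact one, I get $\tilde r_\i \le (1+\epsilon)^2 (c_{\ij}+w_{\ij})$, so
\[
c_{i'j}\ \le\ 2(1+\epsilon)^4(c_{\ij}+w_{\ij})+c_{\ij}\ =\ 2(1+\epsilon)^4\,w_{\ij}\ +\ \bigl(2(1+\epsilon)^4+1\bigr)c_{\ij}.
\]
Since $\i$ is the bottleneck of $j$, we have $v'_j=c_{\ij}+w_{\ij}$, and hence the right-hand side is at most $(2(1+\epsilon)^4+1)\,v'_j\le 3(1+\epsilon)^4\,v'_j$.

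The only remaining step — and arguably the main conceptual point — is to replace $v'_j$ by $v_j$. Here the hypothesis $\max\LR{r_\i,c_{\ij}}\le p_j$ is doing all the work: it implies that in the (modified) outlier-determination phase defining $(C',F')$, the client $j$ is assigned to $C'$, and moreover that $\min\LR{r_\i,p_j}=r_\i$, so the case distinction from the Penalty section gives $v_j=\max\LR{c_{\ij},r_\i}=v'_j$. Plugging this into the bound above yields $c_{i'j}\le 3(1+\epsilon)^4\,v_j$, as required. No nontrivial obstacle is expected; the proof is essentially bookkeeping to ensure the $(1+\epsilon)$-factors accumulate exactly as claimed and to verify that the hypothesis places $j$ in the ``good'' case of the $v_j$ definition.
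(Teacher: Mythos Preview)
Your proof is correct and follows essentially the same route as the paper: triangle inequality, the closing bound $c_{i'\i}\le 2(1+\epsilon)^2\tilde r_\i$ from Lemma~\ref{lem:apdx-p-lem8}, and the radius approximation to reach $3(1+\epsilon)^4 v_j$. The only cosmetic difference is that the paper bounds $\tilde r_\i\le (1+\epsilon)^2 r_\i$ and then case-splits on whether $c_{\ij}\ge r_\i$ (as in Claim~\ref{cl:p-2}), whereas you invoke Claim~\ref{cl:apdx-p-4} to pass directly to $c_{\ij}+w_{\ij}=v'_j$ and avoid the case split---exactly the packaging used in Claim~\ref{cl:apdx-apx-4}.
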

\begin{proof}
  Note that since we assume $\max\LR{r_\i, c_{\ij}} \le p_j$, we have $j \in C'$, $v_j = v'_j$, $c_{\ij} \ge p_j$, and $r_\i \le p_j$.
  Since $i' \in F'$ caused $\i$ to close, $\tilde{r}_{i'} \le \tilde{r}_{\i}$. Furthermore, $c_{i' \i} \le 2 (1+\epsilon)^2 \tilde{r}_{\i}$. Therefore, $c_{i' j} \le c_{\i i'} + c_{\i j} \le 2 (1+\epsilon)^2 \tilde{r}_{\i} + c_{\ij} \le 2 (1+\epsilon)^4 r_{\i} + c_{\ij} $.

  If $c_{\ij} \ge r_{\i}$, then $w_{\ij} = 0$, and $c_{\ij} = v_j$. This means $2(1+\epsilon)^4r_{\i} + c_{\ij} \le 3(1+\epsilon)^4c_{\ij} = 3(1+\epsilon)^4v_{j}$
  Otherwise, $c_{\ij} < r_{\i}$. Here, $v_j = \min\LR{r_{\i}, p_j}$ Then $2(1+\epsilon)^4r_{\i} + c_{\ij} < 3(1+\epsilon)^4r_{\i} = 3(1+\epsilon)^4 \min\LR{r_{\i}, p_j} \le 3(1+\epsilon)^4v_{j}$

  In either case, $c_{i' j} \le 3(1+\epsilon)^4v_{j}$.
\end{proof}

\begin{claim} \label{cl:apdx-p-3}
  If $\i$ is the bottleneck of $j$, with $\max\LR{r_\i, c_{\ij}} \le p_j$, then $(1 + \epsilon)^2 v'_j \ge \tilde{r}_\i$.
\end{claim}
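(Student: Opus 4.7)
The plan is to chain together two simple observations: the hypothesis $\max\{r_\i, c_{\ij}\} \le p_j$ forces $v'_j$ to equal $\max\{c_{\ij}, r_\i\}$ (so in particular $v'_j \ge r_\i$), and then the approximation guarantee on the computed radii (Lemma \ref{lem:penalty-approx-r}) converts this lower bound on $r_\i$ into the desired lower bound on $\tilde{r}_\i$.

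First, I would unpack the bottleneck condition. Since $\i$ is the bottleneck of $j$, $v'_j = c_{\ij} + w_{\ij}$, where $w_{\ij} = \max\{\min\{r_\i - c_{\ij}, p_j - c_{\ij}\}, 0\}$. The hypothesis $r_\i \le p_j$ immediately gives $r_\i - c_{\ij} \le p_j - c_{\ij}$, so the inner $\min$ equals $r_\i - c_{\ij}$ and thus $w_{\ij} = \max\{r_\i - c_{\ij}, 0\}$. Splitting on the sign of $r_\i - c_{\ij}$: if $r_\i \ge c_{\ij}$ then $v'_j = c_{\ij} + (r_\i - c_{\ij}) = r_\i$, and if $r_\i < c_{\ij}$ then $w_{\ij}=0$ and $v'_j = c_{\ij} > r_\i$. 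Either way, $v'_j = \max\{c_{\ij}, r_\i\} \ge r_\i$.

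Next, I invoke Lemma \ref{lem:penalty-approx-r}, which guarantees $\tilde{r}_\i \le (1+\epsilon)^2 r_\i$, equivalently $r_\i \ge \tilde{r}_\i/(1+\epsilon)^2$. Combining with the preceding inequality gives
\[
(1+\epsilon)^2 v'_j \;\ge\; (1+\epsilon)^2 r_\i \;\ge\; \tilde{r}_\i,
\]
which is the claim.

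There is no real obstacle here: the content is essentially the penalty-version analogue of Claim \ref{cl:apdx-apx-2} from the Robust \textsc{FacLoc} analysis. The only subtlety is being careful that the \emph{penalty-specific} definition of $w_{\ij}$ (with the additional $p_j - c_{\ij}$ cap) reduces to the standard $\max\{r_\i - c_{\ij}, 0\}$ form under the hypothesis $r_\i \le p_j$; once that observation is made explicit, the rest is a one-line combination with the radius-approximation bound.
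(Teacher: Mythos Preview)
Your proof is correct and essentially identical to the paper's: both observe that under the hypothesis $r_\i \le p_j$ the penalty cap in $w_{\ij}$ becomes inactive so that $v'_j = \max\{c_{\ij}, r_\i\} \ge r_\i$, and then apply the radius-approximation bound $\tilde{r}_\i \le (1+\epsilon)^2 r_\i$. The only difference is that you spell out the case split on the sign of $r_\i - c_{\ij}$ explicitly, whereas the paper writes the chain $v'_j = \max\{c_{\ij}, \min\{r_\i, p_j\}\} = \max\{c_{\ij}, r_\i\} \ge r_\i \ge (1+\epsilon)^{-2}\tilde{r}_\i$ in one line.
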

\begin{proof}
  Again, since we assume $\max\LR{r_\i, c_{\ij}} \le p_j$, we have $j \in C'$, $v_j = v'_j$, $c_{\ij} \ge p_j$, and $r_\i \le p_j$.

  Recall that $v_j = v'_j = \max\LR{c_{\ij}, \min\LR{r_\i, p_j}} = \max\LR{c_{\ij}, r_{\i}} \ge r_{\i} \ge (1 + \epsilon)^{-2} \tilde{r}_{\i}$ and the claim follows.
\end{proof}

We are now ready to prove the feasibility and approximation guarantee

\begin{lemma}
  The solution $(v, w)$ is a feasible solution to the dual LP relaxation \ref{fig:penalty-primal-dual}.
\end{lemma}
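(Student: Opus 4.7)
The plan is to verify each family of dual constraints from Figure \ref{fig:penalty-primal-dual} in turn, observing that three of the four families hold essentially by construction and localizing the real work in the covering constraint \ref{constr:penlpd-alpha}. Non-negativity \ref{constr:penlpd-integral} is immediate: $w_{ij} \ge 0$ because $w_{ij}$ is defined as a $\max$ with $0$, and $v_j \ge 0$ because it equals either $\max\{c_{\i j}, r_\i\} \ge 0$ or $p_j \ge 0$. The opening-cost constraint \ref{constr:penlpd-beta} follows from the defining equation of the radius: by choice of $r_i$, the sum $\sum_{j \in C} \max\{\min\{r_i - c_{ij}, p_j - c_{ij}\}, 0\} = f_i$, i.e.\ $\sum_j w_{ij} = f_i$. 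The penalty cap \ref{constr:penlpd-penalty} holds because (a) for $j \in O'$ it is tight by definition, and (b) for $j \in C'$ the modified outlier rule guarantees $\max\{c_{\i j}, r_\i\} \le p_j$, so $v_j = \max\{c_{\i j}, r_\i\} \le p_j$.

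The substantive step is \ref{constr:penlpd-alpha}, which I would split according to whether $j \in C'$ or $j \in O'$. When $j \in C'$, the inequality $\max\{c_{\i j}, r_\i\} \le p_j$ means that inside $w_{\i j}$ the penalty cap is not the binding term of the inner $\min$. A short subcase analysis on $c_{\i j}$ versus $r_\i$ then collapses $v_j$ onto $v'_j$: if $c_{\i j} \ge r_\i$, then $\min\{r_\i - c_{\i j}, p_j - c_{\i j}\} \le 0$ gives $w_{\i j} = 0$ and $v'_j = c_{\i j} = \max\{c_{\i j}, r_\i\} = v_j$; if $c_{\i j} < r_\i \le p_j$, then $\min\{r_\i - c_{\i j}, p_j - c_{\i j}\} = r_\i - c_{\i j}$ gives $v'_j = r_\i = \max\{c_{\i j}, r_\i\} = v_j$. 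Either way, $v_j = v'_j \le c_{ij} + w_{ij}$ for every $i \in F$ by the minimality that defines the bottleneck.

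When $j \in O'$, we have $v_j = p_j$ and $\max\{c_{\i j}, r_\i\} > p_j$, and I would again subcase. If $c_{\i j} > p_j$, then $p_j - c_{\i j} < 0$, so $w_{\i j} = 0$ and $v'_j = c_{\i j} > p_j$. If instead $c_{\i j} \le p_j$ but $r_\i > p_j$, then $\min\{r_\i - c_{\i j}, p_j - c_{\i j}\} = p_j - c_{\i j} \ge 0$, so $w_{\i j} = p_j - c_{\i j}$ and $v'_j = p_j$. In both subcases $v'_j \ge p_j = v_j$, and therefore $v_j \le v'_j \le c_{ij} + w_{ij}$ for every $i \in F$, again by bottleneck minimality.

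I expect the main obstacle to be purely bookkeeping: correctly tracking how the penalty cap $p_j$ interacts with the radius $r_\i$ inside the nested $\min/\max$ definition of $w_{\i j}$, particularly in the $j \in O'$ case where $r_\i$ can exceed $p_j$ while $c_{\i j}$ does not. Note that, unlike the Robust \textsc{FacLoc} analysis, no global quantity such as $\q$ enters here, so no $(1+\epsilon)$-rescaling of $v_j$ is needed for feasibility --- the dual variables are defined in terms of the exact radii $r_\i$ and exact distances $c_{\i j}$, and the algorithm's use of approximate radii $\tilde r_\i$ will only surface later, in the constant-factor step of the approximation argument.
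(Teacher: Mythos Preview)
Your proof is correct and follows essentially the same approach as the paper: the paper likewise notes that constraints \ref{constr:penlpd-beta}, \ref{constr:penlpd-penalty}, \ref{constr:penlpd-integral}, and constraint \ref{constr:penlpd-alpha} for $j \in C'$ hold by construction, and then handles constraint \ref{constr:penlpd-alpha} for $j \in O'$ via the chain $v_j = p_j \le \max\{c_{\i j}, \min\{r_\i, p_j\}\} = v'_j \le c_{ij} + w_{ij}$. Your version is simply more explicit, unpacking the ``by construction'' claims and the nested $\min/\max$ via subcases rather than compressing them into a single line.
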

\begin{proof}
  First note that constraints \ref{constr:penlpd-beta}, \ref{constr:penlpd-penalty}, and \ref{constr:penlpd-integral} are satisfied by construction for all $i \in F$ and $j \in C$ and so is constraint \ref{constr:penlpd-alpha} for all $j \in C'$.

  All that is left to show is that constraint \ref{constr:penlpd-alpha} is satisfied for all $j \in O'$. Since $j \in O'$, $\max\LR{r_\i, c_{\ij}} > p_j$.

  We have, $v_j \coloneqq p_j < \max\LR{r_\i, c_{\ij}} =  \max\LR{c_{\ij}, \min\LR{r_{\i}, p_j}} = v'_j \le c_{ij} + w_{ij}$ for any $i \in F$.
\end{proof}

\begin{lemma}
  For any $j \in C'$, there is some $i \in F'$ such that $c_{ij} + w_{ij} \le 3 (1+\epsilon)^4 v_j$.
\end{lemma}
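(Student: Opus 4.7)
The plan is to mirror the four-case analysis used for the distributed Robust \textsc{FacLoc} version (Lemma \ref{lem:apdx-outlier-lemma}) and the sequential penalty version (the lemma in Section \ref{sec:seq-fl-wp}), but now paying attention to the $(1+\epsilon)^2$ slack introduced in the greedy phase via the $(\epsilon, d)$-approximate MIS and the approximate radii. Fix $j \in C'$ and let $\iota$ be its bottleneck, so that $v_j = v'_j = \max\{c_{\iota j}, r_\iota\} = c_{\iota j} + w_{\iota j}$ and $c_{\iota j} \le p_j$, $r_\iota \le p_j$. The goal in every case is to exhibit some $i \in F'$ that can be ``charged'' to $3(1+\epsilon)^4 v_j$ to cover the connection cost and, when needed, the contribution $w_{ij}$ toward opening $i$.

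First I would handle \textbf{Case 1}, $\iota \in F'$. Here Claim \ref{cl:apdx-p-1} immediately gives $w_{i'j} = 0$ for all $i' \in F' \setminus \{\iota\}$. If $c_{\iota j} < r_\iota$ then $w_{\iota j} > 0$ and $v_j = c_{\iota j} + w_{\iota j}$, so the client already pays for its connection to $\iota$ and its share of $f_\iota$ exactly. If $c_{\iota j} \ge r_\iota$ then $w_{\iota j} = 0$ and $v_j = c_{\iota j}$, so $v_j$ alone pays the connection; no opening contribution is owed. \textbf{Case 2} is $\iota \notin F'$ with $w_{ij} = 0$ for all $i \in F'$. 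Let $i'$ be the facility in $F'$ that caused $\iota$ to close; Claim \ref{cl:apdx-p-2} yields $c_{i'j} \le 3(1+\epsilon)^4 v_j$ directly, which suffices since no opening cost is to be paid.

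The interesting cases are the remaining two. In \textbf{Case 3}, $\iota \notin F'$ and some $i' \in F' $ has $w_{i'j} > 0$ while $i'$ did not close $\iota$; let $i \in F'$ denote the facility that did. From $w_{i'j} > 0$ and Claim \ref{cl:apdx-p-4} we get $\tilde r_{i'} \ge (1+\epsilon)^{-2}(c_{i'j} + w_{i'j})$. Since $i, i' \in F'$ are distinct, Claim \ref{cl:apdx-p-cl9.2} gives $c_{ii'} > 2(1+\epsilon)^2 \max\{\tilde r_i, \tilde r_{i'}\} \ge 2\tilde r_{i'} \cdot (1+\epsilon)^2 \ge 2(c_{i'j} + w_{i'j})$. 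Using triangle inequality $c_{ij} + c_{i'j} \ge c_{ii'}$ and Claim \ref{cl:apdx-p-2} (which bounds $c_{ij} \le 3(1+\epsilon)^4 v_j$), I would conclude $c_{i'j} + w_{i'j} \le c_{i'j} + 2w_{i'j} < c_{ij} \le 3(1+\epsilon)^4 v_j$, charging everything to $i'$. In \textbf{Case 4}, $\iota \notin F'$ and the facility $i' \in F'$ that did close $\iota$ satisfies $w_{i'j} > 0$. Then Claim \ref{cl:apdx-p-4} gives $\tilde r_{i'} \ge (1+\epsilon)^{-2}(c_{i'j} + w_{i'j})$, and because $i'$ closed $\iota$ via the approximate MIS, Lemma \ref{lem:apdx-p-lem8} gives $\tilde r_{i'} \le \tilde r_\iota$. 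Finally Claim \ref{cl:apdx-p-3} supplies $\tilde r_\iota \le (1+\epsilon)^2 v_j$. Chaining these three inequalities yields $c_{i'j} + w_{i'j} \le (1+\epsilon)^4 v_j \le 3(1+\epsilon)^4 v_j$.

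The main obstacle is \emph{bookkeeping the $(1+\epsilon)$ factors consistently}. Unlike the exact sequential proof, here the greedy phase only guarantees a distance separation of $2(1+\epsilon)^2 \max\{\tilde r_u, \tilde r_v\}$ between facilities in $F'$ (Claim \ref{cl:apdx-p-cl9.2}) and the radii are approximate (Claim \ref{cl:apdx-p-4}). Each time we translate between an exact radius $r_i$ and its estimate $\tilde r_i$, or between $\tilde r_{i'}$ and $c_{i'j} + w_{i'j}$, we pick up a factor of $(1+\epsilon)^2$; the proof has to be arranged so that these factors compose to at most $(1+\epsilon)^4$ in Cases 3 and 4 (and so that the triangle-inequality step in Case 3 still works after absorbing the $(1+\epsilon)^2$ slack into the MIS separation bound). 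Once this arithmetic is done carefully, the four cases combine to give the desired bound $c_{ij} + w_{ij} \le 3(1+\epsilon)^4 v_j$.
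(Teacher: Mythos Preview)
Your proposal is correct and follows essentially the same four-case analysis as the paper's proof, invoking the same supporting claims (\ref{cl:apdx-p-1}--\ref{cl:apdx-p-4}, Claim \ref{cl:apdx-p-cl9.2}, Lemma \ref{lem:apdx-p-lem8}) in the same places. The only cosmetic difference is that in Case~3 you route through Claim~\ref{cl:apdx-p-4} to bound $\tilde r_{i'} \ge (1+\epsilon)^{-2}(c_{i'j}+w_{i'j})$, whereas the paper writes $c_{i'j}+w_{i'j}=r_{i'}$ directly and then passes to $\tilde r_{i'}$; the resulting chain of inequalities is identical.
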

\begin{proof}
  In all the cases, we assume that $\max\LR{r_{\i}, c_{\ij}} \le p_j$ and therefore $j \in C'$. This also implies $v_j = v'_j = \max \LR{c_{\ij}, \min\LR{p_j, r_{\i}}} = \max \LR{c_{\ij}, r_{\i}}$. Therefore, we can just disregard the penalties in the analysis.
  \begin{itemize}
      
    \item \textbf{Case 1.} $\i \in F'$
      
      Connect $j$ to $\i$. From Claim \ref{cl:apdx-p-1}, we know that $w_{i'j} = 0$ for all other $i' \in F'$.
      
      \begin{enumerate}
      \item If $c_{\ij} < r_{\i}$, then $v_j = c_{\ij} + w_{\ij}$. In this case, $v_j$ pays for connecting $j$ to $\i$ and also for $j$'s contribution to opening cost of $\i$ which is exactly $w_{\ij}$.
      
      \item Otherwise $c_{\ij} \ge r_{\i}$, then $w_{\ij} = 0$, which is $j$'s contribution towards $\i$. We have $v_j = c_{\ij}$ and therefore $v_j$ pays for connecting $j$ to $\i$.
      \end{enumerate}
      
    \item \textbf{Case 2.} $\i \notin F'$ and $w_{ij} = 0$ for all $i \in F'$.
      
      Let $i'$ be the facility that caused $\i$ to close. Connect $j$ to $i'$. From Claim \ref{cl:apdx-p-2}, we have $c_{i' j} \le 3(1+\epsilon)^4 v_{j}$. Therefore, $3(1+\epsilon)^4 v_j$ pays for the connection to $i'$.
      
    \item \textbf{Case 3.} $\i \notin F'$, there is some $i' \in F'$ with $w_{i'j} > 0$, but $i'$ did not cause $\i$ to close.
      
      We connect $j$ to $i'$. By assumption $w_{i'j} = r_{i'} - c_{i'j} > 0$. Furthermore, let $i$ be the facility that caused $\i$ to close. By Claim \ref{cl:apdx-p-2} we have $c_{ij} \le 3(1+\epsilon)^4 v_j$.
      
      We have $c_{i'j} + w_{i'j} = r_{i'}$. Also, $c_{i i'} > 2(1+\epsilon)^2 \tilde{r}_{i'}$, by Claim \ref{cl:apdx-p-cl9.2} since $i', i$ both were added to $F'$.
      
      Now, $2(c_{i' j} + w_{i'j}) = 2r_{i'} \le 2(1+\epsilon)^2 \tilde{r}_{i'} < c_{i i'} \le c_{i' j} + c_{i j}$.

      Subtracting $c_{i'j}$ from both sides, we get $c_{i' j} + 2w_{i'j} \le c_{i j} \le 3(1+\epsilon)^4 v_j$. Therefore, $3(1+\epsilon)^4 v_j$ pays for the connection cost of $j $ to $i'$ and also for (twice) $j$'s contribution towards opening $i'$.
      
    \item \textbf{Case 4.} $\i \notin F'$ and $i' \in F'$ with $w_{i'j} > 0$ caused $\i$ to close.
      
      We connect $j$ to $i'$. From Claim \ref{cl:apdx-p-3}, we have that $(1+\epsilon)^2 v_j \ge r_\i$.
      
      Since $i'$ caused $\i$ to close, $\tilde{r}_\i \ge \tilde{r}_{i'} \ge (1+\epsilon)^{-2} r_{i'} \ge (1+\epsilon)^{-2} c_{i'j} + w_{i'j}$.

      Therefore, $c_{i'j} + w_{i'j} \le (1+\epsilon)^2\tilde{r}_{\i} \le (1+\epsilon)^4 r_{\i} \le (1+\epsilon)^4v_j$. That is, $(1+\epsilon)^4v_j$ pays for the connection cost of $j$ to $i'$, as well as its contribution towards opening of $i'$.
    \end{itemize}
\end{proof}

Thus, $(v, w)$ is a feasible dual solution. We use the above analysis to conclude with the following theorem.

\begin{theorem}
  $$\cost(C', F') \le 3(1+\epsilon)^4 \cdot \cost(C^*, F^*).$$
\end{theorem}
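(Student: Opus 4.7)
The plan is to mirror the primal-dual argument from the sequential case (Algorithm \ref{alg:penaltiesSeq}), but with the approximation factor degraded from $3$ to $3(1+\epsilon)^4$ to account for the use of approximate radii $\tilde r_i$ and approximate MIS computations. Since the two preceding lemmas already establish that $(v,w)$ is a feasible solution to the dual LP of Figure \ref{fig:penalty-primal-dual} and that every $j \in C'$ can be assigned to some open facility $i \in F'$ with $c_{ij} + w_{ij} \le 3(1+\epsilon)^4 v_j$, the remaining work is just to package these pieces into the claimed cost bound and invoke weak LP duality.

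First I would rewrite $\cost(C',F')$ as the sum of connection costs of clients in $C'$, opening costs of facilities in $F'$, and penalty costs of outliers in $C \setminus C'$. Using the fact that for each $j \in C'$ the quantity $w_{ij}$ is precisely $j$'s contribution toward the opening cost of the facility it pays for (with $w_{ij}=0$ for all other open facilities by Claim \ref{cl:apdx-p-1}), the opening plus connection terms can be rewritten in the form $\sum_{j \in C'}(c_{ij} + w_{ij})$ where $i$ is the single open facility $j$ is charged to. Applying the approximation lemma bounds this sum by $3(1+\epsilon)^4 \sum_{j \in C'} v_j$.

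Next I would handle the outliers: by construction, $v_j = p_j$ for each $j \in C \setminus C'$, so the penalty contribution is exactly $\sum_{j \in C \setminus C'} v_j$. Combining the two parts gives
\begin{align*}
\cost(C',F') &\le 3(1+\epsilon)^4 \sum_{j \in C'} v_j + \sum_{j \in C\setminus C'} v_j \\
&\le 3(1+\epsilon)^4 \sum_{j \in C} v_j.
\end{align*}
By the feasibility lemma, $(v,w)$ is a feasible dual solution, so by weak LP duality $\sum_{j \in C} v_j$ is a lower bound on the LP optimum, which in turn lower bounds the integral optimum $\cost(C^*, F^*)$. The theorem follows.

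The main obstacle, already resolved in the preceding lemmas, was showing that each $j \in C'$ contributes to the opening cost of at most one open facility so that double-counting does not sneak an extra factor into the bound; the present theorem is essentially bookkeeping once that structural property and the per-client factor-$3(1+\epsilon)^4$ bound are in hand.
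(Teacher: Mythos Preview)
Your proposal is correct and follows essentially the same approach as the paper: decompose $\cost(C',F')$ into connection, opening, and penalty terms, use the preceding lemma to bound the first two by $3(1+\epsilon)^4\sum_{j\in C'}v_j$ via the single-contribution property, use $v_j=p_j$ for outliers, and finish with weak duality. The paper's own write-up is nearly identical, differing only in that it names the per-client opening contribution $s(j)$ rather than $w_{ij}$.
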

\begin{proof}
  We show $\cost(C', F') \le 3(1+\epsilon)^4 \lr{\sum_{j \in C} v_j}$, which is sufficient since $(v, w)$ is a feasible dual solution, and cost of any feasible dual solution is a lower bound on the cost of an integral optimal solution.

  As we have argued previously, for any $j \in C \setminus C'$, we have $p_j = v_j$, and that for any $j \in C'$, we have $d(j, F') + s(j)$, where $s(j) \ge 0$ is the contribution of $j$ towards opening a single facility in $F'$. We have also argued that any $j \in C'$ contributes $s(j)$ for at most one open facility from $F'$. It follows that,

  \begin{align*}
    \cost(C', F') &= \sum_{i \in F'} f_i + \sum_{j \in C'} d(j, F') + \sum_{j \in C \setminus C'} p_j
    \\&= \lr{\sum_{j \in C'} s(j) + d(j, F')} + \sum_{j \in C \setminus C'} p_j
    \\&\le 3(1+\epsilon)^4 \sum_{j \in C'} v_j + \sum_{j \in C \setminus C'} v_j
    \\&\le 3(1+\epsilon)^4 \lr{\sum_{j \in C} v_j}
  \end{align*}
\end{proof}

\subsection{The Congested Clique and MPC Algorithms}
As argued in Section \ref{subsec:robust-fl-cc-mpc}, the Congested Clique model is essentially the same as the \kmm, where $k= n$. Plugging this into Theorem \ref{thm:kmm-PenaltyGuarantee}, we get the following theorem.

\begin{theorem} \label{thm:cc-PenaltyGuarantee}
	In \(O(\poly \log n)\) rounds of Congested Clique, whp, we can find a factor \(3+ O(\epsilon)\) approximate solution to the \textsc{FacLoc} with Penalties problem for any constant \(\epsilon > 0\).
\end{theorem}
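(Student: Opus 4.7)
The plan is to reduce the Congested Clique case to the $k$-machine result (Theorem~\ref{thm:kmm-PenaltyGuarantee}) that has already been established, by exploiting the equivalence between the two models when $k=n$. The key observation is that a Congested Clique network with $n$ nodes can simulate, with no overhead, any $k$-machine protocol with $k=n$: each Congested Clique node holds exactly one vertex of the input, whereas a $k$-machine machine (with $k=n$) holds in expectation $O(1)$ vertices (and $O(\log n)$ vertices whp under the random vertex partition). A node in the Congested Clique can send and receive $O(n \log n)$ bits per round by sending a distinct $O(\log n)$-bit message to each of the other $n-1$ nodes, which matches the per-round bandwidth budget of a machine in the $k=n$ version of the $k$-machine model up to polylogarithmic factors. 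Thus, any algorithm that runs in $T(n,k)$ rounds in the $k$-machine model can be executed in $O(T(n,n))$ rounds in the Congested Clique.

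Next, I would apply this simulation to Algorithm~\ref{alg:penaltieskmm}. Substituting $k=n$ into the guarantee of Theorem~\ref{thm:kmm-PenaltyGuarantee}, the running time becomes $\tilde{O}(\operatorname{poly}(1/\epsilon) \cdot n/n) = \tilde{O}(\operatorname{poly}(1/\epsilon))$, which is $O(\operatorname{poly} \log n)$ for any constant $\epsilon>0$. Because the simulation is exact at the level of messages delivered in each round, the approximation ratio is preserved: we obtain a $3+O(\epsilon)$-approximate solution to \textsc{FacLoc} with Penalties with high probability, matching the guarantee proved via the primal-dual analysis of the $k$-machine version.

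The one subtlety worth checking is that each individual primitive invoked by Algorithm~\ref{alg:penaltieskmm} in the $k$-machine model (approximate SSSP/MSSP from Lemmas~\ref{lem:SSSP} and~\ref{lem:MSSP}, the modified neighborhood-size estimation in Algorithm~\ref{alg:ModifiedCohenEstimates}, and the $(\epsilon,d)$-approximate MIS routine of Lemma~\ref{lem:DistdMIS}) really does run within the Congested Clique bandwidth when $k=n$. Each of these uses, in any single round, only $\tilde{O}(n)$ communication per machine, which at $k=n$ corresponds to $\tilde{O}(\log n)$ words per pairwise link, well within the $O(\log n)$-bit capacity of the Congested Clique after a constant blowup absorbed by Lenzen's routing protocol. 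Consequently, the whole outer loop of Algorithm~\ref{alg:penaltieskmm}, whose round complexity is dominated by $O(\log n)$ rounds times a $\operatorname{polylog}(n)/\operatorname{poly}(\epsilon)$ cost per iteration, fits in $O(\operatorname{poly}\log n)$ Congested Clique rounds. Combining the simulation with this bookkeeping yields the claimed theorem; the main (very mild) obstacle is simply confirming that the routing accounting at $k=n$ does not violate the per-link bandwidth, which follows from standard Lenzen-routing arguments.
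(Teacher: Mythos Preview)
Your proposal is correct and follows essentially the same approach as the paper: the paper's argument is simply that the Congested Clique model is the $k$-machine model with $k=n$, so one plugs $k=n$ into Theorem~\ref{thm:kmm-PenaltyGuarantee} (whose underlying primal-dual analysis yields the $3(1+\epsilon)^4$ factor) to obtain the $O(\poly\log n)$-round, $3+O(\epsilon)$-approximation claim. Your additional bookkeeping on per-primitive bandwidth is more detailed than what the paper provides but is consistent with its one-line justification.
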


In order to compute the radii of the facilities, the machines need to know the penalties of all the clients which can be done in $O(1)$ rounds of MPC since each machine needs to receive \(n\) words corresponding to the penalties of each client. The rest of the MPC algorithm implementation is similar to the corresponding implementation for the Robust \textsc{FacLoc} problem (Section \ref{subsec:robust-fl-cc-mpc}) so we don't repeat them again. The only difference is that we are trying to implement Algorithm \ref{alg:penaltieskmm} instead. We summarize this result in the following theorem.

\begin{theorem} 
	In \(O(\poly \log n)\) rounds of MPC, whp, we can find a factor \(3 + O(\epsilon)\) approximate solution to the \textsc{FacLoc} with Penalties problem for any constant \(\epsilon > 0\).
\end{theorem}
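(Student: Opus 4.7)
The plan is to follow the same template already used for Robust \textsc{FacLoc} in Section \ref{subsec:robust-fl-cc-mpc}: first observe that the $k$-machine algorithm for \textsc{FacLoc} with Penalties (Algorithm \ref{alg:penaltieskmm}) reduces to $\poly\log n$ calls to a $(1+\epsilon)$-approximate SSSP/MSSP subroutine plus local bookkeeping, and then import an efficient MPC implementation of that subroutine. The algorithm itself is unchanged from the $k$-machine setting, so the $(3+O(\epsilon))$ approximation ratio is inherited from the analysis underlying Theorem \ref{thm:cc-PenaltyGuarantee}; the only task is to control the round complexity under the MPC memory constraint of $\tilde{O}(n)$ words per machine.

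For the shortest-path primitive I would invoke the Broadcast Congested Clique (BCC) algorithm of Becker et al.~\cite{BeckerKKLdisc17}, which runs in $O(\poly\log n / \poly(\epsilon))$ rounds and needs only $\tilde{O}(n)$ memory per node, together with the simulation theorem (Theorem 3.1 of \cite{BehezhadDHARXIV18}) that converts any $T$-round BCC algorithm with $\tilde{O}(n)$ local memory into an $O(T)$-round MPC algorithm with $\tilde{O}(n)$ memory per machine. This is exactly the mechanism already used to derive Theorem \ref{thm:mpc-OutlierGuarantee} from the Congested Clique algorithm, so the only work is to confirm the same reduction applies here.

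The one ingredient specific to the penalty version is the modified radius computation (Algorithm \ref{alg:RC2} using Algorithm \ref{alg:ModifiedCohenEstimates}), which partitions clients into $O(\log n)$ penalty classes $P_b$ and runs a neighborhood-size estimator once per class. Each machine must know each client's penalty class in order to populate the correct $P_b$. To handle this, I would first broadcast every client penalty to every machine: since the penalties amount to $O(n)$ words in total and every machine has $\tilde{O}(n)$ local memory, this aggregate broadcast can be completed in $O(1)$ MPC rounds via the standard all-to-all primitive. After this preprocessing step, each of the $O(\log n)$ estimator instances is just another call to the approximate shortest-path primitive above, and the radius of each facility is then computed locally using Lemma \ref{lem:penalty-approx-r}.

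The main obstacle I anticipate is the bookkeeping check that per-machine communication stays within the $\tilde{O}(n)$ budget when the outer algorithm invokes the SSSP primitive across $O(\log n)$ penalty classes and $O(\log n)$ radius categories in the greedy phase. However, since each individual SSSP/MSSP call already respects the $\tilde{O}(n)$ budget via the BCC-to-MPC simulation, and the different classes can simply be processed in sequence (or batched when bandwidth permits), the overall round complexity only picks up a $\poly\log n$ factor, which is absorbed by the final $O(\poly\log n)$ bound. Putting everything together yields an MPC algorithm that matches the Congested Clique guarantee of Theorem \ref{thm:cc-PenaltyGuarantee}, establishing the claimed $(3+O(\epsilon))$-approximation in $O(\poly\log n)$ rounds w.h.p.
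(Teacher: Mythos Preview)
Your proposal is correct and follows essentially the same approach as the paper: broadcast the $n$ client penalties in $O(1)$ MPC rounds, then observe that Algorithm~\ref{alg:penaltieskmm} reduces to $\poly\log n$ approximate SSSP/MSSP calls, each of which is handled via the Becker et al.\ BCC algorithm together with the BCC-to-MPC simulation of \cite{BehezhadDHARXIV18}, exactly as in Section~\ref{subsec:robust-fl-cc-mpc}. The paper's own proof is terser---it simply points back to the Robust \textsc{FacLoc} MPC implementation---but your more explicit accounting of the penalty-class loop and the per-call bandwidth budget is sound and matches the intended argument.
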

   \section{Facility Location with Penalties: Explicit Metric}\label{sec:penalties-explicit}
For the \kmm implementation, the implicit metric algorithm from the previous section also provides a similar guarantee for the explicit metric setting and hence we do not discuss it separately in this section.

\subsection{The Congested Clique Algorithm}

In this section, we briefly sketch how to implement the Facility Location with Penalties algorithm from Section \ref{sec:seq-fl-wp} in $O(\lll )$ rounds of the Congested Clique, in the explicit metric setting. Recall that in this setting, each vertex (i.e.) $v \in V$ knows $d(u, v)$ for all vertices $u \in V$. 

At the beginning of the algorithm, each client $j \in C$ broadcasts its penalty $p_j$ -- this takes $O(1)$ rounds. Once each facility $i \in F$ knows penalty $p_j$ of each client $j \in C$, it can locally compute $r_i$ satisfying $f_i = \sum_{j \in C} \max\{ \min \{r_i - c_{ij}, p_j - c_{ij} \}, 0 \}$. As argued in Section \ref{sec:seq-fl-wp}, it is without loss of generality to assume that for any facility $i \in F$ an $r_i$ satisfying this equation exists. This completes the radius computation phase.

The details of the greedy phase are similar to that from the Facility Location algorithms of \cite{HegemanPDC15, HegemanPSarxiv14}, (see also Section \ref{sec:robust-fl-explicit-cc}), where the computation of this phase is reduced to $3$-ruling set computation. As argued in Section \ref{sec:robust-fl-explicit-cc}, this can be done in $O(\lll)$ rounds. In fact, for the Facility Location with Penalties problem, this is simpler since each vertex participates in at most one ruling set computation, as opposed to $O(\log n)$ different ruling sets as in the Robust Facility Location. It can be shown in a similar way that this results in an $O(1)$ approximation. We summarize our result in the following theorem.

\begin{theorem} \label{thm:flwp-explicit-cc}
	There is an $O(1)$-approximation algorithm in the Congested Clique model for \textsc{FacLoc} with Penalties, running in $O(\lll)$ rounds whp.
\end{theorem}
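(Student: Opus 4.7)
The plan is to follow the structure of the sequential Algorithm \ref{alg:penaltiesSeq} and implement each of its three phases—radius computation, greedy selection, and outlier determination—in the Congested Clique under the explicit metric assumption. For the radius computation phase, each facility $i$ already knows $c_{ij}$ for every $j \in C$, but not the penalties $p_j$. A single round of broadcasts fixes this: every client $j$ broadcasts $p_j$, which costs $O(1)$ rounds since each node sends one $O(\log n)$-bit message per neighbor. After that, every facility $i$ can locally solve $f_i = \sum_{j \in C} \max\{\min\{r_i - c_{ij}, p_j - c_{ij}\}, 0\}$ for $r_i$ (e.g.\ by binary search), using the assumption, justified in Section \ref{sec:seq-fl-wp}, that such an $r_i$ exists.

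Next I would implement the greedy phase by reducing it to a single ruling-set computation as in \cite{HegemanPDC15, HegemanPSarxiv14} and Section \ref{sec:robust-fl-explicit-cc}. Discretize the radii into geometric buckets and, within each bucket, form a conflict graph where $i, i'$ are adjacent iff $c_{ii'} \le 2 \max\{r_i, r_{i'}\}$; a $2$-ruling set in this conflict graph yields an approximately-maximal collection of mutually-far facilities, which suffices for an $O(1)$ approximation via the primal-dual analysis of Section \ref{sec:seq-fl-wp} with a modest constant blowup. Crucially, unlike Robust \textsc{FacLoc}—where the outer loop over guesses of $f_{i_e}$ forces $O(\log n)$ parallel ruling-set instances and requires the delicate bandwidth accounting of Theorem \ref{theorem:parallelExecution}—here there is just one ruling-set instance, so I can invoke the $O(\lll)$-round $2$-ruling-set algorithm of \cite{HegemanPSarxiv14} as a black box.

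For outlier determination, each client $j$ needs its distance to the nearest open facility to compare against $p_j$. Once the identities of the open facilities are broadcast in $O(1)$ rounds (the open facilities themselves broadcast), this is purely local since distances are known explicitly. Summing, the total round complexity is dominated by the single ruling-set call, giving $O(\lll)$ rounds whp, and the approximation factor is $O(1)$ by combining the sequential $3$-approximation with the constant-factor loss from the ruling-set relaxation.

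The main obstacle I expect is verifying that the ruling-set relaxation does not break the four-case analysis of Claims \ref{cl:p-1}--\ref{cl:p-3} and the bound $c_{ij} + w_{ij} \le 3 v_j$. Concretely, open facilities may now be within some $c \cdot \max\{r_i, r_{i'}\}$ of each other for a constant $c > 2$ rather than strictly more than $2 \max\{r_i, r_{i'}\}$, so Case~1 (uniqueness of the bottleneck's contribution to opening costs) and Cases~2--4 (the triangle-inequality bound through the facility that caused $\iota$ to close) must be rechecked with $c$ in place of $2$. This is the same kind of constant-threading that the implicit-metric analysis already does to absorb the $(1+\epsilon)$ slack from approximate distances, so I would mirror that argument to conclude the $O(1)$ approximation claimed in the theorem.
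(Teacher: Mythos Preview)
Your proposal is correct and essentially mirrors the paper's own argument: broadcast penalties in $O(1)$ rounds so each facility can compute $r_i$ locally, reduce the greedy phase to a single ruling-set computation via \cite{HegemanPDC15,HegemanPSarxiv14} (the paper explicitly notes, as you do, that this is simpler than Robust \textsc{FacLoc} since there is no outer loop over guesses), and conclude the $O(1)$ approximation by rethreading the constants through the primal-dual analysis of Section~\ref{sec:seq-fl-wp}. Your added detail on the outlier-determination phase and the constant-threading obstacle is consistent with what the paper leaves implicit.
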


\subsection{The MPC Algorithm}
Since each vertex has explicit knowledge of $n$ distances, the overall memory is $O(n^2)$. Therefore, we can simulate the Congested Clique algorithm from the preceding section using Theorem 3.1 of \cite{BehezhadDHARXIV18} in the MPC model. We summarize our result in the following theorem.
\begin{theorem} 
	There is an $O(1)$-approximation algorithm for \textsc{FacLoc} with Penalties that can be implemented in the MPC model with $O(n)$ words per machine in $O(\lll)$ rounds whp.
\end{theorem}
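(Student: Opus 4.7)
The plan is to lift the Congested Clique algorithm of Theorem \ref{thm:flwp-explicit-cc} to the MPC model essentially for free, using the Congested-Clique-to-MPC simulation theorem of \cite{BehezhadDHARXIV18} cited in the preceding section for the Robust \textsc{FacLoc} case. Since the analog for Robust \textsc{FacLoc} was already worked out in Section \ref{sec:robust-fl-explicit}, the bulk of the work is verifying that the same recipe applies here, and that the penalty-specific step (distributing the penalties before radius computation) does not cost more than $O(1)$ rounds.

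First I would account for the memory budget. In the explicit metric setting each vertex holds $n$ distances, so the total input is $\Theta(n^2)$ words; with $\tilde{O}(n)$ memory per machine this requires $\tilde{O}(n)$ machines, which fits the MPC regime used throughout the paper. Next I would recall from the proof of Theorem \ref{thm:flwp-explicit-cc} that the Congested Clique algorithm consists of two stages: (i) a radius computation phase in which each client broadcasts $p_j$ in $O(1)$ rounds so that every facility can locally compute $r_i$ from the formula $f_i = \sum_{j \in C} \max\{\min\{r_i - c_{ij}, p_j - c_{ij}\}, 0\}$, and (ii) a greedy phase reduced to a single $3$-ruling set computation that runs in $O(\lll)$ rounds whp via the algorithm of \cite{HegemanPSarxiv14}.

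Then I would invoke Theorem 3.1 of \cite{BehezhadDHARXIV18} as cited earlier in the paper: any $T$-round BCC algorithm using $\tilde{O}(n)$ local memory at each node can be simulated by an MPC algorithm running in $O(T)$ rounds with $\tilde{O}(n)$ memory per machine. The penalty-broadcast step and the $3$-ruling set computation both fit within this envelope, and in fact this case is simpler than Robust \textsc{FacLoc}: there is no outer loop over $O(\log n)$ facility-cost guesses, so each vertex participates in just one ruling set instance, and the parallel-execution analysis of Theorem \ref{theorem:parallelExecution} is not needed.

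The main obstacle I anticipate is ensuring that the per-node bandwidth and memory footprint of the $3$-ruling set subroutine from \cite{HegemanPSarxiv14} (Lazy/Speedy Degree Decomposition, Vertex Selection, High Degree Vertex Removal, and MIS in Low-Degree Graphs) stays within $\tilde{O}(n)$ words per round so that the BCC-to-MPC simulation applies unconditionally. This was already argued implicitly in the Robust \textsc{FacLoc} MPC proof, where a single copy of each phase was shown to fit within $O(n)$ messages per vertex per round; inheriting those bounds verbatim here (with the parameter $c = 1$ rather than $c = O(\log n)$) yields the claimed $O(\lll)$-round MPC algorithm, completing the proof.
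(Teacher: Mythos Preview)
Your proposal is correct and follows exactly the paper's approach: invoke Theorem~3.1 of \cite{BehezhadDHARXIV18} to simulate the Congested Clique algorithm of Theorem~\ref{thm:flwp-explicit-cc} in the MPC model with $\tilde{O}(n)$ memory per machine. The paper's own proof is in fact a two-sentence sketch doing precisely this; your version is considerably more detailed. One small terminological slip: you describe Theorem~3.1 of \cite{BehezhadDHARXIV18} as a BCC-to-MPC simulation, but it is a general Congested-Clique-to-MPC simulation (the BCC variant stated in Section~\ref{subsec:robust-fl-cc-mpc} is a corollary); this matters because the ruling-set subroutine uses Lenzen's routing and is not a BCC algorithm, so you should invoke the general form directly, as the paper does here and in the Robust \textsc{FacLoc} explicit-metric MPC proof.
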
 }
\section{Conclusion and Open Questions}

This paper presents fast $O(1)$-factor distributed algorithms for
Facility Location problems that are robust to outliers. These algorithms
run in the Congested Clique model and two models of large-scale computation,
namely, the MPC model and the $k$-machine model. As far as we know these are the
the first such algorithms for these important clustering problems.

Fundamental questions regarding the optimality of our results remain open.
In the explicit metric setting, we present algorithms in the Congested
Clique model and the MPC model that run in $O(\lll)$ rounds.
While these may seem extremely fast, it is not clear that they are
optimal. Via the results of
Drucker et al.~\cite{drucker2012task}, it seems like showing a non-trivial
lower bound
in the Congested Clique model is out of the question for now.
So a tangible question one can ask is whether we can further improve
the running time of the 2-ruling set algorithm in the Congested Clique
model, possibly solving it in $O(\log^* n)$ or even $O(1)$ rounds.
This would immediately imply a corresponding improvement in the
running time of our Congested Clique and MPC model algorithms in
the explicit metric setting.

All the $k$-machine algorithms we present in the paper run in
$\tilde{O}(n/k)$ rounds.  It is unclear if this is optimal.  In
previous work \cite{BandyapadhyayArxiv2017}, we showed a lower bound
of $\tilde{\Omega}(n/k)$ in the implicit metric setting, assuming that
in the output to facility location problems every open facility needed
to know all clients that connect to it. The lower bound heavily relies
on the implicit metric and the output requirement
assumptions. However, even if we relax both of these assumptions,
i.e., we work in the explicit metric setting and only ask that every
client know the facility that will serve it, we still seem to be
unable to get over the $\tilde{O}(n/k)$ barrier. 
\bibliography{references}
\ifthenelse{\boolean{short}}{
}
{}

\end{document}